\newtheorem{theorem}{Result}
\newtheorem{lemma}{Lemma}
\newcommand{\choi}[1]{\mathcal{J}^{\mathcal{#1}}}
\newcommand{\perm}[1]{\mathbb{P}_{#1}}
\newcommand{\Holevo}[1]{\chi{#1}} 
\newcommand{\freeS}[0]{\mathfrak{O}^{\rm{st}}} 
\newcommand{\staticRT}[0]{\mathcal{R}^{\rm{st}}}
\newcommand{\freeD}[0]{\mathfrak{O}^{\rm{dyn}}}
\newcommand{\dynamicRT}[0]{\mathcal{R}^{\rm{dyn}}}
\newcommand{\weylFreeD}[0]{\mathfrak{O}^{\rm{dyn}}_{\mathcal{W}}}
\newcommand{\depola}[1]{\mathcal{D}^{\rm{pol}}_{#1}} 
\newcommand{\dephase}[1]{\mathcal{D}^{\rm{ph}}_{#1}}
\begin{document}

\preprint{APS/123-QED}

\title{The Dynamical Resource Theory of Informational Non-Equilibrium Preservability}

\author{Benjamin Stratton}
\email{ben.stratton@bristol.ac.uk}
\affiliation{Quantum Engineering Centre for Doctoral Training, H. H. Wills Physics Laboratory and Department of Electrical \& Electronic Engineering, University of Bristol, BS8 1FD, UK}

\affiliation{H.H. Wills Physics Laboratory, University of Bristol, Tyndall Avenue, Bristol, BS8 1TL, UK}

\author{Chung-Yun Hsieh}

\affiliation{H.H. Wills Physics Laboratory, University of Bristol,
Tyndall Avenue, Bristol, BS8 1TL, UK}

\author{Paul Skrzypczyk}

\affiliation{H.H. Wills Physics Laboratory, University of Bristol,
Tyndall Avenue, Bristol, BS8 1TL, UK}

\affiliation{CIFAR Azrieli Global Scholars Program, CIFAR, Toronto Canada}

\date{\today}

\begin{abstract}
Information is instrumental in our understanding of thermodynamics. Their interplay has been studied through completely degenerate Hamiltonians whereby the informational contributions to thermodynamic transformations can be isolated. In this setting, all states other then the maximally mixed state are considered to be in informational non-equilibrium. An important yet still open question is: how to characterise the ability of quantum dynamics to maintain informational non-equilibrium? Here, the dynamical resource theory of informational non-equilibrium preservability is introduced to begin providing an answer to this question. A characterisation of the allowed operations is given for qubit channels and the n dimensional Weyl-covariant channels - a physically relevant subset of the general channels. An operational interpretation of a state discrimination game with Bell state measurements is given. Finally, an explicit link between a channels classical capacity and its ability to maintain informational non-equilibrium is made.
\end{abstract}

\maketitle


\section{Introduction}
Resources are precious. Their value arises from their limitation, incentivising them to be efficiently utilised and maintained. Formally, an object is considered to be a resource if it can be used by some agent to overcome the physical constraints of a system. A resource therefore allows an agent to obtain an otherwise impossible advantage. 

Within the quantum regime, the study of the limitations experienced by such an agent is the central aim of quantum resource theories. They provide a rigorous mathematical framework with which the ability of quantum objects to supply an operational advantage -- subject to some physical constraints -- can be compared \cite{Chitambar_2019}. Thus far, they have proved a fruitful way to approach problems concerning a number of different quantum phenomena, with resource theories of entanglement \cite{Horodecki_2009}, athermality \cite{Goold_2016, Horodecki2013, Lostaglio_2019}, and measurement, \cite{Skrzypczyk_2019} to name a few.

Initially, static resource theories (SRT) were the sole focus. In such SRTs, the primary objects in question are quantum states, and the resource is some property of these states. There is now an effort to expand resource theories beyond states to the dynamical regime \cite{gour2020dynamical, Gour_2020, ji2021convertibility, Hsieh_2020, Saxena2020, liu2019resource, Liu2020, Theurer2019, Rosset2018, Kim_2021}, with the aim to assess and compare the ability of different quantum operations to be resourceful, therefore providing some advantage. Within these dynamical resource theories (DRT), the resourceful objects are quantum operations with, some property of the quantum operation being considered a resource.

One such important property is the ability of operations to preserve the static resource present in the state upon which they act. These \textit{resource preservability theories}, introduced in \cite{Hsieh_2020}, are built upon SRTs and apply structure to their set of allowed operations. In particular, not all allowed operations are equal. Some operations will completely preserve the resource, while others will completely destroy it. How well an allowed operation preserves the static-resource can itself be considered a type of dynamic-resource, and this is the focus of such DRTs. 

Note that whilst intimately connected, the resource considered in the DRT and underlying SRT are fundamentally different and apply to different classes of objects - quantum channels and quantum states, respectively.
  
The desire for resource preservation arises naturally in any resource theory. Specifically, in the context of thermodynamics, it arises due to non-equilibrium states being considered a resource. For a given state, both its energy and the information an agent has about it will determine how resourceful it is \cite{Sparaciari_2020}. The contribution arising from information can be studied in isolation by considering trivial Hamiltonians, where all energy levels are degenerate. In this regime, any thermodynamic transformation that occurs must arise solely from an information theoretic origin, given there can be no change in energy. The maximally mixed state is the thermal state in this picture, and all other states are considered to be in \textit{informational non-equilibrium}. An understanding of this specific case can then be used to infer results about the general case, in which energetic considerations are necessary. 

The study of informational non-equilibrium has been succinctly formalised through the resource theory of informational non-equilibrium \cite{Horodecki_2003_Purity_RT, Gour_2015, Streltsov_2018}. {\color{black}Such SRTs often question the existence of transformations between states without any concern for the details of said transformations. This initially led to our understanding of thermodynamics having little to do with dynamics \cite{Lostaglio_2019}; however, insight has since been gained into the dynamical aspects of thermodynamic resource theories. For example, the largest set of feasible dynamics has been investigated \cite{Faist_2015}, and their consistency with the laws of thermodynamics \cite{Mischa_23} and practically relevant constraints assessed \cite{Lostaglio_2018, perry_2018}; their behaviour in the macroscopic limit has been considered \cite{Faist_2019}, and their analogue in the continuous variable regime studied \cite{Serafini_2020, Narasimhachar_2021}. Despite this, there has been no insight into how to describe the ability of a given dynamics to preserve informational non-equilibrium. Understanding what properties prevent thermalisation will not only enrich our foundational understanding of thermodynamics but also shed further light on the role of thermodynamics in information processing and transmission}. In addition, most quantum dynamics used in practical applications, such as quantum computation \cite{nielsen_chuang_2010} and quantum memories \cite{Heshami_2016}, benefit from this lack of thermalisation, which is a prevalent type of decoherence. An improved understanding of these dynamics could therefore further our control over these systems.

Within this work, we define and characterise the dynamical resource theory of informational non-equilibrium preservability to begin addressing this question. The informational non-equilibrium of a state will henceforth be referred to as the static-resource to distinguish it from the dynamic-resource. 

\section{Framework}
\subsection{The Resource Theory of Informational Non-Equilibrium}
The set of free states of a resource theory are those states that contain no resource. In the SRT of informational non-equilibrium, the only free state is the maximally mixed state, $\mathbb{I} / d_{S}$. Hence, the set of free states is $\{\Upsilon_{S} := \mathbb{I}_{S} / d_{S} \}$, where the subscript labels the subsystem and {\color{black}$d$ is the dimension.} 

{\color{black}All dynamics considered are quantum channels: completely positive trace-preserving linear maps. In a resource theory, dynamics is captured by the allowed operations.} These are resource non-generating physical manipulations that can be applied arbitrarily many times at no cost. Within the resource theory of informational non-equilibrium, these are taken to be the set of \textit{noisy operations}, $\freeS$, given by 
\begin{equation}
   \color{black} \mathcal{D}(\rho_{S}) = \textrm{Tr}_{E'} \Big[ U_{SE} \Big( \rho_{S} \otimes \Upsilon_{E} \Big) U_{SE}^{\dagger} \Big] \in \freeS, \label{noisyOperations}
\end{equation}
where $U_{SE}$ is a unitary operator on a joint system $SE$ \footnote{It will not be made explicit whether a state exists in the system or environment using a subscript S and E as it is mostly obvious from the context. Where it is not, the subscripts will be included.}. {\color{black}Physically, this means coupling the system $S$ with a bath $E$ in informational equilibrium, evolving the joint system under a global unitary, and then discarding some part of the joint system and environment, $E'$. }The resource theory of informational non-equilibrium will be referred to as $\staticRT$.

The existence of a noisy operation between two states $\rho$ and $\sigma$ is equivalent to the existence of a \textit{unital} channel, $\mathcal{N}$, such that $\sigma = \mathcal{N}(\rho)$ (a unital channel leaves the identity invariant)\cite{Gour_2015}. If $\rho$ and $\sigma$ are of equal dimension, this unital channel is a \textit{mixed unitary channel}.

\subsection{The Dynamical Resource Theory of Informational Non-equilibrium Preservability.}
Any SRT induces a DRT of static resource preservability \cite{Hsieh_2020}. The DRT of informational non-equilibrium is built upon $\staticRT$. It aims to apply structure to the set of noisy operations based on the channel's ability to preserve non-equilibrium. Formally, a noisy operation $\mathcal{N} \in \freeS$ is considered resourceful if it can output some non-equilibrium state, $\mathcal{N}(\rho) \neq \Upsilon_{S}$ for some $\rho$. The free objects of this resource theory are those channels that remove all of the resource from states upon which they act. In this case, since the free set of the SRT is a single state, the only free channel in the DRT is the state preparation channel of $\Upsilon_{S}$, $\Lambda(\cdot) := \textrm{Tr}(\cdot)\Upsilon_{S}$.

Allowed operations of a dynamical resource theory are \textit{super-channels} - linear mappings of quantum channels to quantum channels - that can only decrease the resource content of the channel \cite{Chiribella_2008}. Super-channels can be realised in the lab with a pre-processing and post-processing channel connected to a memory system. Using this form and the structure of resource non-generating super-channels presented in \cite{Hsieh_2020}, super-channels in the free set, $\freeD$, have the following form:
\begin{equation}
    \Pi(\mathcal{N}) = \sum_{\kappa} p_{\kappa} \mathcal{E}_{\kappa} \circ \mathcal{N} \circ \mathcal{P}_{\kappa} ~ \in ~ \freeD, \label{freeOperation}
\end{equation}
where shared randomness, described by the random variable $\kappa$ and a probability distribution $p_{\kappa}$, is permitted between the pre-processing and post-processing noisy operations $\mathcal{P}_{\kappa}, \mathcal{E}_{\kappa}$. It can be verified that these super-channels map noisy operations to noisy operations, with $\Lambda$ only being mapped to itself. Further details on the allowed super-channels are given in Supplementary Material A \cite{suppMat}. 

The physical reasoning behind this set of super-channels being allowed is that all parts come from the allowed set of $\staticRT$. Given noisy operations only increase informational non-equilibrium, the output states of $\Pi (\mathcal{N})$ cannot contain any more static-resource than the output states of $\mathcal{N}$. Super-channels in $\freeD$ are therefore dynamic-resource non-increasing, as $\Pi (\mathcal{N})$ cannot preserve more {\color{black}informational non-equilibrium} present in any state upon which it acts than $\mathcal{N}$. The dynamical resource theory of informational non-equilibrium preservability will be referred to as $\dynamicRT$. 

All resource theories aim to quantify how resourceful a given object is. Since allowed operations cannot generate any resource, we can understand this by studying which objects can be converted into each other. In $\staticRT$, this characterisation takes the following form: given two quantum states $\rho$ and $\sigma$, does there exist a noisy operation mapping $\rho \rightarrow \sigma$? Answering this establishes a pre-order on the set of all states based on the existence of a noisy operation between them. This pre-order then allows the state's resource content to be compared. For the SRT, this convertibility question has been answered and is simple, being captured entirely by the mathematical notion of majorisation \cite{Gour_2015}. Answering it for $\dynamicRT$ is the aim of this work. This will establish a pre-order on the set of channels $\freeS$ based on the existence of an allowed super-channel between them. Given these are resource non-increasing, if there exists an allowed super-channel mapping $\mathcal{N} \rightarrow \mathcal{M}$, where $\mathcal{M,N} \in \freeS$, it will be known that $\mathcal{M}(\rho)$ will be closer to $\Upsilon_{S}$ then $\mathcal{N}(\rho)$ for all $\rho$. In other words, $\mathcal{M}$ will not preserve {\color{black}informational non-equilibrium} better than $\mathcal{N}$.  

\subsection{Characterising Allowed Super-channels by a Choi-state
Representation} 
To answer the convertibility question, the Choi-Jamiołkowski isomorphism will be employed. This is a linear mapping between a quantum channel $\mathcal{N}$ and a bipartite quantum state $\choi{N}$, given by 
\begin{equation}
    \choi{N} = (\mathcal{I} \otimes \mathcal{N}) ( \ketbra{\Phi} ), 
\end{equation}
where $\ket{\Phi} := \frac{1}{\sqrt{d}} \sum_{i} \ket{ii}$. Specifically, as proven in Supplementary Material A \cite{suppMat}, if there exists an allowed operation $\Pi \in \freeD$, so that $\mathcal{M} = \Pi(\mathcal{N})$ in the form of Eq.~\eqref{freeOperation}, for two noisy operations $\mathcal{M}$ and $\mathcal{N}$, then the corresponding Choi-states are related as
\begin{equation}
    \begin{split}
        \choi{M} &= \sum_{\kappa} p_{\kappa} (\mathcal{P}_{\kappa} \otimes \mathcal{E}_{\kappa} ) \big( \choi{N} \big). \label{Choi State allowed operations}
    \end{split}
\end{equation} 
{\color{black} In what follows, the dynamics of a quantum system of a constant finite dimension are considered. In this case, how far a state is from informational non-equilibrium is equivalent to asking how pure that state is \cite{Gour_2015}, see the appendix for more details. As mentioned above, under this restriction $\mathfrak{O}^{\textrm{st}}=\mathfrak{U}_{M}$, where $\mathfrak{U}_{M}$ is the set of mixed unitary channels.}

\section{Characterising Allowed Operations for Qubit Systems}
Initially, qubit channels are focused on, such that our resource theory concerns qubit mixed unitary channels.

A characterisation of $\dynamicRT$ looks to answer the following question: \textit{Given two channels $\mathcal{M,N}\in\freeS$, does there exist an allowed operation $\Pi \in \freeD$ such that $\mathcal{M} = \Pi(\mathcal{N})$.} Using the form of allowed operations given in Eq.~(\ref{Choi State allowed operations}), the above question equivalently asks if there exists some convex combination of channels  $\mathcal{P}_{\kappa}, \mathcal{E}_{\kappa} \in \freeS$ that act on the subsystems of $\choi{N}$ respectively to map $\choi{N} \rightarrow \choi{M}$. It is shown through the following result that this depends only upon the eigenvalues of the corresponding Choi-states. In what follows, $\bm{\mu}$ and $\bm{\lambda}$ are vectors of eigenvalues of the Choi states $\choi{M}$ and $\choi{N}$, respectively. 
\begin{theorem}\label{Result1}
For every given $\mathcal{M,N} \in \freeS$, the following statements are equivalent:
\begin{enumerate}
\item $\mathcal{M}=\Pi(\mathcal{N)}$ for some allowed operation $\Pi \in \freeD$.
\item $\bm{\mu} = \mathbb{D} \bm{\lambda}$, where $\mathbb{D} = \sum_{n,m} p_{nm} ~ \sigma_{x}^{n} \otimes \sigma_{x}^{m}$ with $p_{nm}$ the elements of a probability vector and $\sigma_{x}$ the Pauli X.
\end{enumerate}
\end{theorem}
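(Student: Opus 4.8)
The plan is to pass to the Choi picture and prove the equivalence at the level of the spectra, exploiting two special features of qubit noisy operations. First, every noisy operation is both trace preserving and unital, so both marginals of $\choi{N}$ equal $\mathbb{I}/2$; writing the two-qubit state in the Pauli basis kills all local Bloch terms and leaves only the correlation block, which a singular-value decomposition of the correlation matrix brings to diagonal form by a product of single-qubit unitaries. Hence every such $\choi{N}$ is local-unitarily equivalent to a Bell-diagonal (Pauli-channel) Choi state. Since local unitaries are a single-branch instance of Eq.~\eqref{freeOperation} and leave the spectrum unchanged, I would assume from the outset that both $\choi{N}$ and $\choi{M}$ are Bell-diagonal, and identify $\bm{\lambda}$ and $\bm{\mu}$ with the vectors of Bell-state populations, labelled by the Pauli (Weyl) index $\{\mathbb{I},\sigma_x,\sigma_y,\sigma_z\}$.

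With this labelling the central computation is elementary: conjugating the output qubit of a Bell-diagonal Choi state by a Weyl operator sends $\ketbra{\beta_j}$ to $\ketbra{\beta_{cj}}$, i.e.\ it permutes the population vector by multiplication in $\{\mathbb{I},\sigma_x,\sigma_y,\sigma_z\}/(\pm)\cong\mathbb{Z}_2\times\mathbb{Z}_2$, and in the two-bit encoding of the labels these four permutations are exactly the matrices $\mathbb{I}\otimes\mathbb{I}$, $\mathbb{I}\otimes\sigma_x$, $\sigma_x\otimes\mathbb{I}$ and $\sigma_x\otimes\sigma_x$. This immediately yields the implication $2\Rightarrow1$: given $\bm{\mu}=\mathbb{D}\bm{\lambda}$ with $\mathbb{D}=\sum_{nm}p_{nm}\,\sigma_x^{n}\otimes\sigma_x^{m}$, I construct $\Pi$ by applying, with shared randomness and weight $p_{nm}$, the Weyl conjugation realising the permutation $\sigma_x^{n}\otimes\sigma_x^{m}$; this lies in $\freeD$ and produces a Bell-diagonal channel whose population vector is $\mathbb{D}\bm{\lambda}=\bm{\mu}$, namely $\mathcal{M}$.

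For the converse $1\Rightarrow2$ I would start from Eq.~\eqref{Choi State allowed operations} and read off the population transfer matrix $\mathbb{D}_{kj}=\bra{\beta_k}\sum_\kappa p_\kappa(\mathcal{P}_\kappa\otimes\mathcal{E}_\kappa)(\ketbra{\beta_j})\ket{\beta_k}$, so that $\bm{\mu}=\mathbb{D}\bm{\lambda}$. Trace preservation of each branch forces the columns of $\mathbb{D}$ to sum to one and unitality forces the rows to sum to one, so $\mathbb{D}$ is doubly stochastic and in particular $\bm{\mu}\prec\bm{\lambda}$. The remaining task is to promote this doubly stochastic matrix to the rigid Weyl-covariant form $\sum_{nm}p_{nm}\sigma_x^{n}\otimes\sigma_x^{m}$, which is a $\mathbb{Z}_2\times\mathbb{Z}_2$-circulant matrix, i.e.\ an element of the group algebra of the Weyl group. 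The natural strategy is a covariance argument built on the Bell twirl $\tfrac14\sum_a(\sigma_a\otimes\sigma_a)(\cdot)(\sigma_a\otimes\sigma_a)$, which is itself a free super-channel (the correlated-randomness instance $\mathcal{P}_\kappa=\mathcal{E}_\kappa$ of Eq.~\eqref{freeOperation}) and fixes every Bell-diagonal state, so it can be inserted around $\Pi$ without disturbing the Bell-diagonal endpoints; one then tries to confine the symmetrised transfer matrix to the commutant of the regular representation of $\mathbb{Z}_2\times\mathbb{Z}_2$, which for an abelian group is precisely its group algebra.

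I expect the hard part to be exactly this last step. The subtlety is that local unitaries can permute the Bell basis well beyond the Weyl group -- a coordinated single-qubit Clifford pair $U\otimes\overline{U}$ fixes $\ket{\beta_{\mathbb{I}}}$ and realises the full axis group $S_3$ on the remaining three, so together with the Pauli flips one generates the entire symmetric group on the four Bell states. A priori the achievable spectral maps therefore fill out \emph{all} doubly stochastic matrices (by Birkhoff) rather than only the $\mathbb{Z}_2\times\mathbb{Z}_2$-circulant ones, and a naive insertion of the Bell twirl leaves $\mathbb{D}$ unchanged because the twirl acts trivially on populations. The crux of the proof is thus to show that these extra basis permutations are spectrally trivial once $\bm{\lambda}$ and $\bm{\mu}$ are compared in their canonical Weyl-labelled (ordered) form, so that the genuinely resourceful content of any free super-channel is carried by Weyl-conjugation mixtures alone; controlling this interplay between arbitrary local-unitary relabellings and the Weyl-circulant mixing is where the real work lies.
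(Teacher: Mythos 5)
Your reduction to Bell-diagonal Choi states via the correlation-matrix SVD is exactly the paper's Lemma~\ref{lemma1}, and your $2\Rightarrow1$ construction (a shared-randomness mixture of Weyl conjugations with weights $p_{nm}$) matches the paper's argument. The gap is the direction $1\Rightarrow2$: you get as far as ``the Bell-population transfer matrix $\mathbb{D}_{kj}=\sum_\kappa p_\kappa\bra{\Phi_{kl}}(\mathcal{P}_\kappa\otimes\mathcal{E}_\kappa)(\ketbra{\Phi_{nm}})\ket{\Phi_{kl}}$ is doubly stochastic,'' correctly observe that the Bell twirl does nothing to populations, and then explicitly leave open the step that would confine $\mathbb{D}$ to the $\mathbb{Z}_2\times\mathbb{Z}_2$-circulants. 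That step is the entire content of the implication, so the proposal is not a proof. The paper's intended bridge is its Lemma~\ref{lemma2}: expand $U=\sum_\alpha a_\alpha\sigma_\alpha$, $V=\sum_\beta b_\beta\sigma_\beta$, note that each $\sigma_\alpha\otimes\sigma_\beta$ permutes the Bell labels by the translation $(n,m)\mapsto(n,m)+\mathrm{label}(\alpha)+\mathrm{label}(\beta)$, collect terms with the same net translation into amplitudes $Q_\gamma$, and conclude $\abs{\bra{\Phi_{kl}}U\otimes V\ket{\Phi_{nm}}}^2=\abs{Q_\gamma}^2$ for the unique translation $\gamma$ sending $nm$ to $kl$, whence $\mathbb{B}=\sum_\gamma\abs{Q_\gamma}^2\perm{\gamma}$.

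You should be aware, however, that the obstruction you flag is not a technicality you failed to dispatch --- it is a genuine problem, and it sits exactly at the point where Lemma~\ref{lemma2}'s proof is delicate. The phases $\theta_{\alpha\beta}$ picked up when commuting Paulis past $W_{nm}$ depend on $(n,m)$, so the coefficients $Q_\gamma$ are row-dependent and cannot be factored out of the matrix. Your own example makes this concrete: $H\otimes H=\tfrac12(\sigma_x\otimes\sigma_x+\sigma_x\otimes\sigma_z+\sigma_z\otimes\sigma_x+\sigma_z\otimes\sigma_z)$ sends $\ket{\Phi_{00}}\mapsto\ket{\Phi_{00}}$, $\ket{\Phi_{01}}\mapsto\ket{\Phi_{10}}$, $\ket{\Phi_{10}}\mapsto\ket{\Phi_{01}}$, $\ket{\Phi_{11}}\mapsto-\ket{\Phi_{11}}$ (the two terms inducing the identity translation interfere constructively on $\ket{\Phi_{00}}$ and destructively on $\ket{\Phi_{10}}$). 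The resulting product-Bell unistochastic matrix is the transposition of the labels $01$ and $10$ fixing $00$ and $11$; since every $\sigma_x^{n}\otimes\sigma_x^{m}$ with $(n,m)\neq(0,0)$ has all diagonal entries zero, any convex combination $\sum p_{nm}\sigma_x^n\otimes\sigma_x^m$ has all four diagonal entries equal to $p_{00}$, and this transposition (diagonal $1,0,0,1$) is not of that form. Hence your hope that the extra local-unitary permutations are ``spectrally trivial'' is not realisable with a fixed Weyl labelling: mixing the identity branch with the $H\otimes H$-conjugated branch is an allowed operation in $\freeD$ whose spectral action lies outside the circulant set, so statement 1 does not force statement 2 as literally written. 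Before attempting to close your gap, you need to resolve this tension --- either by pinning down the quantifier over orderings of the eigenvalue vectors, or by enlarging the set of permutations in statement 2 to all those induced by local unitaries on the Bell basis (which, as you note, generate the full symmetric group).
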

Result~\ref{Result1} states that an allowed operation of $\dynamicRT$ exists between $\mathcal{N}$ and $\mathcal{M}$ if and only if the vector of eigenvalues, $\bm{\mu}$, of the Choi-state of the channel $\mathcal{M}$ is in the convex hull of local permutations of the vector of eigenvalues, $\bm{\lambda}$, of the Choi-state of the channel $\mathcal{N}$. This is reminiscent of the analogous result for $\staticRT$ \cite{Gour_2015} since majorisation is equivalent to the existence of a doubly stochastic matrix mapping the eigenvalues of $\rho$ and $\sigma$ \cite{bapat_raghavan_1997}. Here, in contrast, we don't have all doubly stochastic matrices but only a subset, those involving \emph{local} permutations. A sketch of the proof of Result~\ref{Result1} is provided in the appendices below, and a complete derivation can be found in Supplementary Material B \cite{suppMat}. 

Mathematically, the question of the existence of an allowed operation between two channels can now be simplified to a convex optimisation problem that can be easily solved on conventional computing hardware. Physically, this allows a comparison of the ability of the allowed operations of $\staticRT$ to preserve purity.    

As an application of Result \ref{Result1}, we consider the existence of an allowed operation between the depolarising, $\depola{s}$, and dephasing, $\dephase{q}$, channels, parameterised by $s$ and $q$ respectively (if $s,q = 1$ they are the identity channel; if $s,q = 0$ they are the completely depolarising/dephasing channel). It is found that there exists an allowed operation of $\dynamicRT$ mapping $\depola{s} \rightarrow \depola{s'}$ if and only if $s' \leq s$, and similarly, $\dephase{q} \rightarrow \dephase{q'}$ if and only if $q' \leq q$. This confirms the obvious conclusion that increasing the probability of depolarising or dephasing decreases the amount of preserved purity. In addition, no allowed operation exists that maps $\depola{s} \rightarrow \dephase{q}$ except in the trivial case ($s=1$), as expected. However, there does exist an allowed operation mapping $\dephase{q}\rightarrow \depola{s}$ if and only if $s \leq q/(2-q)$. See Supplementary Material B \cite{suppMat} for the details. 
\begin{figure}
    \centering
    \hspace*{0cm}\includegraphics[scale=0.32]{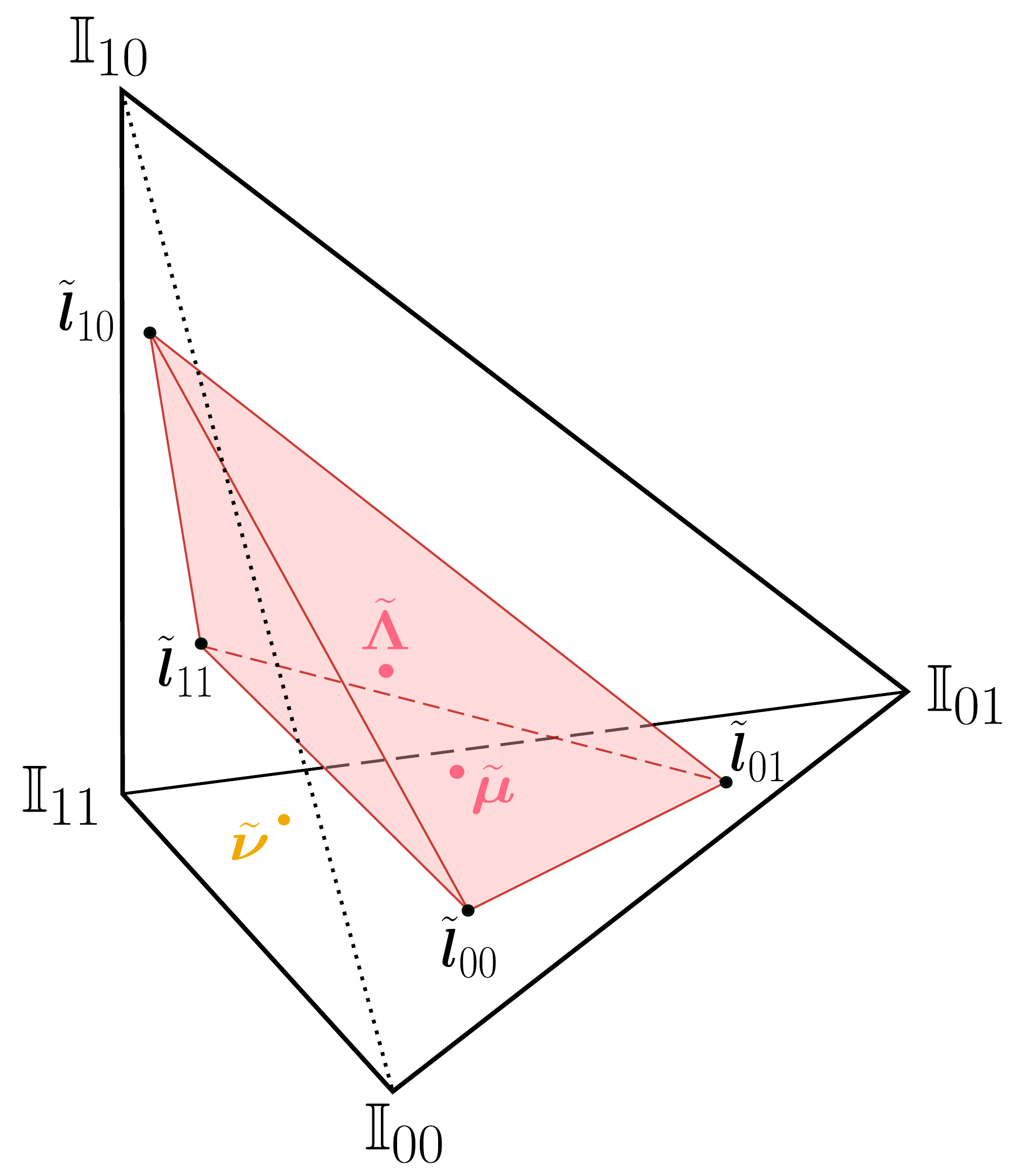}
    \caption{A simplex in $\mathbb{R}^{3}$ where each point represents the Choi-state of a qubit unital channel that is diagonal in the Bell basis. $\mathbb{I}_{ij}$, the extreme points of the outer simplex, are the identity channel and channels that are equivalent to the identity channel up to a Pauli operator. The set of vectors $\Tilde{\mathfrak{L}}$ are plotted as points, $\Tilde{\bm{l}}_{ij}$, where each point represents a local permutation of $\bm{\lambda}$. The convex hull of these vectors is shaded red with the free channel, given by $\Tilde{\bm{\Lambda}}$, at the centre. $\Tilde{\bm{\mu}}$ sits within this convex hull, and hence, an allowed operation exists between the channel $\bm{\lambda}$ and $\bm{\mu}$. $\Tilde{\bm{\nu}}$ sits outside, and hence, no allowed operation exists between the channel $\bm{\lambda}$ and $\bm{\nu}$.
    }
    \label{fig:1}
\end{figure}

\subsection{A Complete Set of Monotones} 
A monotone of a resource theory is a function $M(\cdot)$ such that $M(\mathcal{N}(\rho)) \leq M(\rho)$ where $\mathcal{N} \in \freeS$. Although, it may be the case that $M(\sigma) \leq M(\rho)$ even if no allowed operation exists between $\rho$ and $\sigma$. \textit{A complete set of monotones}, $\{M_{i}\}$, does, however, provide an alternative to the convertibility question, with an allowed operation existing such that $\rho \rightarrow \sigma$ if and only if $M_{i}(\sigma) \leq M_{i}(\rho) ~\forall ~ i$. The characterisation given in Result~\ref{Result1} admits a geometrical interpretation of an allowed operation through which a complete set of monotones can be found.

The set of local permutations of $\bm{\lambda}$ is $ \mathfrak{L} = \{ \perm{i} \bm{\lambda} : \forall ~ \perm{i} \in \mathfrak{P} \}$, where $\mathfrak{P}$ is the set of local qubit permutations. Result~\ref{Result1} can be rephrased, stating that a matrix $\mathbb{D}$ exists, such that $\bm{\mu} = \mathbb{D}\bm{\lambda}$, if $\bm{\mu}$ is in the convex hull of vectors in $ \mathfrak{L}$. Each element of $\mathfrak{L}$ lives in a three-dimensional subspace of $\mathbb{R}^4$ due to normalisation. The vectors $\mathfrak{L}$ within this lower dimensional subspace are given by $\Tilde{\mathfrak{L}}$. Geometrically, the elements of $\Tilde{\mathfrak{L}}$ form a simplex, as depicted in Fig.~\ref{fig:1}. Each point represents a qubit unital channel diagonal in the Bell basis. The points of the outer simplex are the identity channel and channels that are equivalent to the identity up to a Pauli operator. From these channels, it is possible to reach all other channels under allowed operations, as their convex hull is the whole space. The centre of the simplex, $\Tilde{\bm{\Lambda}}$, is the free state preparation channel, $\Lambda$. This channel can be reached by any channel under allowed operations.

Using this geometrical representation, a complete set of monotones can readily be identified. In what follows, we consider all vectors in $\Tilde{\mathfrak{L}}$ to be linearly independent, which is the general case \footnote{When considering any $\epsilon$ level of noise in the description of a given channel, the vectors in $\Tilde{\mathfrak{L}}$ will always be linearly independent.}. For completeness, all cases are considered in Supplementary Material C \cite{suppMat}.  
\begin{theorem}
    \textit{There exists a set of $d_{S}^{2}=4$ linear inequalities that are a complete set of monotones of $\dynamicRT$ if $\Tilde{\mathfrak{L}}$ is a set of linearly independent vectors. } \label{result2}
\end{theorem}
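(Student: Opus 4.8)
The plan is to convert the statement into elementary convex geometry by leaning on Result~\ref{Result1}. By that result, an allowed operation $\Pi\in\freeD$ realising $\mathcal{N}\to\mathcal{M}$ exists if and only if $\tilde{\bm\mu}$ lies in the convex hull of $\tilde{\mathfrak{L}}$, the four local-permutation images of $\tilde{\bm\lambda}$. Since the Choi states are normalised, $\mathrm{Tr}\,\choi{N}=1$ confines $\tilde{\bm\lambda}$, $\tilde{\bm\mu}$ and every element of $\tilde{\mathfrak{L}}$ to the $(d_S^2-1)=3$-dimensional affine hyperplane of unit trace. The standing hypothesis that the vectors of $\tilde{\mathfrak{L}}$ are linearly (hence affinely) independent then guarantees that $\mathrm{conv}(\tilde{\mathfrak{L}})$ is a non-degenerate $3$-simplex with exactly four vertices, so the whole question becomes: characterise membership of a point in a tetrahedron.

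First I would invoke the standard fact that a $k$-simplex is the intersection of exactly $k+1$ closed half-spaces, one per facet; equivalently, a point lies in it precisely when all $k+1$ of its barycentric coordinates are non-negative. For $k=3$ this yields exactly $d_S^2=4$ constraints. Concretely, to each vertex $\tilde{\bm l}_{ij}\in\tilde{\mathfrak{L}}$ I would associate the unique affine functional $M_{ij}$ vanishing on the opposite facet (the hull of the other three vertices) and positive at $\tilde{\bm l}_{ij}$; these are the suitably normalised barycentric coordinates, and being affine they give genuinely \emph{linear} inequalities $M_{ij}(\tilde{\bm\mu})\ge 0$ in the test point. Membership $\tilde{\bm\mu}\in\mathrm{conv}(\tilde{\mathfrak{L}})$ is then equivalent to the simultaneous validity of these four linear inequalities.

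Combining this with Result~\ref{Result1} gives \emph{completeness} immediately: the existence of an allowed operation $\mathcal{N}\to\mathcal{M}$ is equivalent to the conjunction of the four inequalities, which is by definition what it means for $\{M_{ij}\}$ to be a complete set. What remains, and what I expect to be the main obstacle, is to verify that each $M_{ij}$ is a bona fide \emph{monotone} of $\dynamicRT$, i.e.\ non-increasing under every free super-channel. Here I would again use Result~\ref{Result1}: any $\Pi\in\freeD$ drives the eigenvalue vector into $\mathrm{conv}(\mathfrak{L})$, the local-permutation orbit hull, which sits inside each of its own facet half-spaces; tracking the induced drift of $\tilde{\bm\mu}$ towards the simplex centre $\tilde{\bm\Lambda}$ (the free channel $\Lambda$) shows the relevant functional values cannot grow in the resourceful direction.

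Two points will need care. The functionals $M_{ij}$ are most naturally written relative to the reference channel $\mathcal{N}$, so to display them unambiguously as monotones I would recast the facet directions in reference-independent form: for two-qubit Bell-diagonal Choi states they can be taken as the three Bell-basis parity combinations of the Choi eigenvalues together with one extremal eigenvalue, and one checks directly that the largest eigenvalue never increases while the smallest never decreases under $\freeS$, which pins down the fourth constraint that separates the tetrahedron from the surrounding box of parity bounds. Second, the linear-independence hypothesis is genuinely needed: if the vectors of $\tilde{\mathfrak{L}}$ degenerate into an affinely dependent configuration the hull collapses to a lower-dimensional simplex with fewer than four facets, so those cases must be enumerated on their own, which is why they are relegated to the supplementary material.
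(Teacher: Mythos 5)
Your proposal is correct and follows essentially the same route as the paper: Result~\ref{Result1} reduces convertibility to membership of $\Tilde{\bm{\mu}}$ in the convex hull of the four linearly independent vectors of $\Tilde{\mathfrak{L}}$, which is a non-degenerate $3$-simplex, and membership is then certified by the $d_S^2=4$ facet half-space inequalities. Your barycentric-coordinate functionals are just a renormalised version of the paper's facet normal vectors $\bm{f}_{ij}$ compared against the value $l$ attained on each facet (with the sign fixed by the centroid $\Tilde{\bm{\Lambda}}$), and the degenerate affinely dependent cases are likewise deferred to the supplementary treatment in both arguments.
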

With the vectors in $\Tilde{\mathfrak{L}}$ being linearly independent, their convex hull forms a tetrahedron in $\mathbb{R}^{3}$. If the three-dimensional representation of the channel $\bm{\mu}$, given by $\Tilde{\bm{\mu}}$, lies inside this tetrahedron, it is then in the convex hull of the vectors in $\Tilde{{\mathfrak{L}}}$. This is equivalent to $\bm{\mu}$ being in the convex hull of the vectors in $\mathfrak{L}$ and hence an allowed operation $\bm{\lambda} \rightarrow \bm{\mu}$ existing.  

Each monotone then checks on what side of the planes that encompass each of the four faces of the tetrahedron $\Tilde{\bm{\mu}}$ lies on. If for all four planes $\Tilde{\bm{\mu}}$ lies on the same side as $\Tilde{\bm{\Lambda}}$, it must be inside the tetrahedron, and an allowed operation exists. 

\section{Characterising Allowed Operations for Arbitrary Finite Dimensional Systems}
Result~\ref{Result1} is extended to higher dimensional systems, such that $d_{S}=n$, in the important specialised case that $\freeS$ is the set of \textit{Weyl-covariant channels}, $\mathfrak{U}_{\mathcal{W}}$, where $\mathfrak{U}_{\mathcal{W}} \subsetneq ~ \mathfrak{U}_{M}$.  A channel $\mathcal{N}$ is Weyl-covariant if $\mathcal{N}(W_{ij}\rho W^{\dagger}_{ij}) = ~ W_{ij}\mathcal{N}(\rho)W^{\dagger}_{ij}~\forall~ i,j, \rho$ where $W_{ij}$ are the discrete Weyl operators \cite{watrous_2018}. These are a physically relevant set of channels given that they contain the completely depolarising, completely dephasing and partial thermalisation channels. More detail on these channels can be found in \cite{watrous_2018} and Supplementary Material D \cite{suppMat}. Our allowed operations now become a pre-processing and post-processing Weyl-covariant channel with shared randomness between them, given by the set $\weylFreeD \subsetneq \freeD$. The DRT considered for $d_{S} = n$ will be referred to as $\dynamicRT_{n}$. In this picture, we get the following result:
\begin{theorem}\label{result3}
For every given $\mathcal{M,N} \in \freeS=\mathfrak{U}_{\mathcal{W}}$, the following statements are equivalent:
\begin{enumerate}
\item $\mathcal{M}=\Pi(\mathcal{N)}$ for some allowed operation $\Pi \in \weylFreeD$.
\item $\bm{\mu} = \mathbb{D}' \bm{\lambda}$, where $\mathbb{D}'$ is a convex combination of local \textbf{cyclic} permutation matrices.  
\end{enumerate}
\end{theorem}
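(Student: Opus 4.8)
The plan is to lift the convertibility question to the eigenvalue vectors of the Choi states, exactly as in Result~\ref{Result1}, exploiting the special structure of Weyl-covariant channels. The first step is a structural lemma: a channel is Weyl-covariant if and only if its Choi state is diagonal in the generalised Bell basis $\ket{\Phi_{ij}} := (\mathbb{I}\otimes W_{ij})\ket{\Phi}$, equivalently if and only if it is a Weyl mixture $\mathcal{N}(\rho)=\sum_{ij}\lambda_{ij}\,W_{ij}\rho W_{ij}^{\dagger}$. I would prove this by rewriting the covariance constraint $\mathcal{N}(W_{cd}\rho W_{cd}^{\dagger})=W_{cd}\mathcal{N}(\rho)W_{cd}^{\dagger}$ at the Choi level as the commutation $[\choi{N},\,\overline{W_{cd}}\otimes W_{cd}]=0$ for all $c,d$. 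Since the Weyl phases cancel in $\overline{W_{cd}}\otimes W_{cd}$, the family $\{\overline{W_{cd}}\otimes W_{cd}\}$ is a genuine representation of $\mathbb{Z}_{n}\times\mathbb{Z}_{n}$ whose $n^{2}$ distinct characters are simultaneously diagonalised by the Bell states, so its commutant is precisely the Bell-diagonal operators. Hence every Weyl-covariant channel, including each $\mathcal{P}_{\kappa},\mathcal{E}_{\kappa}$, is a Weyl mixture, and $\bm{\lambda},\bm{\mu}$ may be taken to be the Bell-diagonal eigenvalue (probability) vectors indexed by $\mathbb{Z}_{n}\times\mathbb{Z}_{n}$.

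For the forward direction ($1\Rightarrow 2$), I would insert this parameterisation into the Choi-state identity Eq.~\eqref{Choi State allowed operations}, $\choi{M}=\sum_{\kappa}p_{\kappa}(\mathcal{P}_{\kappa}\otimes\mathcal{E}_{\kappa})(\choi{N})$, and compute the action of a single local Weyl conjugation $W_{gh}\otimes W_{kl}$ on a Bell state. Using the transpose trick $(A\otimes\mathbb{I})\ket{\Phi}=(\mathbb{I}\otimes A^{T})\ket{\Phi}$ together with the Weyl multiplication rule, I expect $(W_{gh}\otimes W_{kl})\ket{\Phi_{ij}}$ to equal $\ket{\Phi_{i'j'}}$ up to a phase, with $(i',j')$ a cyclic translation of $(i,j)$ determined additively by $(g,h)$ and $(k,l)$; the phase cancels in the conjugation. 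Consequently each $\mathcal{P}_{\kappa}\otimes\mathcal{E}_{\kappa}$ merely permutes the Bell projectors by a tensor product of cyclic shifts $C^{s}\otimes C^{t}$ on the two $\mathbb{Z}_{n}$ index factors (with $C$ the generator of cyclic permutations), so it sends $\bm{\lambda}$ to a convex combination of local cyclic permutations of it. Averaging over $\kappa$ then yields $\bm{\mu}=\mathbb{D}'\bm{\lambda}$ with $\mathbb{D}'$ a convex combination of local cyclic permutation matrices, which is statement 2.

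For the converse ($2\Rightarrow 1$), I would construct an explicit allowed operation realising any given $\mathbb{D}'=\sum_{s,t}r_{st}\,C^{s}\otimes C^{t}$. Reading off the index map computed above, the shift $C^{s}\otimes C^{t}$ is produced by a trivial pre-processing $\mathcal{P}_{\kappa}=\mathcal{I}$ together with the post-processing Weyl conjugation $\mathcal{E}_{\kappa}(\cdot)=W_{st}(\cdot)W_{st}^{\dagger}$. Taking the shared-randomness variable to be $\kappa=(s,t)$ distributed as $r_{st}$ then gives $\choi{M}=\sum_{s,t}r_{st}(\mathcal{I}\otimes\mathcal{E}_{st})(\choi{N})$, i.e.\ $\mathcal{M}=\Pi(\mathcal{N})$ with $\Pi\in\weylFreeD$, since the identity and the Weyl conjugations are themselves Weyl-covariant noisy operations. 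Because the Weyl operators are indexed exactly by $(s,t)\in\mathbb{Z}_{n}\times\mathbb{Z}_{n}$, every local cyclic permutation, and hence every convex combination, is attainable, closing the equivalence.

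The main obstacle I anticipate is the structural lemma and the associated bookkeeping of Weyl phases: I must verify that local Weyl conjugations act on the Bell basis as clean, phase-free permutations and, crucially, that the permutations they generate are exactly the local cyclic shifts $C^{s}\otimes C^{t}$ and nothing larger. Establishing this tight correspondence, namely that Weyl covariance forces Bell-diagonality so that no non-Weyl-mixture channel can enter, and that the induced index action is precisely translation on $\mathbb{Z}_{n}\times\mathbb{Z}_{n}$, is what upgrades the characterisation to an exact ``if and only if'' rather than a one-sided inclusion, and it is the step most sensitive to the choice of Weyl-operator conventions.
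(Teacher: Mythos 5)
Your proposal is correct and follows essentially the same route as the paper: reduce to Bell-diagonal Choi states of Weyl mixtures, show that each local Weyl conjugation permutes the Bell projectors by a local cyclic shift $X^{s}\otimes X^{t}$ so that the forward direction yields a convex combination of local cyclic permutations, and realise any such shift explicitly for the converse. The only differences are minor: the paper cites the Weyl-mixture form of Weyl-covariant channels rather than deriving Bell-diagonality from the commutant of $\{\overline{W_{cd}}\otimes W_{cd}\}$ as you propose, and its converse absorbs the construction into a single \emph{pre}-processing Weyl-covariant channel (by commuting the Weyl conjugation through $\mathcal{N}$) where you use a post-processing one --- both are valid.
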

A sketch of the proof of Result~\ref{result3} can be found in the Appendices below, and a complete derivation can be found in Supplementary Material D \cite{suppMat}.

Following the same logic as in the qubit case, a complete set of monotones can again be found.
\begin{theorem} \label{result4}
    \textit{There exists a set of $d_{S}^{2}$ inequalities that are a complete set of monotones of $\dynamicRT_{n}$ if $\Tilde{\mathfrak{L}}$ is a set of linearly independent vectors.}
\end{theorem}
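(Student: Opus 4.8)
The plan is to mirror the structure already laid out in the qubit case (Result~\ref{result2}), adapting it to the general dimension $d_{S}=n$. By Result~\ref{result3}, the existence of an allowed operation $\mathcal{N}\rightarrow\mathcal{M}$ is equivalent to $\bm{\mu}$ lying in the convex hull of the local cyclic permutations of $\bm{\lambda}$, i.e.\ in the convex hull of $\mathfrak{L}=\{\mathbb{P}_{i}\bm{\lambda}\}$ where now $\mathbb{P}_{i}$ ranges over products of cyclic shifts acting on the two $n$-dimensional factors. The first step is to observe that normalisation of the Choi-state ($\mathrm{Tr}\,\choi{M}=1$, so the eigenvalue vector sums to one) confines every element of $\mathfrak{L}$ to an affine hyperplane in $\mathbb{R}^{n^{2}}$, a subspace of dimension $n^{2}-1$. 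Passing to this lower-dimensional representation yields the reduced set $\Tilde{\mathfrak{L}}$, exactly as in the qubit case where $\mathbb{R}^{4}\to\mathbb{R}^{3}$.

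Next I would invoke the linear-independence hypothesis on $\Tilde{\mathfrak{L}}$. The key counting fact is that there are precisely $n^{2}$ local cyclic permutations: each factor admits $n$ cyclic shifts, giving $n\times n = n^{2} = d_{S}^{2}$ group elements. If the corresponding $n^{2}$ vectors $\Tilde{\bm{l}}_{i}\in\mathbb{R}^{n^{2}-1}$ are affinely independent (which the linear-independence assumption on $\Tilde{\mathfrak{L}}$ guarantees, since their affine span is the whole $(n^{2}-1)$-dimensional hyperplane), then their convex hull is a simplex with exactly $n^{2}$ vertices and hence exactly $n^{2}$ facets. Each facet lies on a unique affine hyperplane in $\mathbb{R}^{n^{2}-1}$, described by a single linear inequality. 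A point $\Tilde{\bm{\mu}}$ lies inside the simplex if and only if it lies on the interior side of all $n^{2}$ facet-hyperplanes --- the side containing the centroid $\Tilde{\bm{\Lambda}}$, which represents the free channel $\Lambda$ and is guaranteed to lie strictly inside as the uniform average of all vertices.

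The final step is to promote these $n^{2}$ geometric inequalities to monotones. For each facet $i$, let $f_{i}(\cdot)$ be the affine functional vanishing on that facet-hyperplane and oriented so that $f_{i}(\Tilde{\bm{\Lambda}})\geq 0$. I would define the candidate monotone $M_{i}$ on channels by evaluating $f_{i}$ on the reduced eigenvalue vector of the channel's Choi-state. One then checks (i) \emph{monotonicity}: if $\mathcal{M}=\Pi(\mathcal{N})$ with $\Pi\in\weylFreeD$, then $\Tilde{\bm{\mu}}$ is a convex combination of the vertices, so $f_{i}(\Tilde{\bm{\mu}})\geq 0$ follows from $f_{i}\geq 0$ holding at every vertex --- which is precisely the statement that each vertex lies on the interior side of the facet; and (ii) \emph{completeness}: conversely, if $M_{i}(\mathcal{M})\geq M_{i}(\Lambda)$ for all $i$, i.e.\ $\Tilde{\bm{\mu}}$ is on the correct side of every facet, then $\Tilde{\bm{\mu}}$ lies in the simplex, so by Result~\ref{result3} an allowed operation exists. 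The main obstacle I anticipate is the combinatorial bookkeeping in step two: verifying that the $n^{2}$ local cyclic permutations really do produce an affinely independent (hence simplex-generating) vertex set under the stated hypothesis, and that each of the $n^{2}$ facets corresponds to a well-defined single inequality, rather than some facets degenerating or coinciding. For $n=2$ this reduces to the tetrahedron picture already verified; the work is to argue it persists for general $n$ purely from linear independence of $\Tilde{\mathfrak{L}}$, which fixes both the vertex count and the facet count at $n^{2}=d_{S}^{2}$.
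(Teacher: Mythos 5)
Your proposal is correct and follows essentially the same route as the paper: reduce via Result~\ref{result3} to convex-hull membership of $\Tilde{\bm{\mu}}$ in the simplex spanned by the $d_{S}^{2}$ local cyclic permutation images of $\bm{\lambda}$ in $\mathbb{R}^{d_{S}^{2}-1}$, and take as monotones the $d_{S}^{2}$ facet-hyperplane inequalities oriented toward the centroid $\Tilde{\bm{\Lambda}}$. Your reading of the ``linear independence'' hypothesis as affine independence of the vertex set (dimSpan of the difference set $\mathfrak{K}$ equal to $d_{S}^{2}-1$) matches the precise formulation the paper uses in its supplementary material.
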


\section{Characterising Allowed Operations via State Discrimination} 
The complete set of monotones of $\dynamicRT$ and $\dynamicRT_{n}$ can be used to create an operational interpretation of the pre-order established on the set of noisy operations.   
\begin{theorem} \label{result5}
Let $\mathcal{N}\in\freeS$ be given.
Then there exists a set of $d_S^2$ many bipartite states $\{\rho_{ij}\}$ making the following two statements equivalent:
\begin{enumerate}
\item $\mathcal{M}=\Pi(\mathcal{N})$ for some $\Pi \in \freeD$.
\item For every $i,j$ we have
\begin{equation}
\bra{\Phi_{ij}} ( \mathcal{I} \otimes \mathcal{N}) (\rho_{ij}) \ket{\Phi_{ij}} \geq \bra{\Phi_{ij}} (\mathcal{I} \otimes \mathcal{M}) (\rho_{ij}) \ket{\Phi_{ij}}.
\end{equation}
\end{enumerate}
Here, $\ket{\Phi_{ij}}$ is a state from the Bell basis and $\Tilde{\mathfrak{L}}$ must be a set of linearly independent vectors.
\end{theorem}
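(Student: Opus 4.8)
The plan is to reduce Result~\ref{result5} to the complete set of monotones already obtained in Result~\ref{result2} (qubit case) and Result~\ref{result4} (the $\dynamicRT_{n}$ case), and then to re-express each of those $d_S^2$ monotone inequalities as a Bell-basis overlap. Recall that for fixed $\mathcal N\in\freeS$ the reachable channels are precisely those whose Choi eigenvalue vector $\bm\mu$ lies in the simplex $\tilde{\mathfrak{L}}=\mathrm{conv}\{\mathbb{P}_i\bm\lambda\}$, and that membership in this simplex is tested by the inequalities associated to its facets. Each facet inequality is linear in $\bm\mu$, and each overlap $\bra{\Phi_{ij}}(\mathcal I\otimes\,\cdot\,)(\rho_{ij})\ket{\Phi_{ij}}$ is linear in the channel, so the theorem reduces to the problem of \emph{constructing} states $\rho_{ij}$ whose overlap functionals reproduce the facet tests.

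The computational engine I would set up first is an overlap identity for the (Weyl-covariant, Bell-diagonal) channels at hand. Using the Choi isomorphism together with $\mathcal N(W_{ab}\,\cdot\,W_{ab}^\dagger)=W_{ab}\mathcal N(\cdot)W_{ab}^\dagger$ and $\ket{\Phi_{ij}}=(\mathbb I\otimes W_{ij})\ket{\Phi}$, sending one Bell state through the channel and projecting onto another selects a single eigenvalue,
\begin{equation}
\bra{\Phi_{ij}}(\mathcal I\otimes\mathcal N)(\ketbra{\Phi_{ab}})\ket{\Phi_{ij}}=\lambda_{(i-a)(j-b)}.
\end{equation}
By linearity, choosing $\rho_{ij}=\sum_{ab}q_{ab}\ketbra{\Phi_{ab}}$ a Bell-diagonal mixture yields $\sum_{ab}q_{ab}\lambda_{(i-a)(j-b)}$, an arbitrary probability-weighted combination of the eigenvalues of $\choi{N}$, and the same mixture evaluated on $\mathcal M$ returns the corresponding combination of the entries of $\bm\mu$. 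Thus the weights $q_{ab}$ (together with the label $ij$ of the measured Bell state, which only permutes the weights) give exactly the freedom needed to target a prescribed linear functional of the eigenvalue vector.

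With this dictionary established, once the $\rho_{ij}$ are matched to the monotones the two implications become short. The forward direction $1\Rightarrow2$ is then immediate, since by construction the overlaps coincide with the complete monotone set of Result~\ref{result2}/\ref{result4}, which is non-increasing under $\freeD$. The converse $2\Rightarrow1$ is exactly the completeness statement of those results: if every facet test is satisfied then $\bm\mu$ lies in $\tilde{\mathfrak{L}}$ and hence $\mathcal M=\Pi(\mathcal N)$. The hypothesis that $\tilde{\mathfrak{L}}$ be linearly independent enters here to guarantee that the simplex is non-degenerate and that its $d_S^2$ facets are genuinely independent, so that no test is redundant.

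The step I expect to be the main obstacle is the matching itself. Three of the simplex facets contain the vertex $\bm\lambda$, and these translate transparently into overlap inequalities because the comparison $\bra{\Phi_{ij}}(\mathcal I\otimes\mathcal N)(\rho_{ij})\ket{\Phi_{ij}}\ge\bra{\Phi_{ij}}(\mathcal I\otimes\mathcal M)(\rho_{ij})\ket{\Phi_{ij}}$ is automatically tight at $\mathcal M=\mathcal N$, i.e. at $\bm\mu=\bm\lambda$. The remaining facet, the one opposite $\bm\lambda$, is \emph{not} tight there, so reproducing it is delicate: the offset of an overlap inequality is pinned to $\mathcal N$ while only its orientation is free. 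I anticipate resolving this by exploiting the gauge freedom that trace preservation makes the overlap comparison invariant under adding a constant to the weights $q_{ab}$, together with the explicit $\mathcal N$-dependence allowed for $\rho_{ij}$, to realise this last facet with the correct sign. Checking that the $\rho_{ij}$ produced this way remain positive, normalised states, and that the four resulting inequalities jointly cut out the bounded simplex $\tilde{\mathfrak{L}}$ rather than the unbounded tangent cone at $\bm\lambda$, is the crux of the argument and where I would spend most of the effort.
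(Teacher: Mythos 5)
Your high-level strategy coincides with the paper's: reduce the operational statement to the complete set of facet monotones of Results~\ref{result2} and~\ref{result4}, and realise each facet test as a Bell-basis overlap by taking $\rho_{ij}$ Bell-diagonal, so that $\bra{\Phi_{ij}}(\mathcal{I}\otimes\mathcal{N})(\rho_{ij})\ket{\Phi_{ij}}$ is a fixed convex combination of the entries of $\bm{\lambda}$ and the identical combination of $\bm{\mu}$ for $\mathcal{M}$. This is exactly how the paper builds its states $\sigma_{ij}$ (a facet normal $\bm{f}_{ij}$ padded with a zero, shifted by a multiple of $\bm{\Lambda}$ to make it positive, and normalised), and your ``gauge freedom'' is precisely that shift-and-rescale. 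You have also correctly located the crux: the three facets of the simplex that contain the vertex $\bm{\lambda}$ translate directly into inequalities saturated at $\mathcal{M}=\mathcal{N}$, while the facet opposite $\bm{\lambda}$ does not.

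The problem is that you do not close this gap, and the mechanism you gesture at cannot close it. Because $\bm{\lambda}$ and $\bm{\mu}$ both sum to one, adding a constant to the weights $q_{ab}$ shifts both sides of the comparison equally, and changing the Bell label $ij$ merely permutes the weights; so \emph{every} inequality of the prescribed form is of the shape $\bm{c}\cdot\bm{\lambda}\geq\bm{c}\cdot\bm{\mu}$ for a single probability vector $\bm{c}$, i.e.\ a half-space whose boundary hyperplane passes through $\bm{\lambda}$, however $\rho_{ij}$ is chosen (even $\mathcal{N}$-dependently). The hyperplane of the facet opposite $\bm{\lambda}$ does not pass through $\bm{\lambda}$, so it is not of this shape, and the obstruction is not mere bookkeeping: a point $\bm{\mu}=\bm{\lambda}+t(\bm{\Lambda}-\bm{\lambda})$ with $t$ slightly beyond the opposite facet is generically still a legitimate noisy operation, lies outside the simplex, yet satisfies every \emph{valid} inequality of this shape, since validity at the reachable channel $\bm{\Lambda}$ forces $\bm{c}\cdot\bm{\lambda}\geq\bm{c}\cdot\bm{\Lambda}$. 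The paper's route through this point is different from yours: it pairs the facet labelled $ij$ with the Choi state $\choi{N}_{ij}=(\mathcal{I}\otimes\mathcal{N})(\ketbra{\Phi_{ij}})$, whose eigenvalue vector is the permuted vertex $\bm{l}_{ij}$ lying \emph{on} that facet, so that the $\mathcal{N}$-side expectation equals the facet threshold (the diagonal of Table~\ref{Table3:operationalStratriges}), and it additionally passes through the adjoint channels via Lemma~\ref{lemma5.1} to convert $\textrm{Tr}(\choi{N^{\dagger}}_{ij}\sigma_{ij})$ into the operational overlap; neither ingredient appears in your plan, and whether even that device escapes the through-$\bm{\lambda}$ obstruction for the fourth facet is exactly the delicate point you flagged. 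As written, your proposal handles the three facets through $\bm{\lambda}$ and explicitly defers the fourth, so it is a plan rather than a proof, and the specific fix you anticipate would fail.
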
 
This can be interpreted as a state discrimination task: firstly, a referee distributes half of the bipartite state $\rho_{ij}$ to Alice and half to Bob. Alice then sends her half to Bob via the channel $\mathcal{N}$ or $\mathcal{M}$, and Bob makes a Bell-state measurement. Bob succeeds if he gets the measurement outcome associated with $\ket{\Phi_{ij}}$ for the state $\rho_{ij}$. If for all states in the set Bob's success probability is at least as high when Alice applies $\mathcal{N}$ as when Alice applies $\mathcal{M}$, an allowed operation exists between $\mathcal{N}$ and $\mathcal{M}$. Here, the individual probabilities of successful discrimination need to be considered, not just the average success probability, as is common in other state discrimination tasks \cite{Bae_2015}. 

This result allows the pre-order to be experimentally investigated. Each state, $\rho_{ij}$, is constructed in a similar method to the complete set of monotones. Interestingly, there is some freedom in how this operational interpretation is formed. This is detailed with a derivation in Supplementary Material E \cite{suppMat}.  

\section{Classical Capacity Quantifies Informational Non-Equilibrium Preservability} 
In addition to finding complete sets of monotones, it is important to find physically motivated monotones. The \textit{Holevo capacity} - a lower bound on the classical capacity of a quantum channel \cite{watrous_2018, PhysRevLett.83.3081, PhysRevA.55.1613, 1377491, 8242350, holevo2002remarks} - is one such physically relevant monotone of both $\dynamicRT$ and $\dynamicRT_{n}$. 

If there exists a $\Pi \in \freeD$ such that $\mathcal{M}=\Pi(\mathcal{N})$, where $\mathcal{M,N} \in \freeS$, then $\Holevo(\mathcal{N}) \geq \Holevo(\mathcal{M})$ where $\Holevo(\cdot)$ is the Holevo capacity. See Supplementary Material E \cite{suppMat} for a full derivation. Therefore, $\dynamicRT$ can be employed to compare the classical communication abilities of unital quantum channels in settings where non-equilibrium can only increase. {\color{black} See the appendices below for details of how this relates to the dynamical resource theory of communication \cite{Takagi_2020}, and see Supplementary Material G \cite{suppMat} for more details on monotones.}

\section{Conclusion} 
Here, the dynamical resource theory of informational non-equilibrium preservability has been characterised for both qubit mixed unitary channels and Weyl-covariant channels. Through the presented framework, the ability of two allowed operations of $\staticRT$ to preserve the purity of input states can be compared for all states in the space of states. This characterisation acts as a proof of principle for resource preservability theories and suggests that others could be developed. The clear and immediate next step would be to extend these results to the resource theory of thermodynamics. An understanding of the ability of \textit{thermal operations} \cite{Horodecki2013} to preserve non-equilibrium could further our knowledge of the dynamical aspects of quantum thermodynamics. Such understanding could also be utilised, for example, in the study of efficient thermal processes. {\color{black}Moreover, a characterisation of the DRT of information non-equilibrium for channels with differing input and output dimension would be another interesting extension of the above results.} Additionally, resource preservability theories could also be built upon other successful SRTs of a more pragmatic nature. For instance, the set of local operations and classical communication could be studied and their ability to preserve entanglement quantified.  

Moreover, a link between the ability of a channel to preserve information non-equilibrium and classical capacity - a natural measure for classical communication - has been made. Further effort should be made to discover other physically relevant monotones of $\dynamicRT$ in the hope of finding additional areas in which it is applicable.
\begin{acknowledgments}
\textbf{\textit{Acknowledgments}}. B.S. acknowledges support from UK EPSRC (EP/SO23607/1). P.S. and C.-Y.H. acknowledge support from a Royal Society URF (NFQI). C.-Y.H. also acknowledges support from the ERC Advanced Grant (FLQuant). P.S. is a CIFAR Azrieli Global Scholar in the Quantum Information Science Programme. Fig.~\ref{fig:1} was made with the help of \cite{gg2}.
\end{acknowledgments}

{\color{black} \textit{\textbf{Appendix on Informational Non-equilibrium and Purity}}
Whilst similar, the resource theory of informational non-equilibrium and the resource theory of purity differ in what exactly they consider a resource. Two states that are identical on their support but embedded in different dimensional Hilbert spaces will have the same amount of resource in the resource theory of purity but a differing amount of resource in the resource theory of informational non-equilibrium. In a Hilbert space of fixed dimension, these resource theories become alternative interpretations of the same physics. Where context permits, informational non-equilibrium will be referred to as purity in this work. This resource theory could, therefore, also be referred to as the dynamical resource theory of purity preservability. }

\textit{\textbf{Appendix on Results 1}}
Here we provide a sketch of the proof of Result~\ref{Result1}. A complete derivation can be found in Supplementary Material B \cite{suppMat}. 
\begin{proof}
Firstly, we prove that statement 1 implies statement 2. We simplify the form of the allowed operations through the following lemma:
\begin{lemma} \label{lemma1}
\textit{The Choi-states of qubit unital channels can always be digonalised in the Bell basis (maximally entangled basis) under local unitaries.}
\end{lemma}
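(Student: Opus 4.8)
The plan is to exploit the canonical (Pauli) decomposition of unital qubit channels together with the elementary fact that Pauli channels have Bell-diagonal Choi states. First I would invoke the Bloch-vector representation: any trace-preserving qubit map acts on Bloch vectors as $\vec{r}\mapsto M\vec{r}+\vec{c}$ for a real $3\times 3$ matrix $M$ and translation $\vec{c}$, and unitality forces $\vec{c}=0$. Taking the singular value decomposition $M=O_1 D O_2$ with $O_1,O_2\in SO(3)$ and $D$ diagonal, I would promote the rotations to qubit unitaries via the $SU(2)\to SO(3)$ homomorphism, yielding $\mathcal{N}=\mathcal{U}_1\circ\Delta\circ\mathcal{U}_2$, where $\mathcal{U}_i(\cdot)=U_i(\cdot)U_i^{\dagger}$ and $\Delta$ is the map whose Bloch matrix is $D$. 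This $\Delta$ is precisely a Pauli channel $\Delta(\rho)=\sum_{k}p_k\,\sigma_k\rho\,\sigma_k$ with $\{\sigma_k\}_{k=0}^{3}$ the Pauli operators ($\sigma_0=\mathbb{I}$) and the $p_k$ fixed by the entries of $D$.

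Second, I would compute the Choi state of a Pauli channel directly. Writing the Bell states as $\ket{\Phi_k}=(\mathbb{I}\otimes\sigma_k)\ket{\Phi}$, one immediately obtains $\mathcal{J}^{\Delta}=\sum_{k}p_k\,\ketbra{\Phi_k}$, which is manifestly diagonal in the Bell basis with eigenvalues $p_k$. Third, I would transfer the surrounding unitaries onto the Choi state. Using the ricochet identity $(\mathbb{I}\otimes U_2)\ket{\Phi}=(U_2^{T}\otimes\mathbb{I})\ket{\Phi}$, the pre-processing $\mathcal{U}_2$ becomes a \emph{local} unitary on the reference system, while $\mathcal{U}_1$ is already local on the output, so that $\choi{N}=(U_2^{T}\otimes U_1)\,\mathcal{J}^{\Delta}\,(U_2^{*}\otimes U_1^{\dagger})$. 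Hence $\choi{N}$ is a local-unitary conjugate of a Bell-diagonal state, and conjugating back by $U_2^{*}\otimes U_1^{\dagger}$ diagonalises it in the Bell basis, proving the claim.

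The main obstacle is the determinant/sign bookkeeping in the singular value decomposition: the SVD only guarantees $O_1,O_2\in O(3)$, and an improper rotation (determinant $-1$) is not realised by any qubit unitary. I would resolve this by absorbing any reflection into $D$---permissible because the diagonal entries of $D$ need not be positive---so that $O_1,O_2$ can always be taken in $SO(3)$. I would then verify that the resulting diagonal map remains a legitimate Pauli channel: the probabilities $p_k=\tfrac{1}{4}(1\pm\lambda_1\pm\lambda_2\pm\lambda_3)$ are non-negative precisely because complete positivity of the original $\mathcal{N}$ enforces the Fujiwara--Algoet (tetrahedron) inequalities on the $\lambda_i$. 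With this point settled, the rest of the argument is a direct computation, and the same structure will later feed into the eigenvalue characterisation of Result~\ref{Result1}, since the eigenvalues of $\choi{N}$ are exactly the Pauli probabilities $p_k$.
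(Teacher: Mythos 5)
Your proof is correct, and it rests on the same underlying engine as the paper's — the singular value decomposition of a real $3\times 3$ matrix together with the lift of $SO(3)$ rotations to qubit unitaries — but it runs the logic in the opposite order. The paper works directly on the Choi state: it writes $\choi{N}$ in the Fano block form, uses trace preservation and unitality to kill the local Bloch vectors, diagonalises the correlation matrix $T$ by local unitaries via the SVD, and recognises the resulting matrix as Bell-diagonal by inspection; only afterwards, in the ``Lemma~\ref{lemma1} Physical Intuition'' subsection, does it extract the channel-level statement that every qubit unital channel equals a Weyl-covariant (Pauli) channel up to pre- and post-processing unitaries. You instead establish that canonical decomposition $\mathcal{N}=\mathcal{U}_1\circ\Delta\circ\mathcal{U}_2$ first, compute $\mathcal{J}^{\Delta}=\sum_k p_k\ketbra{\Phi_k}$ directly, and transfer the unitaries onto the Choi state with the ricochet identity. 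The two arguments are mathematically equivalent, since the channel's Bloch matrix and the Choi state's correlation matrix are the same $3\times 3$ object up to a transpose, but your packaging buys two things: it proves Lemma~\ref{lemma1} and the paper's subsequent channel-decomposition lemma in a single pass, and it explicitly confronts the $O(3)$-versus-$SO(3)$ determinant issue in the SVD (absorbing improper rotations into signed singular values and then certifying positivity of the $p_k$ via the Fujiwara--Algoet conditions), a point the paper's proof passes over silently. One remark: for Lemma~\ref{lemma1} alone the Fujiwara--Algoet verification is not strictly necessary, since $\Delta=\mathcal{U}_1^{\dagger}\circ\mathcal{N}\circ\mathcal{U}_2^{\dagger}$ is automatically completely positive as a composition of channels, so $\mathcal{J}^{\Delta}$ is a valid Bell-diagonal state regardless; but the check does no harm and is needed if you want $\Delta$ exhibited as a bona fide probabilistic mixture of Paulis.
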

Given all unitary channels are in $\freeS$, there always exists an allowed operation that can diagonalise $\choi{N}$. The action of all qubit unital channels can, therefore, be captured through only the eigenvalues of their Choi-state. Equation~\eqref{Choi State allowed operations} can now be rephrased as $\bm{\mu} = \mathbb{B} \bm{\lambda}$, where $\bm{\lambda}, \bm{\mu} \in ~ \mathbb{R}^{4}$ are vectors of eigenvalues of $\choi{N}$, $\choi{M}$ respectively, and $\mathbb{B}$ is in the convex hull of matrices with elements $B_{nm, kl} = ~ \abs{ \bra{\Phi_{kl}} U \otimes V \ket{\Phi_{nm}} }^{2}$ for some general unitaries $U, V$. The Bell basis states are given by $\ket{\Phi_{ij}} = (\mathbb{I} \otimes W_{ij})\ket{\Phi_{00}}$ where $W_{ij}$ is a discrete Weyl operator and $\ket{\Phi_{00}} := \ket{\Phi}$ (see \cite{watrous_2018} and Supplementary Material A \cite{suppMat} for details on Weyl operators). Matrices of this type are a subset of the unistochastic matrices \cite{Bengtsson_2005} - we coin this subset \textit{the product-Bell unistochastic matrices}. The following lemma is now employed: 
\begin{lemma} \label{lemma2}
\textit{Product-Bell unistochastic matrices are a convex combination of local permutation matrices.}
\end{lemma}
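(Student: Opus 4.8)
The plan is to reduce the entries of a product-Bell unistochastic matrix to a single trace and then exploit the multiplicative structure of the Weyl operators, the goal being to expose an index-shift structure that matches the local permutations $\sigma_{x}^{n}\otimes\sigma_{x}^{m}$. First I would rewrite the matrix elements in a basis-independent form. Using the transpose identity $(A \otimes I)\ket{\Phi} = (I \otimes A^{T})\ket{\Phi}$ together with $\bra{\Phi}(I \otimes M)\ket{\Phi} = \textrm{Tr}(M)/d$, the defining entries become
\begin{equation}
    B_{nm,kl} = \frac{1}{d^{2}} \abs{ \textrm{Tr}\big( W_{kl}^{\dagger} V W_{nm} U^{T} \big) }^{2},
\end{equation}
so that all dependence on the Bell basis is repackaged into a product of four Weyl operators.

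Next I would expand the two single-qubit unitaries in the Weyl basis, $U^{T} = \sum_{a} u_{a} W_{a}$ and $V = \sum_{b} v_{b} W_{b}$, and use the qubit Weyl multiplication rule $W_{p} W_{q} = (-1)^{\langle p, q\rangle} W_{p+q}$ (indices added modulo two). Since $\textrm{Tr}(W_{r}) = d\,\delta_{r,0}$, only the terms with $a + b = nm + kl$ survive, and each surviving term carries a commutation phase. This collapses the double sum to a single sum over $a$ and shows that the bulk of each entry is controlled by the index shift $c := nm + kl$: a fixed shift $c=(c_{1},c_{2})$ corresponds exactly to the \emph{local} permutation matrix $\sigma_{x}^{c_{1}}\otimes\sigma_{x}^{c_{2}}$ acting on the eigenvalue labels. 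I would then verify $\sum_{kl} B_{nm,kl} = \sum_{nm} B_{nm,kl} = 1$ directly from the unitarity of $U,V$ and the completeness of the Bell basis, so that $B$ is doubly stochastic, and read off the nonnegative weights attached to each shift as a candidate probability vector $p_{nm}$.

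The hard part will be controlling the residual commutation phases. After factoring out the genuinely global phase, each surviving term still carries a phase that couples the summation index $a$ to the labels $nm$ and $kl$, and it is precisely this coupling that threatens to mix different shifts and thereby produce permutation matrices that are not of the local form. The crux of the argument is therefore to show that, once the modulus is squared and the full mixed-unitary and shared-randomness structure of the allowed super-channel is taken into account, these cross phases either cancel or are removed by the free operation that restores Bell-diagonality, leaving only the shift components. Establishing this cancellation -- equivalently, that the only extreme points reachable are the local permutation matrices -- is the main obstacle; once it is in hand, $B = \sum_{nm} p_{nm}\, \sigma_{x}^{n}\otimes\sigma_{x}^{m}$ follows, and convexity extends the statement from a single pair $(U,V)$ to the whole convex hull, completing the proof of Lemma~\ref{lemma2}.
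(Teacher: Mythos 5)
Your reduction is sound as far as it goes, and it is essentially the paper's argument in different dress: the trace rewriting, the Weyl-basis expansion, the selection rule $a+b=nm+kl$, and the identification of each shift class with a local permutation reproduce, step for step, the paper's expansion of $U$ and $V$ in the Pauli basis and its identification of the four label permutations $\perm{0},\dots,\perm{3}$; your double-stochasticity check is likewise the paper's column-sum computation. The divergence is in what happens next. The paper writes the phase picked up when $\sigma_{\alpha}\otimes\sigma_{\beta}$ acts on $\ket{\Phi_{nm}}$ as $\theta_{\alpha,\beta}$, i.e., independent of the Bell label $(n,m)$, and on that basis collapses each entry to a single $\abs{Q_{\gamma}}^{2}$ depending only on the shift. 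You, correctly, decline to assume this and flag the label dependence of the residual phases as ``the main obstacle''. Your proposal then stops there, so as written it does not prove the lemma.

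More importantly, that obstacle cannot be overcome by better phase bookkeeping: the phases genuinely depend on the Bell labels and survive the modulus squared, so entries within one shift class need not coincide. Concretely, take $U=V=\frac{1}{\sqrt{2}}(\mathbb{I}+i\sigma_{x})$. Since $(\sigma_{x}\otimes\sigma_{x})\ket{\Phi_{nm}}=(-1)^{m}\ket{\Phi_{nm}}$, the two contributions to the trivial shift, from $(\mathbb{I},\mathbb{I})$ and $(\sigma_{x},\sigma_{x})$, interfere destructively for $m=0$ and constructively for $m=1$: one finds $(U\otimes V)\ket{\Phi_{00}}=i\ket{\Phi_{10}}$ while $(U\otimes V)\ket{\Phi_{01}}=\ket{\Phi_{01}}$, hence $B_{00,00}=0$ but $B_{01,01}=1$. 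Every convex combination $\sum_{nm}p_{nm}\,\sigma_{x}^{n}\otimes\sigma_{x}^{m}$ has all four diagonal entries equal to $p_{00}$, so this $B$ (a single transposition of Bell labels, which is not a local permutation) admits no such decomposition. Appealing to shared randomness or to the surrounding super-channel structure, as you suggest, cannot repair this, since a single product unitary $U\otimes V$ is already an admissible pre-/post-processing and already defines a product-Bell unistochastic matrix. So the gap you identified is real, it sits exactly at the step the paper's own proof asserts without justification, and closing it would require restricting the class of matrices (or enlarging the set of permitted permutations) rather than exhibiting a cancellation.
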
 
The proof of this direction is then completed by noting that a convex combination of product-Bell unistochastic matrices remains a convex combination of local permutation matrices.

To show statement 1 implies statement 2, consider the channels $\widetilde{\mathcal{N}},\widetilde{\mathcal{M}}$ whose Choi-states are diagonalised in the given Bell basis with the same eigenvalues as $\mathcal{J}^\mathcal{N},\mathcal{J}^\mathcal{M}$.
This is guaranteed by Lemma~\ref{lemma1}. When statement 2 holds, one can write $\widetilde{\mathcal{M}} = \alpha \widetilde{N} + (1-\alpha) \widetilde{N} [\sigma_{x} (\cdot) \sigma_{x}]$ where $\alpha=\sum_{n=m}  p_{nm}$, meaning there is an allowed operation converting $\mathcal{N}$ into $\mathcal{M}$, completing the proof. 
\end{proof}

\textit{\textbf{Appendix on Results 3}} 
Here, we provide a sketch of the proof of Result~\ref{result3}. A complete derivation can be found in Supplementary Material D \cite{suppMat}.
\begin{proof}
The Choi-states of Weyl-covariant channels are diagonal in the Bell basis, and Choi-states that are diagonal in the Bell basis correspond to Weyl-covariant channels. The physical meaning of all the Choi-states of qubit unital channels being diagonalisable under local unitaries (see Lemma \ref{lemma1} in the above Appendix) can now be seen - all qubit unital channels are equal to a Weyl-covariant channel up to a pre-processing and post-processing unitary (This result was also recently found in \cite{li2023unital}, see Supplementary Material A \cite{suppMat} for our proof). From the diagonalisation of the Choi-state, the proof of Result~\ref{result3} is similar to that of Result~\ref{Result1} seen in the above Appendix. The difference arises from $\mathbb{B}$ now being in the convex hull of matrices with elements $B_{nm, kl} = \abs{ \bra{\Phi_{kl}} W_{ab} \otimes W_{cd} \ket{\Phi_{nm}} }^{2}$, where $W_{ab}$ and $W_{cd}$ are general discrete Weyl operators. This restricts the set of permutations to be only the set of local \textit{cyclic} permutations. A complete derivation can be found in Supplementary Material D \cite{suppMat}.
\end{proof}

{\color{black} \textit{\textbf{Appendix on Classical Capacity Quantifying Informational Non-equilibrium Preservability}}
The dynamical resource theory of classical communication \cite{Takagi_2020} has all state-preparation channels as the set of channels with no resource (the free set). This is due to these channels not being able to communicate any information. In the resource theory presented here, we have an additional thermodynamic constraint that the channels cannot drive the system out of equilibrium. This suggests that, conceptually, informational non-equilibrium preservability can be viewed as the ability to transmit classical information \cite{Takagi_2020, Hsieh_2021_Pres} subject to additional thermodynamic constraints. }



\bibliographystyle{apsrev4-1}
\bibliography{mainTextBib}

\begin{thebibliography}{49}%
\makeatletter
\providecommand \@ifxundefined [1]{%
 \@ifx{#1\undefined}
}%
\providecommand \@ifnum [1]{%
 \ifnum #1\expandafter \@firstoftwo
 \else \expandafter \@secondoftwo
 \fi
}%
\providecommand \@ifx [1]{%
 \ifx #1\expandafter \@firstoftwo
 \else \expandafter \@secondoftwo
 \fi
}%
\providecommand \natexlab [1]{#1}%
\providecommand \enquote  [1]{``#1''}%
\providecommand \bibnamefont  [1]{#1}%
\providecommand \bibfnamefont [1]{#1}%
\providecommand \citenamefont [1]{#1}%
\providecommand \href@noop [0]{\@secondoftwo}%
\providecommand \href [0]{\begingroup \@sanitize@url \@href}%
\providecommand \@href[1]{\@@startlink{#1}\@@href}%
\providecommand \@@href[1]{\endgroup#1\@@endlink}%
\providecommand \@sanitize@url [0]{\catcode `\\12\catcode `\$12\catcode
  `\&12\catcode `\#12\catcode `\^12\catcode `\_12\catcode `\%12\relax}%
\providecommand \@@startlink[1]{}%
\providecommand \@@endlink[0]{}%
\providecommand \url  [0]{\begingroup\@sanitize@url \@url }%
\providecommand \@url [1]{\endgroup\@href {#1}{\urlprefix }}%
\providecommand \urlprefix  [0]{URL }%
\providecommand \Eprint [0]{\href }%
\providecommand \doibase [0]{http://dx.doi.org/}%
\providecommand \selectlanguage [0]{\@gobble}%
\providecommand \bibinfo  [0]{\@secondoftwo}%
\providecommand \bibfield  [0]{\@secondoftwo}%
\providecommand \translation [1]{[#1]}%
\providecommand \BibitemOpen [0]{}%
\providecommand \bibitemStop [0]{}%
\providecommand \bibitemNoStop [0]{.\EOS\space}%
\providecommand \EOS [0]{\spacefactor3000\relax}%
\providecommand \BibitemShut  [1]{\csname bibitem#1\endcsname}%
\let\auto@bib@innerbib\@empty
\bibitem [{\citenamefont {Chitambar}\ and\ \citenamefont
  {Gour}(2019)}]{Chitambar_2019}%
  \BibitemOpen
  \bibfield  {author} {\bibinfo {author} {\bibfnamefont {E.}~\bibnamefont
  {Chitambar}}\ and\ \bibinfo {author} {\bibfnamefont {G.}~\bibnamefont
  {Gour}},\ }\href {\doibase 10.1103/RevModPhys.91.025001} {\bibfield
  {journal} {\bibinfo  {journal} {Rev. Mod. Phys.}\ }\textbf {\bibinfo {volume}
  {91}},\ \bibinfo {pages} {025001} (\bibinfo {year} {2019})}\BibitemShut
  {NoStop}%
\bibitem [{\citenamefont {Horodecki}\ \emph {et~al.}(2009)\citenamefont
  {Horodecki}, \citenamefont {Horodecki}, \citenamefont {Horodecki},\ and\
  \citenamefont {Horodecki}}]{Horodecki_2009}%
  \BibitemOpen
  \bibfield  {author} {\bibinfo {author} {\bibfnamefont {R.}~\bibnamefont
  {Horodecki}}, \bibinfo {author} {\bibfnamefont {P.}~\bibnamefont
  {Horodecki}}, \bibinfo {author} {\bibfnamefont {M.}~\bibnamefont
  {Horodecki}}, \ and\ \bibinfo {author} {\bibfnamefont {K.}~\bibnamefont
  {Horodecki}},\ }\href {\doibase 10.1103/revmodphys.81.865} {\bibfield
  {journal} {\bibinfo  {journal} {Rev. Mod. Phys.}\ }\textbf {\bibinfo {volume}
  {81}},\ \bibinfo {pages} {865} (\bibinfo {year} {2009})}\BibitemShut
  {NoStop}%
\bibitem [{\citenamefont {Goold}\ \emph {et~al.}(2016)\citenamefont {Goold},
  \citenamefont {Huber}, \citenamefont {Riera}, \citenamefont {del Rio},\ and\
  \citenamefont {Skrzypczyk}}]{Goold_2016}%
  \BibitemOpen
  \bibfield  {author} {\bibinfo {author} {\bibfnamefont {J.}~\bibnamefont
  {Goold}}, \bibinfo {author} {\bibfnamefont {M.}~\bibnamefont {Huber}},
  \bibinfo {author} {\bibfnamefont {A.}~\bibnamefont {Riera}}, \bibinfo
  {author} {\bibfnamefont {L.}~\bibnamefont {del Rio}}, \ and\ \bibinfo
  {author} {\bibfnamefont {P.}~\bibnamefont {Skrzypczyk}},\ }\href {\doibase
  10.1088/1751-8113/49/14/143001} {\bibfield  {journal} {\bibinfo  {journal}
  {J. Phys. A Math}\ }\textbf {\bibinfo {volume} {49}},\ \bibinfo {pages}
  {143001} (\bibinfo {year} {2016})}\BibitemShut {NoStop}%
\bibitem [{\citenamefont {Horodecki}\ and\ \citenamefont
  {Oppenheim}(2013)}]{Horodecki2013}%
  \BibitemOpen
  \bibfield  {author} {\bibinfo {author} {\bibfnamefont {M.}~\bibnamefont
  {Horodecki}}\ and\ \bibinfo {author} {\bibfnamefont {J.}~\bibnamefont
  {Oppenheim}},\ }\href {\doibase 10.1038/ncomms3059} {\bibfield  {journal}
  {\bibinfo  {journal} {Nat. Commun.}\ }\textbf {\bibinfo {volume} {4}},\
  \bibinfo {pages} {2059} (\bibinfo {year} {2013})}\BibitemShut {NoStop}%
\bibitem [{\citenamefont {Lostaglio}(2019)}]{Lostaglio_2019}%
  \BibitemOpen
  \bibfield  {author} {\bibinfo {author} {\bibfnamefont {M.}~\bibnamefont
  {Lostaglio}},\ }\href {\doibase 10.1088/1361-6633/ab46e5} {\bibfield
  {journal} {\bibinfo  {journal} {Rep. Prog. Phys.}\ }\textbf {\bibinfo
  {volume} {82}},\ \bibinfo {pages} {114001} (\bibinfo {year}
  {2019})}\BibitemShut {NoStop}%
\bibitem [{\citenamefont {Skrzypczyk}\ and\ \citenamefont
  {Linden}(2019)}]{Skrzypczyk_2019}%
  \BibitemOpen
  \bibfield  {author} {\bibinfo {author} {\bibfnamefont {P.}~\bibnamefont
  {Skrzypczyk}}\ and\ \bibinfo {author} {\bibfnamefont {N.}~\bibnamefont
  {Linden}},\ }\href {\doibase 10.1103/PhysRevLett.122.140403} {\bibfield
  {journal} {\bibinfo  {journal} {Phys. Rev. Lett.}\ }\textbf {\bibinfo
  {volume} {122}},\ \bibinfo {pages} {140403} (\bibinfo {year}
  {2019})}\BibitemShut {NoStop}%
\bibitem [{\citenamefont {Gour}\ and\ \citenamefont
  {Scandolo}(2020{\natexlab{a}})}]{gour2020dynamical}%
  \BibitemOpen
  \bibfield  {author} {\bibinfo {author} {\bibfnamefont {G.}~\bibnamefont
  {Gour}}\ and\ \bibinfo {author} {\bibfnamefont {C.~M.}\ \bibnamefont
  {Scandolo}},\ }\href@noop {} {} (\bibinfo {year} {2020}{\natexlab{a}}),\
  \Eprint {http://arxiv.org/abs/2101.01552} {arXiv:2101.01552 [quant-ph]}
  \BibitemShut {NoStop}%
\bibitem [{\citenamefont {Gour}\ and\ \citenamefont
  {Scandolo}(2020{\natexlab{b}})}]{Gour_2020}%
  \BibitemOpen
  \bibfield  {author} {\bibinfo {author} {\bibfnamefont {G.}~\bibnamefont
  {Gour}}\ and\ \bibinfo {author} {\bibfnamefont {C.~M.}\ \bibnamefont
  {Scandolo}},\ }\href {\doibase 10.1103/PhysRevLett.125.180505} {\bibfield
  {journal} {\bibinfo  {journal} {Phys. Rev. Lett.}\ }\textbf {\bibinfo
  {volume} {125}},\ \bibinfo {pages} {180505} (\bibinfo {year}
  {2020}{\natexlab{b}})}\BibitemShut {NoStop}%
\bibitem [{\citenamefont {Ji}\ and\ \citenamefont
  {Chitambar}(2021)}]{ji2021convertibility}%
  \BibitemOpen
  \bibfield  {author} {\bibinfo {author} {\bibfnamefont {K.}~\bibnamefont
  {Ji}}\ and\ \bibinfo {author} {\bibfnamefont {E.}~\bibnamefont {Chitambar}},\
  }\href@noop {} {} (\bibinfo {year} {2021}),\ \Eprint
  {http://arxiv.org/abs/2112.06906} {arXiv:2112.06906 [quant-ph]} \BibitemShut
  {NoStop}%
\bibitem [{\citenamefont {Hsieh}(2020)}]{Hsieh_2020}%
  \BibitemOpen
  \bibfield  {author} {\bibinfo {author} {\bibfnamefont {C.-Y.}\ \bibnamefont
  {Hsieh}},\ }\href {\doibase 10.22331/q-2020-03-19-244} {\bibfield  {journal}
  {\bibinfo  {journal} {Quantum}\ }\textbf {\bibinfo {volume} {4}},\ \bibinfo
  {pages} {244} (\bibinfo {year} {2020})}\BibitemShut {NoStop}%
\bibitem [{\citenamefont {Saxena}\ \emph {et~al.}(2020)\citenamefont {Saxena},
  \citenamefont {Chitambar},\ and\ \citenamefont {Gour}}]{Saxena2020}%
  \BibitemOpen
  \bibfield  {author} {\bibinfo {author} {\bibfnamefont {G.}~\bibnamefont
  {Saxena}}, \bibinfo {author} {\bibfnamefont {E.}~\bibnamefont {Chitambar}}, \
  and\ \bibinfo {author} {\bibfnamefont {G.}~\bibnamefont {Gour}},\ }\href
  {\doibase 10.1103/PhysRevResearch.2.023298} {\bibfield  {journal} {\bibinfo
  {journal} {Phys. Rev. Res.}\ }\textbf {\bibinfo {volume} {2}},\ \bibinfo
  {pages} {023298} (\bibinfo {year} {2020})}\BibitemShut {NoStop}%
\bibitem [{\citenamefont {Liu}\ and\ \citenamefont
  {Winter}(2019)}]{liu2019resource}%
  \BibitemOpen
  \bibfield  {author} {\bibinfo {author} {\bibfnamefont {Z.-W.}\ \bibnamefont
  {Liu}}\ and\ \bibinfo {author} {\bibfnamefont {A.}~\bibnamefont {Winter}},\
  }\href@noop {} {} (\bibinfo {year} {2019}),\ \Eprint
  {http://arxiv.org/abs/1904.04201} {arXiv:1904.04201 [quant-ph.]} \BibitemShut
  {NoStop}%
\bibitem [{\citenamefont {Liu}\ and\ \citenamefont {Yuan}(2020)}]{Liu2020}%
  \BibitemOpen
  \bibfield  {author} {\bibinfo {author} {\bibfnamefont {Y.}~\bibnamefont
  {Liu}}\ and\ \bibinfo {author} {\bibfnamefont {X.}~\bibnamefont {Yuan}},\
  }\href {\doibase 10.1103/PhysRevResearch.2.012035} {\bibfield  {journal}
  {\bibinfo  {journal} {Phys. Rev. Res.}\ }\textbf {\bibinfo {volume} {2}},\
  \bibinfo {pages} {012035(R)} (\bibinfo {year} {2020})}\BibitemShut {NoStop}%
\bibitem [{\citenamefont {Theurer}\ \emph {et~al.}(2019)\citenamefont
  {Theurer}, \citenamefont {Egloff}, \citenamefont {Zhang},\ and\ \citenamefont
  {Plenio}}]{Theurer2019}%
  \BibitemOpen
  \bibfield  {author} {\bibinfo {author} {\bibfnamefont {T.}~\bibnamefont
  {Theurer}}, \bibinfo {author} {\bibfnamefont {D.}~\bibnamefont {Egloff}},
  \bibinfo {author} {\bibfnamefont {L.}~\bibnamefont {Zhang}}, \ and\ \bibinfo
  {author} {\bibfnamefont {M.~B.}\ \bibnamefont {Plenio}},\ }\href {\doibase
  10.1103/PhysRevLett.122.190405} {\bibfield  {journal} {\bibinfo  {journal}
  {Phys. Rev. Lett.}\ }\textbf {\bibinfo {volume} {122}},\ \bibinfo {pages}
  {190405} (\bibinfo {year} {2019})}\BibitemShut {NoStop}%
\bibitem [{\citenamefont {Rosset}\ \emph {et~al.}(2018)\citenamefont {Rosset},
  \citenamefont {Buscemi},\ and\ \citenamefont {Liang}}]{Rosset2018}%
  \BibitemOpen
  \bibfield  {author} {\bibinfo {author} {\bibfnamefont {D.}~\bibnamefont
  {Rosset}}, \bibinfo {author} {\bibfnamefont {F.}~\bibnamefont {Buscemi}}, \
  and\ \bibinfo {author} {\bibfnamefont {Y.-C.}\ \bibnamefont {Liang}},\ }\href
  {\doibase 10.1103/PhysRevX.8.021033} {\bibfield  {journal} {\bibinfo
  {journal} {Phys. Rev. X}\ }\textbf {\bibinfo {volume} {8}},\ \bibinfo {pages}
  {021033} (\bibinfo {year} {2018})}\BibitemShut {NoStop}%
\bibitem [{\citenamefont {Kim}\ \emph {et~al.}(2021)\citenamefont {Kim},
  \citenamefont {Lee}, \citenamefont {Lami},\ and\ \citenamefont
  {Plenio}}]{Kim_2021}%
  \BibitemOpen
  \bibfield  {author} {\bibinfo {author} {\bibfnamefont {H.-J.}\ \bibnamefont
  {Kim}}, \bibinfo {author} {\bibfnamefont {S.}~\bibnamefont {Lee}}, \bibinfo
  {author} {\bibfnamefont {L.}~\bibnamefont {Lami}}, \ and\ \bibinfo {author}
  {\bibfnamefont {M.~B.}\ \bibnamefont {Plenio}},\ }\href {\doibase
  10.1109/tit.2021.3079938} {\bibfield  {journal} {\bibinfo  {journal} {IEEE
  Trans. Inf. Theory}\ }\textbf {\bibinfo {volume} {67}},\ \bibinfo {pages}
  {5339} (\bibinfo {year} {2021})}\BibitemShut {NoStop}%
\bibitem [{\citenamefont {Sparaciari}\ \emph {et~al.}(2020)\citenamefont
  {Sparaciari}, \citenamefont {del Rio}, \citenamefont {Scandolo},
  \citenamefont {Faist},\ and\ \citenamefont {Oppenheim}}]{Sparaciari_2020}%
  \BibitemOpen
  \bibfield  {author} {\bibinfo {author} {\bibfnamefont {C.}~\bibnamefont
  {Sparaciari}}, \bibinfo {author} {\bibfnamefont {L.}~\bibnamefont {del Rio}},
  \bibinfo {author} {\bibfnamefont {C.~M.}\ \bibnamefont {Scandolo}}, \bibinfo
  {author} {\bibfnamefont {P.}~\bibnamefont {Faist}}, \ and\ \bibinfo {author}
  {\bibfnamefont {J.}~\bibnamefont {Oppenheim}},\ }\href {\doibase
  10.22331/q-2020-04-30-259} {\bibfield  {journal} {\bibinfo  {journal}
  {Quantum}\ }\textbf {\bibinfo {volume} {4}},\ \bibinfo {pages} {259}
  (\bibinfo {year} {2020})}\BibitemShut {NoStop}%
\bibitem [{\citenamefont {Horodecki}\ \emph {et~al.}(2003)\citenamefont
  {Horodecki}, \citenamefont {Horodecki},\ and\ \citenamefont
  {Oppenheim}}]{Horodecki_2003_Purity_RT}%
  \BibitemOpen
  \bibfield  {author} {\bibinfo {author} {\bibfnamefont {M.}~\bibnamefont
  {Horodecki}}, \bibinfo {author} {\bibfnamefont {P.}~\bibnamefont
  {Horodecki}}, \ and\ \bibinfo {author} {\bibfnamefont {J.}~\bibnamefont
  {Oppenheim}},\ }\href {\doibase 10.1103/PhysRevA.67.062104} {\bibfield
  {journal} {\bibinfo  {journal} {Phys. Rev. A}\ }\textbf {\bibinfo {volume}
  {67}},\ \bibinfo {pages} {062104} (\bibinfo {year} {2003})}\BibitemShut
  {NoStop}%
\bibitem [{\citenamefont {Gour}\ \emph {et~al.}(2015)\citenamefont {Gour},
  \citenamefont {Müller}, \citenamefont {Narasimhachar}, \citenamefont
  {Spekkens},\ and\ \citenamefont {Halpern}}]{Gour_2015}%
  \BibitemOpen
  \bibfield  {author} {\bibinfo {author} {\bibfnamefont {G.}~\bibnamefont
  {Gour}}, \bibinfo {author} {\bibfnamefont {M.~P.}\ \bibnamefont {Müller}},
  \bibinfo {author} {\bibfnamefont {V.}~\bibnamefont {Narasimhachar}}, \bibinfo
  {author} {\bibfnamefont {R.~W.}\ \bibnamefont {Spekkens}}, \ and\ \bibinfo
  {author} {\bibfnamefont {N.~Y.}\ \bibnamefont {Halpern}},\ }\href {\doibase
  10.1016/j.physrep.2015.04.003} {\bibfield  {journal} {\bibinfo  {journal}
  {Phys. Rep.}\ }\textbf {\bibinfo {volume} {583}},\ \bibinfo {pages} {1}
  (\bibinfo {year} {2015})}\BibitemShut {NoStop}%
\bibitem [{\citenamefont {Streltsov}\ \emph {et~al.}(2018)\citenamefont
  {Streltsov}, \citenamefont {Kampermann}, \citenamefont {Wölk}, \citenamefont
  {Gessner},\ and\ \citenamefont {Bru{\ss}}}]{Streltsov_2018}%
  \BibitemOpen
  \bibfield  {author} {\bibinfo {author} {\bibfnamefont {A.}~\bibnamefont
  {Streltsov}}, \bibinfo {author} {\bibfnamefont {H.}~\bibnamefont
  {Kampermann}}, \bibinfo {author} {\bibfnamefont {S.}~\bibnamefont {Wölk}},
  \bibinfo {author} {\bibfnamefont {M.}~\bibnamefont {Gessner}}, \ and\
  \bibinfo {author} {\bibfnamefont {D.}~\bibnamefont {Bru{\ss}}},\ }\href
  {\doibase 10.1088/1367-2630/aac484} {\bibfield  {journal} {\bibinfo
  {journal} {New J. Phys.}\ }\textbf {\bibinfo {volume} {20}},\ \bibinfo
  {pages} {053058} (\bibinfo {year} {2018})}\BibitemShut {NoStop}%
\bibitem [{\citenamefont {Faist}\ \emph {et~al.}(2015)\citenamefont {Faist},
  \citenamefont {Oppenheim},\ and\ \citenamefont {Renner}}]{Faist_2015}%
  \BibitemOpen
  \bibfield  {author} {\bibinfo {author} {\bibfnamefont {P.}~\bibnamefont
  {Faist}}, \bibinfo {author} {\bibfnamefont {J.}~\bibnamefont {Oppenheim}}, \
  and\ \bibinfo {author} {\bibfnamefont {R.}~\bibnamefont {Renner}},\ }\href
  {\doibase 10.1088/1367-2630/17/4/043003} {\bibfield  {journal} {\bibinfo
  {journal} {New Journal of Physics}\ }\textbf {\bibinfo {volume} {17}},\
  \bibinfo {pages} {043003} (\bibinfo {year} {2015})}\BibitemShut {NoStop}%
\bibitem [{\citenamefont {Woods}\ and\ \citenamefont
  {Horodecki}(2023)}]{Mischa_23}%
  \BibitemOpen
  \bibfield  {author} {\bibinfo {author} {\bibfnamefont {M.~P.}\ \bibnamefont
  {Woods}}\ and\ \bibinfo {author} {\bibfnamefont {M.}~\bibnamefont
  {Horodecki}},\ }\href {\doibase 10.1103/PhysRevX.13.011016} {\bibfield
  {journal} {\bibinfo  {journal} {Phys. Rev. X}\ }\textbf {\bibinfo {volume}
  {13}},\ \bibinfo {pages} {011016} (\bibinfo {year} {2023})}\BibitemShut
  {NoStop}%
\bibitem [{\citenamefont {Lostaglio}\ \emph {et~al.}(2018)\citenamefont
  {Lostaglio}, \citenamefont {Alhambra},\ and\ \citenamefont
  {Perry}}]{Lostaglio_2018}%
  \BibitemOpen
  \bibfield  {author} {\bibinfo {author} {\bibfnamefont {M.}~\bibnamefont
  {Lostaglio}}, \bibinfo {author} {\bibfnamefont {{\'{A} }.~M.}\ \bibnamefont
  {Alhambra}}, \ and\ \bibinfo {author} {\bibfnamefont {C.}~\bibnamefont
  {Perry}},\ }\href {\doibase 10.22331/q-2018-02-08-52} {\bibfield  {journal}
  {\bibinfo  {journal} {Quantum}\ }\textbf {\bibinfo {volume} {2}},\ \bibinfo
  {pages} {52} (\bibinfo {year} {2018})}\BibitemShut {NoStop}%
\bibitem [{\citenamefont {Perry}\ \emph {et~al.}(2018)\citenamefont {Perry},
  \citenamefont {\'Cwikli\'nski}, \citenamefont {Anders}, \citenamefont
  {Horodecki},\ and\ \citenamefont {Oppenheim}}]{perry_2018}%
  \BibitemOpen
  \bibfield  {author} {\bibinfo {author} {\bibfnamefont {C.}~\bibnamefont
  {Perry}}, \bibinfo {author} {\bibfnamefont {P.}~\bibnamefont
  {\'Cwikli\'nski}}, \bibinfo {author} {\bibfnamefont {J.}~\bibnamefont
  {Anders}}, \bibinfo {author} {\bibfnamefont {M.}~\bibnamefont {Horodecki}}, \
  and\ \bibinfo {author} {\bibfnamefont {J.}~\bibnamefont {Oppenheim}},\ }\href
  {\doibase 10.1103/PhysRevX.8.041049} {\bibfield  {journal} {\bibinfo
  {journal} {Phys. Rev. X}\ }\textbf {\bibinfo {volume} {8}},\ \bibinfo {pages}
  {041049} (\bibinfo {year} {2018})}\BibitemShut {NoStop}%
\bibitem [{\citenamefont {Faist}\ \emph {et~al.}(2019)\citenamefont {Faist},
  \citenamefont {Berta},\ and\ \citenamefont {Brand\~ao}}]{Faist_2019}%
  \BibitemOpen
  \bibfield  {author} {\bibinfo {author} {\bibfnamefont {P.}~\bibnamefont
  {Faist}}, \bibinfo {author} {\bibfnamefont {M.}~\bibnamefont {Berta}}, \ and\
  \bibinfo {author} {\bibfnamefont {F.}~\bibnamefont {Brand\~ao}},\ }\href
  {\doibase 10.1103/PhysRevLett.122.200601} {\bibfield  {journal} {\bibinfo
  {journal} {Phys. Rev. Lett.}\ }\textbf {\bibinfo {volume} {122}},\ \bibinfo
  {pages} {200601} (\bibinfo {year} {2019})}\BibitemShut {NoStop}%
\bibitem [{\citenamefont {Serafini}\ \emph {et~al.}(2020)\citenamefont
  {Serafini}, \citenamefont {Lostaglio}, \citenamefont {Longden}, \citenamefont
  {Shackerley-Bennett}, \citenamefont {Hsieh},\ and\ \citenamefont
  {Adesso}}]{Serafini_2020}%
  \BibitemOpen
  \bibfield  {author} {\bibinfo {author} {\bibfnamefont {A.}~\bibnamefont
  {Serafini}}, \bibinfo {author} {\bibfnamefont {M.}~\bibnamefont {Lostaglio}},
  \bibinfo {author} {\bibfnamefont {S.}~\bibnamefont {Longden}}, \bibinfo
  {author} {\bibfnamefont {U.}~\bibnamefont {Shackerley-Bennett}}, \bibinfo
  {author} {\bibfnamefont {C.-Y.}\ \bibnamefont {Hsieh}}, \ and\ \bibinfo
  {author} {\bibfnamefont {G.}~\bibnamefont {Adesso}},\ }\href {\doibase
  10.1103/PhysRevLett.124.010602} {\bibfield  {journal} {\bibinfo  {journal}
  {Phys. Rev. Lett.}\ }\textbf {\bibinfo {volume} {124}},\ \bibinfo {pages}
  {010602} (\bibinfo {year} {2020})}\BibitemShut {NoStop}%
\bibitem [{\citenamefont {Narasimhachar}\ \emph {et~al.}(2021)\citenamefont
  {Narasimhachar}, \citenamefont {Assad}, \citenamefont {Binder}, \citenamefont
  {Thompson}, \citenamefont {Yadin},\ and\ \citenamefont
  {Gu}}]{Narasimhachar_2021}%
  \BibitemOpen
  \bibfield  {author} {\bibinfo {author} {\bibfnamefont {V.}~\bibnamefont
  {Narasimhachar}}, \bibinfo {author} {\bibfnamefont {S.}~\bibnamefont
  {Assad}}, \bibinfo {author} {\bibfnamefont {F.~C.}\ \bibnamefont {Binder}},
  \bibinfo {author} {\bibfnamefont {J.}~\bibnamefont {Thompson}}, \bibinfo
  {author} {\bibfnamefont {B.}~\bibnamefont {Yadin}}, \ and\ \bibinfo {author}
  {\bibfnamefont {M.}~\bibnamefont {Gu}},\ }\href
  {https://doi.org/10.1038%2Fs41534-020-00342-6} {\bibfield  {journal}
  {\bibinfo  {journal} {npj Quantum Information}\ }\textbf {\bibinfo {volume}
  {7}} (\bibinfo {year} {2021})}\BibitemShut {NoStop}%
\bibitem [{\citenamefont {Nielsen}\ and\ \citenamefont
  {Chuang}(2010)}]{nielsen_chuang_2010}%
  \BibitemOpen
  \bibfield  {author} {\bibinfo {author} {\bibfnamefont {M.~A.}\ \bibnamefont
  {Nielsen}}\ and\ \bibinfo {author} {\bibfnamefont {I.~L.}\ \bibnamefont
  {Chuang}},\ }\href {\doibase 10.1017/CBO9780511976667} {\emph {\bibinfo
  {title} {Quantum Computation and Quantum Information: 10th Anniversary
  Edition}}}\ (\bibinfo  {publisher} {Cambridge University Press},\ \bibinfo
  {year} {2010})\BibitemShut {NoStop}%
\bibitem [{\citenamefont {Heshami}\ \emph {et~al.}(2016)\citenamefont
  {Heshami}, \citenamefont {England}, \citenamefont {Humphreys}, \citenamefont
  {Bustard}, \citenamefont {Acosta}, \citenamefont {Nunn},\ and\ \citenamefont
  {Sussman}}]{Heshami_2016}%
  \BibitemOpen
  \bibfield  {author} {\bibinfo {author} {\bibfnamefont {K.}~\bibnamefont
  {Heshami}}, \bibinfo {author} {\bibfnamefont {D.~G.}\ \bibnamefont
  {England}}, \bibinfo {author} {\bibfnamefont {P.~C.}\ \bibnamefont
  {Humphreys}}, \bibinfo {author} {\bibfnamefont {P.~J.}\ \bibnamefont
  {Bustard}}, \bibinfo {author} {\bibfnamefont {V.~M.}\ \bibnamefont {Acosta}},
  \bibinfo {author} {\bibfnamefont {J.}~\bibnamefont {Nunn}}, \ and\ \bibinfo
  {author} {\bibfnamefont {B.~J.}\ \bibnamefont {Sussman}},\ }\href {\doibase
  10.1080/09500340.2016.1148212} {\bibfield  {journal} {\bibinfo  {journal} {J.
  Mod. Opt.}\ }\textbf {\bibinfo {volume} {63}},\ \bibinfo {pages} {2005}
  (\bibinfo {year} {2016})}\BibitemShut {NoStop}%
\bibitem [{\citenamefont {Chiribella}\ \emph {et~al.}(2008)\citenamefont
  {Chiribella}, \citenamefont {D{\textquotesingle}Ariano},\ and\ \citenamefont
  {Perinotti}}]{Chiribella_2008}%
  \BibitemOpen
  \bibfield  {author} {\bibinfo {author} {\bibfnamefont {G.}~\bibnamefont
  {Chiribella}}, \bibinfo {author} {\bibfnamefont {G.~M.}\ \bibnamefont
  {D{\textquotesingle}Ariano}}, \ and\ \bibinfo {author} {\bibfnamefont
  {P.}~\bibnamefont {Perinotti}},\ }\href {\doibase 10.1209/0295-5075/83/30004}
  {\bibfield  {journal} {\bibinfo  {journal} {{EPL} (Europhysics Letters)}\
  }\textbf {\bibinfo {volume} {83}},\ \bibinfo {pages} {30004} (\bibinfo {year}
  {2008})}\BibitemShut {NoStop}%
\bibitem [{sup()}]{suppMat}%
  \BibitemOpen
  \href@noop {} {}\bibinfo {note} {See Supplemental Material for details, which
  includes the following refrences}\BibitemShut {NoStop}%
\bibitem [{\citenamefont {Bapat}\ and\ \citenamefont
  {Raghavan}(1997)}]{bapat_raghavan_1997}%
  \BibitemOpen
  \bibfield  {author} {\bibinfo {author} {\bibfnamefont {R.~B.}\ \bibnamefont
  {Bapat}}\ and\ \bibinfo {author} {\bibfnamefont {T.~E.~S.}\ \bibnamefont
  {Raghavan}},\ }\enquote {\bibinfo {title} {Doubly stochastic matrices},}\ in\
  \href {\doibase 10.1017/CBO9780511529979.003} {\emph {\bibinfo {booktitle}
  {Nonnegative Matrices and Applications}}},\ \bibinfo {series and number}
  {Encyclopedia of Mathematics and its Applications}\ (\bibinfo  {publisher}
  {Cambridge University Press},\ \bibinfo {year} {1997})\ p.\ \bibinfo {pages}
  {59–114}\BibitemShut {NoStop}%
\bibitem [{\citenamefont {Watrous}(2018)}]{watrous_2018}%
  \BibitemOpen
  \bibfield  {author} {\bibinfo {author} {\bibfnamefont {J.}~\bibnamefont
  {Watrous}},\ }\href {\doibase 10.1017/9781316848142} {\emph {\bibinfo {title}
  {The Theory of Quantum Information}}}\ (\bibinfo  {publisher} {Cambridge
  University Press},\ \bibinfo {year} {2018})\BibitemShut {NoStop}%
\bibitem [{\citenamefont {Bae}\ and\ \citenamefont {Kwek}(2015)}]{Bae_2015}%
  \BibitemOpen
  \bibfield  {author} {\bibinfo {author} {\bibfnamefont {J.}~\bibnamefont
  {Bae}}\ and\ \bibinfo {author} {\bibfnamefont {L.-C.}\ \bibnamefont {Kwek}},\
  }\href {\doibase 10.1088/1751-8113/48/8/083001} {\bibfield  {journal}
  {\bibinfo  {journal} {J. Phys. A Math. Theor.}\ }\textbf {\bibinfo {volume}
  {48}},\ \bibinfo {pages} {083001} (\bibinfo {year} {2015})}\BibitemShut
  {NoStop}%
\bibitem [{\citenamefont {Bennett}\ \emph {et~al.}(1999)\citenamefont
  {Bennett}, \citenamefont {Shor}, \citenamefont {Smolin},\ and\ \citenamefont
  {Thapliyal}}]{PhysRevLett.83.3081}%
  \BibitemOpen
  \bibfield  {author} {\bibinfo {author} {\bibfnamefont {C.~H.}\ \bibnamefont
  {Bennett}}, \bibinfo {author} {\bibfnamefont {P.~W.}\ \bibnamefont {Shor}},
  \bibinfo {author} {\bibfnamefont {J.~A.}\ \bibnamefont {Smolin}}, \ and\
  \bibinfo {author} {\bibfnamefont {A.~V.}\ \bibnamefont {Thapliyal}},\ }\href
  {\doibase 10.1103/PhysRevLett.83.3081} {\bibfield  {journal} {\bibinfo
  {journal} {Phys. Rev. Lett.}\ }\textbf {\bibinfo {volume} {83}},\ \bibinfo
  {pages} {3081} (\bibinfo {year} {1999})}\BibitemShut {NoStop}%
\bibitem [{\citenamefont {Lloyd}(1997)}]{PhysRevA.55.1613}%
  \BibitemOpen
  \bibfield  {author} {\bibinfo {author} {\bibfnamefont {S.}~\bibnamefont
  {Lloyd}},\ }\href {\doibase 10.1103/PhysRevA.55.1613} {\bibfield  {journal}
  {\bibinfo  {journal} {Phys. Rev. A}\ }\textbf {\bibinfo {volume} {55}},\
  \bibinfo {pages} {1613} (\bibinfo {year} {1997})}\BibitemShut {NoStop}%
\bibitem [{\citenamefont {Devetak}(2005)}]{1377491}%
  \BibitemOpen
  \bibfield  {author} {\bibinfo {author} {\bibfnamefont {I.}~\bibnamefont
  {Devetak}},\ }\href {\doibase 10.1109/TIT.2004.839515} {\bibfield  {journal}
  {\bibinfo  {journal} {IEEE Trans. Inf. Theory}\ }\textbf {\bibinfo {volume}
  {51}},\ \bibinfo {pages} {44} (\bibinfo {year} {2005})}\BibitemShut {NoStop}%
\bibitem [{\citenamefont {Gyongyosi}\ \emph {et~al.}(2018)\citenamefont
  {Gyongyosi}, \citenamefont {Imre},\ and\ \citenamefont {Nguyen}}]{8242350}%
  \BibitemOpen
  \bibfield  {author} {\bibinfo {author} {\bibfnamefont {L.}~\bibnamefont
  {Gyongyosi}}, \bibinfo {author} {\bibfnamefont {S.}~\bibnamefont {Imre}}, \
  and\ \bibinfo {author} {\bibfnamefont {H.~V.}\ \bibnamefont {Nguyen}},\
  }\href {\doibase 10.1109/COMST.2017.2786748} {\bibfield  {journal} {\bibinfo
  {journal} {IEEE Commun. Surv. Tutor.}\ }\textbf {\bibinfo {volume} {20}},\
  \bibinfo {pages} {1149} (\bibinfo {year} {2018})}\BibitemShut {NoStop}%
\bibitem [{\citenamefont {Holevo}(2002)}]{holevo2002remarks}%
  \BibitemOpen
  \bibfield  {author} {\bibinfo {author} {\bibfnamefont {A.~S.}\ \bibnamefont
  {Holevo}},\ }\href@noop {} {} (\bibinfo {year} {2002}),\ \Eprint
  {http://arxiv.org/abs/quant-ph/0212025} {arXiv:quant-ph/0212025 [quant-ph]}
  \BibitemShut {NoStop}%
\bibitem [{\citenamefont {Takagi}\ \emph {et~al.}(2020)\citenamefont {Takagi},
  \citenamefont {Wang},\ and\ \citenamefont {Hayashi}}]{Takagi_2020}%
  \BibitemOpen
  \bibfield  {author} {\bibinfo {author} {\bibfnamefont {R.}~\bibnamefont
  {Takagi}}, \bibinfo {author} {\bibfnamefont {K.}~\bibnamefont {Wang}}, \ and\
  \bibinfo {author} {\bibfnamefont {M.}~\bibnamefont {Hayashi}},\ }\href
  {\doibase 10.1103/PhysRevLett.124.120502} {\bibfield  {journal} {\bibinfo
  {journal} {Phys. Rev. Lett.}\ }\textbf {\bibinfo {volume} {124}},\ \bibinfo
  {pages} {120502} (\bibinfo {year} {2020})}\BibitemShut {NoStop}%
\bibitem [{\citenamefont {Hohenwarter}\ \emph {et~al.}(2013)\citenamefont
  {Hohenwarter}, \citenamefont {Borcherds}, \citenamefont {Ancsin},
  \citenamefont {Bencze}, \citenamefont {Blossier}, \citenamefont {Delobelle},
  \citenamefont {Denizet}, \citenamefont {\'Eli\'as}, \citenamefont {Fekete},
  \citenamefont {G\'al}, \citenamefont {Kone\v{c}n\'y}, \citenamefont
  {Kov\'acs}, \citenamefont {Lizelfelner}, \citenamefont {Parisse},\ and\
  \citenamefont {Sturr}}]{gg2}%
  \BibitemOpen
  \bibfield  {author} {\bibinfo {author} {\bibfnamefont {M.}~\bibnamefont
  {Hohenwarter}}, \bibinfo {author} {\bibfnamefont {M.}~\bibnamefont
  {Borcherds}}, \bibinfo {author} {\bibfnamefont {G.}~\bibnamefont {Ancsin}},
  \bibinfo {author} {\bibfnamefont {B.}~\bibnamefont {Bencze}}, \bibinfo
  {author} {\bibfnamefont {M.}~\bibnamefont {Blossier}}, \bibinfo {author}
  {\bibfnamefont {A.}~\bibnamefont {Delobelle}}, \bibinfo {author}
  {\bibfnamefont {C.}~\bibnamefont {Denizet}}, \bibinfo {author} {\bibfnamefont
  {J.}~\bibnamefont {\'Eli\'as}}, \bibinfo {author} {\bibfnamefont
  {A.}~\bibnamefont {Fekete}}, \bibinfo {author} {\bibfnamefont
  {L.}~\bibnamefont {G\'al}}, \bibinfo {author} {\bibfnamefont
  {Z.}~\bibnamefont {Kone\v{c}n\'y}}, \bibinfo {author} {\bibfnamefont
  {Z.}~\bibnamefont {Kov\'acs}}, \bibinfo {author} {\bibfnamefont
  {S.}~\bibnamefont {Lizelfelner}}, \bibinfo {author} {\bibfnamefont
  {B.}~\bibnamefont {Parisse}}, \ and\ \bibinfo {author} {\bibfnamefont
  {G.}~\bibnamefont {Sturr}},\ }\href@noop {} {\enquote {\bibinfo {title}
  {{G}eo{G}ebra 4.4},}\ } (\bibinfo {year} {2013}),\ \bibinfo {note}
  {\url{http://www.geogebra.org}}\BibitemShut {NoStop}%
\bibitem [{\citenamefont {Bengtsson}\ \emph {et~al.}(2005)\citenamefont
  {Bengtsson}, \citenamefont {Ericsson}, \citenamefont {Ku{\'{s}}},
  \citenamefont {Tadej},\ and\ \citenamefont
  {{\.{Z}}yczkowski}}]{Bengtsson_2005}%
  \BibitemOpen
  \bibfield  {author} {\bibinfo {author} {\bibfnamefont {I.}~\bibnamefont
  {Bengtsson}}, \bibinfo {author} {\bibfnamefont {{\AA}.}~\bibnamefont
  {Ericsson}}, \bibinfo {author} {\bibfnamefont {M.}~\bibnamefont {Ku{\'{s}}}},
  \bibinfo {author} {\bibfnamefont {W.}~\bibnamefont {Tadej}}, \ and\ \bibinfo
  {author} {\bibfnamefont {K.}~\bibnamefont {{\.{Z}}yczkowski}},\ }\href
  {\doibase 10.1007/s00220-005-1392-8} {\bibfield  {journal} {\bibinfo
  {journal} {Commun. Math. Phys.}\ }\textbf {\bibinfo {volume} {259}},\
  \bibinfo {pages} {307} (\bibinfo {year} {2005})}\BibitemShut {NoStop}%
\bibitem [{\citenamefont {Li}\ and\ \citenamefont {Choi}(2023)}]{li2023unital}%
  \BibitemOpen
  \bibfield  {author} {\bibinfo {author} {\bibfnamefont {C.-K.}\ \bibnamefont
  {Li}}\ and\ \bibinfo {author} {\bibfnamefont {M.-D.}\ \bibnamefont {Choi}},\
  }\href@noop {} {} (\bibinfo {year} {2023}),\ \Eprint
  {http://arxiv.org/abs/2301.01358} {arXiv:2301.01358 [quant-ph.]} \BibitemShut
  {NoStop}%
\bibitem [{\citenamefont {Hsieh}(2021)}]{Hsieh_2021_Pres}%
  \BibitemOpen
  \bibfield  {author} {\bibinfo {author} {\bibfnamefont {C.-Y.}\ \bibnamefont
  {Hsieh}},\ }\href {\doibase 10.1103/PRXQuantum.2.020318} {\bibfield
  {journal} {\bibinfo  {journal} {PRX Quantum}\ }\textbf {\bibinfo {volume}
  {2}},\ \bibinfo {pages} {020318} (\bibinfo {year} {2021})}\BibitemShut
  {NoStop}%
\bibitem [{\citenamefont {Horodecki}\ and\ \citenamefont
  {Horodecki}(1996)}]{Horodecki_1996}%
  \BibitemOpen
  \bibfield  {author} {\bibinfo {author} {\bibfnamefont {R.}~\bibnamefont
  {Horodecki}}\ and\ \bibinfo {author} {\bibfnamefont {M.}~\bibnamefont
  {Horodecki}},\ }\href {\doibase 10.1103/physreva.54.1838} {\bibfield
  {journal} {\bibinfo  {journal} {Phys. Rev. A}\ }\textbf {\bibinfo {volume}
  {54}},\ \bibinfo {pages} {1838} (\bibinfo {year} {1996})}\BibitemShut
  {NoStop}%
\bibitem [{\citenamefont {Banerjee}\ and\ \citenamefont
  {Roy}(2014)}]{banerjee2014linear}%
  \BibitemOpen
  \bibfield  {author} {\bibinfo {author} {\bibfnamefont {S.}~\bibnamefont
  {Banerjee}}\ and\ \bibinfo {author} {\bibfnamefont {A.}~\bibnamefont {Roy}},\
  }\href@noop {} {\emph {\bibinfo {title} {Linear algebra and matrix analysis
  for statistics}}}\ (\bibinfo  {publisher} {Crc Press},\ \bibinfo {year}
  {2014})\BibitemShut {NoStop}%
\bibitem [{\citenamefont {Amosov}\ \emph {et~al.}(2000)\citenamefont {Amosov},
  \citenamefont {Holevo},\ and\ \citenamefont {Werner}}]{amosov2000additivity}%
  \BibitemOpen
  \bibfield  {author} {\bibinfo {author} {\bibfnamefont {G.~G.}\ \bibnamefont
  {Amosov}}, \bibinfo {author} {\bibfnamefont {A.~S.}\ \bibnamefont {Holevo}},
  \ and\ \bibinfo {author} {\bibfnamefont {R.~F.}\ \bibnamefont {Werner}},\
  }\href@noop {} {} (\bibinfo {year} {2000}),\ \Eprint
  {http://arxiv.org/abs/math-ph/0003002} {arXiv:math-ph/0003002 [math-ph]}
  \BibitemShut {NoStop}%
\bibitem [{\citenamefont {Cortese}(2004)}]{PhysRevA.69.022302}%
  \BibitemOpen
  \bibfield  {author} {\bibinfo {author} {\bibfnamefont {J.}~\bibnamefont
  {Cortese}},\ }\href {\doibase 10.1103/PhysRevA.69.022302} {\bibfield
  {journal} {\bibinfo  {journal} {Phys. Rev. A}\ }\textbf {\bibinfo {volume}
  {69}},\ \bibinfo {pages} {022302} (\bibinfo {year} {2004})}\BibitemShut
  {NoStop}%
\bibitem [{\citenamefont {Siudzińska}(2020)}]{siudzinska_2020}%
  \BibitemOpen
  \bibfield  {author} {\bibinfo {author} {\bibfnamefont {K.}~\bibnamefont
  {Siudzińska}},\ }\href {https://dx.doi.org/10.1088dy/1751-8121/abb276}
  {\bibfield  {journal} {\bibinfo  {journal} {J. Phys. A: Math. Theor.}\
  }\textbf {\bibinfo {volume} {53}},\ \bibinfo {pages} {445301} (\bibinfo
  {year} {2020})}\BibitemShut {NoStop}%
\end{thebibliography}%

\appendix

\onecolumngrid

\section{Supplementary Material}

\section{\label{Appendix One}Supplementary Material A: The allowed operations of $\dynamicRT$}
The allowed operations of resource preservability theories, $\freeD$, presented in \cite{Hsieh_2020}, consist of a super-channel with a free channel in the memory system, see Fig.~\ref{fig1.5:freeOperations}.~a. When considering the underlying static resource theory to be the resource theory of informational non-equilibrium, the free super channels can be reformulated so that the memory system can be ignored. 

$\dynamicRT$ has only one free channel - the state preparation channel of the maximally mixed state, $\Lambda(\rho) = 1/d_{S} \textrm{Tr}(\rho) \mathbb{I}$. Therefore, an allowed operation $\Pi \in \freeD$, such that $\mathcal{M}=\Pi(\mathcal{N})$, where  $\mathcal{M} : \mathcal{K}_{in} \rightarrow \mathcal{K}_{out} \in \freeS, \mathcal{N}: \mathcal{H}_{in} \rightarrow \mathcal{H}_{out} \in \freeS$ can be written as 
\begin{equation}
    \mathcal{M} = \mathcal{E} \circ (\mathcal{N} \otimes \Lambda) \circ \mathcal{P}, \label{completeFreeOperation}
\end{equation}
where $\mathcal{P}: \mathcal{K}_{in} \rightarrow \mathcal{H}_{in} \otimes \mathcal{H}_{a}, ~ \mathcal{N} \otimes \Lambda: \mathcal{H}_{in} \otimes \mathcal{H}_{a} \rightarrow \mathcal{H}_{out} \otimes \mathcal{H}_{a}$ and $\mathcal{E}: \mathcal{H}_{out} \otimes \mathcal{H}_{a} \rightarrow \mathcal{K}_{out}$.

Consider a state $\rho \in \mathcal{K}_{in}$ input into the channel $\mathcal{M}$. The state after the application of $\mathcal{P}$ is $\sigma = \mathcal{P}(\rho) \in \mathcal{H}_{in} \otimes \mathcal{H}_{a}$. The memory system is then traced out by $\Lambda$, leaving $\sigma' = \textrm{Tr}_{a}( \sigma ) \in \mathcal{H}_{in}$. Anything stored in the memory space is then lost. A new channel can be defined as $\mathcal{P}' = \textrm{Tr}_{a}( \mathcal{P} ) : \mathcal{K}_{in} \rightarrow \mathcal{H}_{in}$, such that $\sigma' = \mathcal{P}'(\rho)$. This channel can be seen to be a unital channel still and hence is still a noisy operation.   

The state $\sigma'$ is then input into the channel $\mathcal{N}$ giving $\omega = \mathcal{N}(\sigma')$. A maximally mixed state is then appended to $\omega$ and it is input into a channel $\mathcal{E}$, see Fig.~\ref{fig1.5:freeOperations}.b. This appended maximally mixed state can be absorbed into the description of the channel $\mathcal{E}$ by considering it in the noisy operation formulation, as seen in Eq.~\eqref{noisyOperations}. We can then define a new unital channel $\mathcal{E}' : \mathcal{H}_{out} \rightarrow \mathcal{K}_{out}$, see Fig.~\ref{fig1.5:freeOperations}.c. 

Given we are considering channels with equal input and output dimension, such that $\mathcal{H}_{in}=\mathcal{H}_{out}$, all spaces become equal, 
\begin{equation}
    \mathcal{K}_{in} = \mathcal{K}_{out} = \mathcal{H}_{in}.
\end{equation}

The allowed operation given in Eq.~\eqref{completeFreeOperation} can now be written as 
\begin{equation}
    \mathcal{M} = \mathcal{E}' \circ \mathcal{N} \circ \mathcal{P}', 
\end{equation}
where $\mathcal{P',N,E'}: \mathcal{H}_{in} \rightarrow \mathcal{H}_{in} \in \freeS$. The channels can then be relabelled to $\mathcal{E,P}$. 

When a random variable, $\kappa$, is shared between the pre-processing and post-processing channels with some probability $p_{\kappa}$, the channels applied depend on $\kappa$. The allowed operations become 
\begin{equation}
    \mathcal{M} = \sum_{\kappa} p_{\kappa} \mathcal{E}_{\kappa} \circ \mathcal{N} \circ \mathcal{P}_{\kappa},
\end{equation}
where $\mathcal{E}_{\kappa}, \mathcal{P}_{\kappa} \in \freeS$. $\freeD$ now forms a convex set. 

The Choi-state of a channel $\mathcal{M}$ is then, 
\begin{equation}
    \begin{split}
        \choi{M} &= \big( \mathcal{I} \otimes \sum_{\kappa} p_{\kappa} \mathcal{E}_{\kappa} \circ \mathcal{N} \circ \mathcal{P}_{\kappa} \big) \ketbra{\Phi_{00}} \\
        &= \sum_{\kappa} p_{\kappa} (\mathcal{I} \otimes \mathcal{E}_{\kappa})(\mathcal{I} \otimes \mathcal{N})(\mathcal{I} \otimes \mathcal{P}_{\kappa})\ketbra{\Phi_{00}} \\
        &= \sum_{\kappa} p_{\kappa} (\mathcal{P}^{t}_{\kappa} \otimes \mathcal{I})(\mathcal{I} \otimes \mathcal{E}_{\kappa})(\mathcal{I} \otimes \mathcal{N})\ketbra{\Phi_{00}} \\
        &= \sum_{\kappa} p_{\kappa} (\Tilde{\mathcal{P}}_{\kappa} \otimes \mathcal{E}_{\kappa} )(\choi{N}), 
    \end{split}
\end{equation}
where we define $\mathcal{P}^{t} := \sum_{i} K_{i}^{t}(\cdot) K_{i}^{\dagger, t}$, which is the transpose of a channel $\mathcal{P}$ with Kraus representation $\mathcal{P} = \sum_{i} K_{i}(\cdot)K_{i}^{\dagger}$, where $(\cdot)^{t}$ is the transpose operation. In the penultimate line we have relabelled $\mathcal{P}^{t}_{\kappa} \rightarrow \Tilde{\mathcal{P}}_{\kappa}$ given the transpose of a unital channel is still a unital channel. This gives the Choi-state formulation of the allowed operations as defined in Eq.~\eqref{Choi State allowed operations} in the main text. 
\begin{figure}
    \centering
    \includegraphics[scale=0.5]{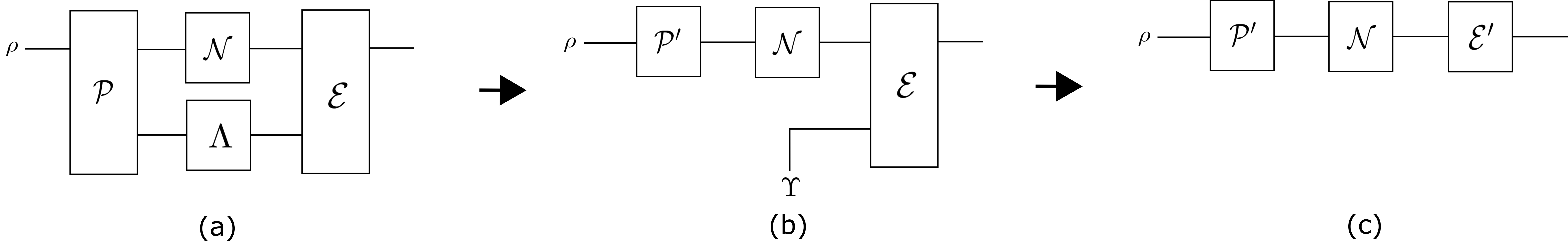}
    \caption{A quantum circuit diagram to depict how the allowed operations of resource preservability theories, defined in \cite{Hsieh_2020}, become the allowed operations of the dynamical resource theory of informational non-equilibrium defined in the main text.}
    \label{fig1.5:freeOperations}
\end{figure}

\section{\label{Appendix Two}Supplementary Material B: Qubit Unital Channels}
Result~\ref{Result1} states that an allowed operation of $\dynamicRT$ exists between a qubit channel $\mathcal{N}$ and qubit channel $\mathcal{M}$ if and only if a vector of eigenvalues of the Choi-state of $\mathcal{M}$ is in the convex hull of local permutations of a vector of eigenvalues of the Choi-state of $\mathcal{N}$. A full derivation is provided here. 

\subsection{Proof of Statement 1 implies statement 2 (Result~\ref{Result1})}
We begin by showing that statement 1 of Result~\ref{Result1} implies statement 2. We then show that statement 2 implies statement 1. 
\subsection{Proof of Lemma \ref{lemma1}}

\begin{proof}
Given, $\choi{N}$ is a valid bipartite density operator it can always be written in the block form as 
\begin{equation}
    \choi{N} = \frac{1}{4} \big( \mathbb{I} \otimes \mathbb{I} + (\bm{\beta} \cdot \bm{\sigma}) \otimes \mathbb{I} + \mathbb{I} \otimes (\bm{\gamma} \cdot \bm{\sigma}) + \sum_{ij} T_{ij} \sigma_{i} \otimes \sigma_{j} \big),
\end{equation}
where $\bm{\beta}, \bm{\gamma} \in \mathbb{R}^{3}$, $\bm{\sigma}$ is a vector of Pauli operators $\{ \sigma_{i} \}_{i=1}^{3}$ defined by 
\begin{equation}
    \sigma_{1} := \ket{0}\bra{1} + \ket{1}\bra{0},~ \sigma_{2} := i\ket{0}\bra{1} - i\ket{1}\bra{0},~ \sigma_{3} := \ket{0}\bra{0} - \ket{1}\bra{1}, \label{PauliOperators}
\end{equation}
in a given computation basis $\{ \ket{0}, \ket{1} \}$ and $T_{ij}$ are the matrix elements of a real matrix $T \in \mathbb{R}^{4} \otimes \mathbb{R}^{4}$. 

From the definition of $\choi{N}$ it is clear that the first marginal must be the maximally mixed state. Given $\mathcal{N}$ is a unital channel, the second marginal must also be the maximally mixed state, $\textrm{Tr}_{in}(\choi{N}) = \textrm{Tr}_{out}(\choi{N}) = \frac{1}{2}\mathbb{I}$. Using these conditions it can be seen that $\bm{\beta} \cdot \bm{\sigma} = \bm{\gamma} \cdot \bm{\sigma} = 0$.

It was shown in \cite{Horodecki_1996} that the matrix $T$, with elements $T_{ij}$, can always be digonalised under local unitary operations. This is due to every unitary transformation, $U$, having a uniquely associated rotation $O$ such that 
\begin{equation}
    U \bm{\beta} \cdot \bm{\sigma} U^{\dagger} = (O\bm{\beta}) \cdot \bm{\sigma}.
\end{equation}
Hence, local unitaries on $\choi{N}$ such that,   
\begin{equation}
    \begin{split}
        \choi{\Tilde{N}} &=  (U_{1} \otimes U_{2}) \choi{N}  (U_{1} \otimes U_{2})^{\dagger},
    \end{split}
\end{equation}
act to change the matrix $T$ to a matrix $T'$, where 
\begin{equation}
    T' = O_{1}TO^{t}_{2},
\end{equation}
where $O_{1}, O_{2}$ are rotations associated to the unitaries $U_{1}, U_{2}$ respectively, and $(\cdot)^{t}$ is the transpose operator. Due to the singular value decomposition \cite{banerjee2014linear}, two unitaries can therefore always be found that cause two rotations on $T$ that lead to $T'$ being diagonal. The Choi-state of qubit unital channels under local unitaries therefore become
\begin{equation}
    \choi{\Tilde{N}} = \frac{1}{4} \big( \mathbb{I} \otimes \mathbb{I} + \sum_{i} T_{ii} ~ \sigma_{i} \otimes \sigma_{i} \big). 
\end{equation}
In the standard basis $\choi{\Tilde{N}}$ is then
\begin{equation}
    \choi{\Tilde{N}} = \frac{1}{4} \begin{pmatrix}
        1+T_{33} & 0 & 0 & T_{11} - T_{22} \\
        0 & 1-T_{33} & T_{11} + T_{22} & 0 \\
        0 & T_{11} + T_{22} & 1 - T_{33} & 0 \\
        T_{11} - T_{22} & 0 & 0 & 1 + T_{33} \\
    \end{pmatrix}.
\end{equation}
This has the same form as a matrix that is diagonal in the Bell basis but written in the standard basis,
\begin{equation}
    \rho = \begin{pmatrix}
        p_{00} + p_{01} & 0 & 0 & p_{00} - p_{01} \\
        0 & p_{10} + p_{11} & p_{10} - p_{11} & 0 \\
        0 & p_{10} - p_{11} & p_{10} + p_{11} & 0 \\
        p_{00} - p_{01} & 0 & 0 & p_{00} + p_{01} \\
    \end{pmatrix}
\end{equation} 
where $\rho$ can equivalently be written as  
\begin{equation}
    \rho = \sum_{ij} p_{ij} \ketbra{\Phi_{ij}},
\end{equation}
where the states $\ket{\Phi_{nm}}$ are the basis states of the Bell basis generated from $\ket{\Phi_{00}}=1 / \sqrt{d_{S}} \sum_{i} \ket{ii}$ and the discrete Weyl operators \cite{watrous_2018} such that
\begin{equation}
    \ket{\Phi_{nm}} = \mathbb{I} \otimes W_{nm} \ket{\Phi_{00}},
\end{equation}
with 
\begin{equation}
    W_{nm} = \sum_{c=0}^{1} \Omega^{m c} \ket{n + c~\textrm{mod} ~2}\bra{c}, \hspace{2mm} \Omega = e^{\pi i}, \hspace{2mm} n,m \in \{0,1\}. \label{discreteWEylQubit}
\end{equation}
Hence, $\choi{\Tilde{N}}$ can alternatively be written as
\begin{equation}
    \choi{\Tilde{N}} = \sum_{nm} \lambda_{nm} \ketbra{\Phi_{nm}},
\end{equation}
where $\lambda_{ij}$ are the positive, real eigenvalues of $\choi{N}$ with $\sum_{nm} \lambda_{nm} = 1$, for which there exists a constructive form for their relationship with the matrix elements $\{T_{ii}\}_{1}^{3}$. 

\end{proof}

As mentioned in the main text, all unitary channels are in $\freeD$ and hence an allowed operation always exists that can diagonalise the Choi-states. The action of a channel within $\staticRT$ can therefore be fully captured through only the eigenvalues of its Choi-state. All of the Choi-states of channels in $\freeS$ are henceforth considered to be diagonal in the Bell basis. The tidle is now dropped when referring to channels in the Bell basis.  

By returning to the form of operations given in Eq.~\eqref{Choi State allowed operations} and using the above lemma, the allowed operations can be reformulated to be only in terms of the eigenvalues of the Choi-states. This greatly reduces the number of parameters in consideration. (We ignore the shared randomness in the follow proof for clarity, but the same result is reached if it is included). 

We remark that given we are considering all channels to have equal input and output dimension, we only need to consider mixed unitary channels. Consider the general pre-processing and post-processing mixed unitary channels,
\begin{equation}
    \mathcal{P}(\rho) = \sum_{i} p_{i} U_{i}\rho U_{i}^{\dagger}, \hspace{5mm}  \mathcal{E}(\rho) = \sum_{j} q_{j} V_{j}\rho V_{j}^{\dagger}.
\end{equation}
The Choi-state of a channel $\mathcal{M}$ under an allowed operation is given by
\begin{equation}
    \begin{split}
        \choi{M} &= \sum_{i,j} p_{i} q_{j} (U_{i}^{t} \otimes V_{j}) \choi{N} (U_{i}^{t} \otimes V_{j})^{\dagger} \\
        &= \sum_{i,j} p_{i} q_{j} (U_{i}^{t} \otimes V_{j}) \Big[ \sum_{nm} \lambda_{nm} \ketbra{\Phi_{nm}} \Big] (U_{i}^{t} \otimes V_{j})^{\dagger} \\
        &= \sum_{i,j} \sum_{nm} p_{i} q_{j} \lambda_{nm} (U_{i}^{t} \otimes V_{j}) (\ketbra{\Phi_{nm}}) (U_{i}^{t} \otimes V_{j})^{\dagger},
    \end{split}
\end{equation}
where the diagonal form of $\choi{N}$ has been used. 

Given operations in $\freeD$ map channels in $\freeS$ to channels in $\freeS$, $\choi{M}$ is the Choi-state of a qubit unital channel and can therefore also be written diagonally in the Bell basis,
\begin{equation}
    \choi{M} = \sum_{kl} \mu_{kl} \ketbra{\Phi_{kl}},
\end{equation}
with
\begin{equation}
    \mu_{kl} = \sum_{i,j} \sum_{nm} p_{i} q_{j} \lambda_{nm} 
    \abs{\bra{\Phi_{kl}} U_{i} \otimes V_{j} \ket{\Phi_{nm}} }^{2}.
\end{equation}
This can be interpreted as the product of a matrix and a vector, outputting another vector with components $\mu_{kl}$. 

The existence of an allowed operation in $\freeD$ can therefore be captured by the existence of a matrix $\mathbb{B}^{ij}$ such that 
\begin{equation}
    \bm{\mu} = \sum_{i,j} p_{i} q_{j} \mathbb{B}^{ij} \bm{\lambda}, \label{MidPointFormFreeOperationsAppendix}
\end{equation}
where $\bm{\lambda} \in \mathbb{R}^{4}$ is a vector of eigenvalues of $\choi{N}$, $\bm{\lambda}^{t} = [\lambda_{00}, \lambda_{01}, \lambda_{10}, \lambda_{11}]$, $\bm{\mu} \in \mathbb{R}^{4}$ is a vector of eigenvalues of the Choi-state $\choi{M}$, $ \bm{\mu}^{t} = [\mu_{00}, \mu_{01}, \mu_{10}, \mu_{11}]$, and $\mathbb{B}^{ij}$ is the matrix with elements $B^{ij}_{nm, kl} = \abs{ \bra{\Phi_{kl}} U_{i} \otimes V_{j} \ket{\Phi_{nm}} }^{2}$.

Allowed operations in $\freeD$ can now be succinctly written in terms of the eigenvalues of the Choi-states of channels in $\freeS$. To complete the proof of this direction, Lemma.~\ref{lemma2} must be proved.

The matrix $\mathbb{B}^{ij}$ is a product-Bell unistochastic matrix, meaning it is a doubly stochastic matrix that arises from squaring the absolute value of the matrix elements of a product unitary matrix written in the Bell basis. Given $\mathbb{B}^{ij}$ is doubly stochastic it can be written as a convex combination of permutation matrices \cite{bapat_raghavan_1997}. Here, we have the additional structure of the unitary matrix being a tensor product of two unitaries and a fixed basis. Lemma ~\ref{lemma2} shows that when considering Bell-product unistochastic matrices only the set of local permutations is needed in the convex combination. 

\subsection{Proof of Lemma \ref{lemma2}}
\begin{proof}
The unitaries $U_{i}$ and $V_{j}$ can be written in the Pauli operator basis, 
\begin{equation}
    U_{i} = \sum_{0}^{3} a^{i}_{\alpha} \sigma_{\alpha}, \hspace{5mm} V_{j} = \sum_{0}^{3} b^{j}_{\beta} \sigma_{\beta},
\end{equation}
with $\sigma_{0} := \mathbb{I}$, $\{ \sigma_{i} \}_{i=1}^{3}$ being the Pauli operators defined in Eq.~\eqref{PauliOperators} and $a_{\alpha}^{i}, b_{\beta}^{j} \in \mathbb{C} ~\forall ~(i,\alpha),(j,\beta)$. The matrix elements of $\mathbb{B}^{ij}$ in the Bell basis are then, 
\begin{equation}
    \begin{split}
        B^{ij}_{nm,kl} &= \abs{\bra{\Phi_{kl}} U_{i} \otimes V_{j} \ket{\Phi_{nm}} }^{2} \\
        &= \abs{\sum_{\alpha,\beta = 0}^{3} a^{i}_{\alpha}b^{j}_{\beta} \bra{\Phi_{kl}} (\sigma_{\alpha} \otimes \sigma_{\beta}) \ket{\Phi_{nm}}}^{2}.
    \end{split}
\end{equation}
Operators of the form $\mathbb{I} \otimes \sigma_{i} ~ \forall ~ i$ acting on Bell basis states permute them to other Bell basis state multiplied by a phase term. Ignoring the phase term for now, the action of $\mathbb{I} \otimes \sigma_{1}$ and $\sigma_{1} \otimes \mathbb{I}$ on the basis states is as follows, 
\begin{equation}
    \mathbb{I} \otimes \sigma_{1} \hspace{2mm} \& \hspace{2mm} \sigma_{1} \otimes \mathbb{I} : \hspace{3mm} \ket{\Phi_{00}} \rightarrow \ket{\Phi_{10}},~  \ket{\Phi_{01}} \rightarrow  \ket{\Phi_{11}},  ~  \ket{\Phi_{10}} \rightarrow  \ket{\Phi_{01}}, ~  \ket{\Phi_{11}} \rightarrow  \ket{\Phi_{00}}. 
\end{equation}
This permutation of the Bell basis states is referred to as $\perm{1}$, 
\begin{equation}
\renewcommand*{\arraystretch}{0.6}
    \perm{1} = \begin{pmatrix}
        0 & 0 & 1 & 0 \\
        0 & 0 & 0 & 1 \\
        1 & 0 & 0 & 0 \\
        0 & 1 & 0 & 0 
    \end{pmatrix} = \omega_{1} \otimes \omega_{0},
\end{equation}
where $\omega_{0}$ and $\omega_{1}$ are the qubit permutations, 
\begin{equation}
\renewcommand*{\arraystretch}{0.6}
    \omega_{0} = \begin{pmatrix}
1 & 0 \\
0 & 1 
\end{pmatrix}, \hspace{5mm} \omega_{1} = \begin{pmatrix}
0 & 1 \\
1 & 0 
\end{pmatrix}.
\end{equation}
Similarly we have 
\begin{equation}
    \begin{split}
        & \mathbb{I} \otimes \sigma_{2} \hspace{2mm} \& \hspace{2mm} \sigma_{2} \otimes \mathbb{I} : \hspace{3mm} \ket{\Phi_{00}} \rightarrow \ket{\Phi_{11}},~  \ket{\Phi_{01}} \rightarrow  \ket{\Phi_{10}},  ~  \ket{\Phi_{10}} \rightarrow  \ket{\Phi_{01}}, ~  \ket{\Phi_{11}} \rightarrow  \ket{\Phi_{00}}. \\
        & \mathbb{I} \otimes \sigma_{3} \hspace{2mm} \& \hspace{2mm} \sigma_{3} \otimes \mathbb{I} : \hspace{3mm} \ket{\Phi_{00}} \rightarrow \ket{\Phi_{01}},~  \ket{\Phi_{01}} \rightarrow  \ket{\Phi_{00}},  ~  \ket{\Phi_{10}} \rightarrow  \ket{\Phi_{11}}, ~  \ket{\Phi_{11}} \rightarrow  \ket{\Phi_{10}}. 
    \end{split}
\end{equation}
which we call $\perm{2}$ and $\perm{3}$ respectively where
\begin{equation}
\renewcommand*{\arraystretch}{0.6}
    \perm{2} = \begin{pmatrix}
        0 & 0 & 0 & 1 \\
        0 & 0 & 1 & 0 \\
        0 & 1 & 0 & 0 \\
        1 & 0 & 0 & 0
    \end{pmatrix} = \omega_{1} \otimes \omega_{1},
\end{equation}
and 
\begin{equation}
\renewcommand*{\arraystretch}{0.6}
    \perm{3} = \begin{pmatrix}
        0 & 1 & 0 & 0 \\
        1 & 0 & 0 & 0 \\
        0 & 0 & 0 & 1 \\
        0 & 0 & 1 & 0 \\
    \end{pmatrix} = \omega_{0} \otimes \omega_{1}.
\end{equation}
Additionally, there is $\mathbb{I} \otimes \mathbb{I}$ that maps all basis states to themselves giving $\perm{0} = \omega_{0} \otimes \omega_{0}$. This set of permutations, $\mathfrak{P} = \{ \perm{0}, \perm{1}, \perm{2}, \perm{3} \}$, are the local permutations given they can be broken down into a tensor product of qubit permutations. 

Combinations of permutations in $\mathfrak{P}$ are still in the set $\mathfrak{P}$. By splitting up the Pauli operators in $\mathbb{B}^{ij}$ they can be written as combinations of permutation in $\mathfrak{P}$,
\begin{equation}
    \begin{split}
        B^{ij}_{nm,kl} &= \abs{\sum_{\alpha,\beta = 0}^{3} a^{i}_{\alpha}b^{j}_{\beta} \bra{\Phi_{kl}} (\sigma_{\alpha} \otimes \mathbb{I})(\mathbb{I} \otimes \sigma_{\beta}) \ket{\Phi_{nm}}}^{2}, \\
        &= \abs{\sum_{\alpha,\beta = 0}^{3} a^{i}_{\alpha}b^{j}_{\beta} \theta_{\alpha, \beta} \bra{\Phi_{kl}} \perm{\alpha} \circ \perm{\beta} \ket{\Phi_{nm}}}^{2} \\
        &= \abs{\sum_{\gamma=0}^{3} Q^{ij}_{\gamma} \bra{\Phi_{kl}} \perm{\gamma} \ket{\Phi_{nm}}}^{2}, \\
        &= \abs{\bra{\Phi_{kl}} Q^{ij}_{0}\perm{0} + Q^{ij}_{1}\perm{1} + Q^{ij}_{2}\perm{2} + Q^{ij}_{3}\perm{3} \ket{\Phi_{nm}}}^{2}
    \end{split}
\end{equation}
where $\theta_{\alpha, \beta}$ is a phase factor arising from the permutations and in the third line a relabelling has taken place for clarity. The matrix $\mathbb{B}^{ij}$ can now be seen to be a combination of the local permutation matrices,
\begin{equation}
    \begin{split}
        \renewcommand*{\arraystretch}{1}
    \mathbb{B}^{ij} &= \begin{pmatrix}
        \abs{Q_{0}}^{2} & \abs{Q_{3}}^{2} & \abs{Q_{1}}^{2} & \abs{Q_{2}}^{2} \\
        \abs{Q_{3}}^{2} & \abs{Q_{0}}^{2} & \abs{Q_{2}}^{2} & \abs{Q_{1}}^{2} \\
        \abs{Q_{1}}^{2} & \abs{Q_{2}}^{2} & \abs{Q_{0}}^{2} & \abs{Q_{3}}^{2} \\
        \abs{Q_{2}}^{2} & \abs{Q_{1}}^{2} &\abs{Q_{3}}^{2} & \abs{Q_{0}}^{2}\\
    \end{pmatrix}
    \end{split}
\end{equation}
which can be written as
\begin{equation}
    \begin{split}
    \mathbb{B}^{ij} &= \abs{Q_{0}}^{2} \perm{0} + \abs{Q_{1}}^{2} \perm{1} + \abs{Q_{2}}^{2} \perm{2} + \abs{Q_{3}}^{2} \perm{3}, \\
    &=  \abs{Q_{0}}^{2} (\omega_{0} \otimes \omega_{0}) + \abs{Q_{1}}^{2} (\omega_{1} \otimes \omega_{0}) + \abs{Q_{2}}^{2} (\omega_{1} \otimes \omega_{1}) + \abs{Q_{3}}^{2} (\omega_{0} \otimes \omega_{1}).
    \end{split}
\end{equation}
where the $ij$ superscript on $Q_{\gamma}$ has been dropped to improve readability.

To be a convex combination it must be the case that 
\begin{equation}
    \sum_{\gamma=0}^{3} \abs{Q_{\gamma}}^{2} = 1.
\end{equation}
It can be seen from the matrix form of $\mathbb{B}^{ij}$ that a sum over all $Q_{\gamma}$ is a sum over the first column of the matrix,
\begin{equation}
    \begin{split}
        & \sum_{\gamma} \abs{Q_{\gamma}}^{2} = \sum_{n=0,m=0}^{1} B_{nm,00} \\
        &= \sum_{n=0,m=0}^{1} \abs{\bra{\Phi_{00}} U_{i} \otimes V_{j} \ket{\Phi_{nm}} }^{2} \\
        &= \sum_{n=0,m=0}^{1} \bra{\Phi_{00}} ~ (U_{i} \otimes V_{j}) (\ketbra{\Phi_{nm}}) (U_{i} \otimes V_{j})^{\dagger} ~ \ket{\Phi_{00}} \\
        &= \bra{\Phi_{00}} ~ (U_{i} \otimes V_{j}) \Bigg[ \sum_{n=0,m=0}^{1}\ketbra{\Phi_{nm}} \Bigg] (U_{i} \otimes V_{j})^{\dagger} ~ \ket{\Phi_{00}} \\
        &= 1.
    \end{split}
\end{equation}
as $\sum_{n=0,m=0}^{1}\ketbra{\Phi_{nm}} = \mathbb{I}$. Therefore, the matrix $\mathbb{B}^{ij}$ is a convex combination of local permutations. The introduction of shared randomness does not effect this result as a convex combination of product-Bell unistocastic matrices remains a convex combination of local permutation matrices. 
\end{proof}

Inputting all this into the form of allowed operations found above leads to the conclusion that an allowed operation existing between $\mathcal{N} \rightarrow \mathcal{M}$ (statement 1) implies that there exists a matrix between $\bm{\lambda}$ and $\bm{\mu}$ that is a convex combination of local permutations (statement 2).

To complete the proof of Result 1, the opposite direction must be shown. However, we will first provide the physical intuition behind Lemma \ref{lemma1} as it will be necessary to complete the proof of Result 1. 

\subsection{Lemma \ref{lemma1} Physical Intuition} 
A channel $\mathcal{M}$ is Weyl-covariant if $\mathcal{M}(W_{ij}\rho W^{\dagger}_{ij}) = ~ W_{ij}\mathcal{M}(\rho)W^{\dagger}_{ij}~\forall~ i,j, \rho$, where $W_{ij}$ are the discrete Weyl operators. Details on general Weyl-covariant channels can be found in Supplementary Material D and \cite{watrous_2018}, a brief overview of the necessary details will be given here.  

Any Weyl-covariant channel can be written in the following form: 
\begin{equation}
    \mathcal{M}(\rho) = \sum p_{ij} W_{ij} \rho W_{ij}^{\dagger},
\end{equation}
for some probability vector $p_{ij}$. The discrete Weyl operators in the qubit case, formally defined in Eq.~\eqref{discreteWEylQubit}, are $\{\mathbb{I}, \sigma_{x}, \sigma_{z}, -i\sigma_{y} \}$ where the Pauli operators are defined as in Eq.~\eqref{PauliOperators}. Therefore, a qubit Weyl-covariant channel can be always be written as 
\begin{equation}
    \mathcal{M}(\rho) = \sum_{k} p_{k} \sigma_{k} \rho \sigma_{k},
\end{equation}
for some probability vector $p_{k}$. 

\begin{lemma}
    Every qubit unital channel can be transformed into a Weyl covariant channel (a mixture of Pauli operators) via a pre-processing and post-processing unitary channel. 
\end{lemma}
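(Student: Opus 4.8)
The plan is to lift the Choi-state diagonalisation of Lemma~\ref{lemma1} from the level of states back to the level of channels, using the fact (established in the Appendix on Result~\ref{result3}) that a qubit channel is Weyl-covariant precisely when its Choi-state is diagonal in the Bell basis, together with the identity shown just above that a qubit Weyl-covariant channel is always a Pauli mixture $\sum_{k}p_{k}\,\sigma_{k}\rho\,\sigma_{k}$. Lemma~\ref{lemma1} supplies local unitaries $U_{1}\otimes U_{2}$ for which $(U_{1}\otimes U_{2})\,\choi{N}\,(U_{1}\otimes U_{2})^{\dagger}$ is Bell-diagonal; the task is to exhibit unitary pre- and post-processing channels whose composition with $\mathcal{N}$ has exactly this transformed Choi-state, so that the resulting channel is Weyl-covariant and hence a mixture of Pauli operators.

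First I would set up the dictionary between local unitaries on $\choi{N}$ and unitary pre-/post-processing on $\mathcal{N}$, which rests entirely on the maximally entangled state identity $(\mathbb{I}\otimes A)\ket{\Phi}=(A^{t}\otimes\mathbb{I})\ket{\Phi}$. Post-processing $\mathcal{N}$ by the unitary channel $V(\cdot)V^{\dagger}$ acts as $\mathcal{I}\otimes V$ on the output leg, so it implements the second local unitary $U_{2}$ directly by choosing $V=U_{2}$. Pre-processing $\mathcal{N}$ by $U(\cdot)U^{\dagger}$ inserts $\mathbb{I}\otimes U$ on the reference copy of $\ket{\Phi}$, which by the identity above becomes $U^{t}\otimes\mathbb{I}$ acting on the \emph{first} leg; choosing the pre-processing unitary $U=U_{1}^{t}$ therefore reproduces the first local unitary $U_{1}$, and since the transpose of a unitary is again unitary, this is a legitimate unitary channel in $\freeS$.

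Putting these together, I would write the candidate transformed channel as $\widetilde{\mathcal{N}}(\cdot)=U_{2}\,\mathcal{N}\!\big(U_{1}^{t}(\cdot)(U_{1}^{t})^{\dagger}\big)\,U_{2}^{\dagger}$, verify by the Choi computation that $\mathcal{J}^{\widetilde{\mathcal{N}}}=(U_{1}\otimes U_{2})\,\choi{N}\,(U_{1}\otimes U_{2})^{\dagger}$ is Bell-diagonal, conclude from the Weyl-covariance characterisation that $\widetilde{\mathcal{N}}$ is Weyl-covariant, and finally invoke the qubit identity $\widetilde{\mathcal{N}}(\rho)=\sum_{k}p_{k}\,\sigma_{k}\rho\,\sigma_{k}$ to land on the advertised Pauli mixture. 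The main obstacle is bookkeeping rather than conceptual: one must track carefully the transpose introduced by the entangled-state identity so that the first-leg local unitary is matched to an honest unitary pre-processing channel, and confirm that trace preservation and unitality are preserved throughout, so that $\widetilde{\mathcal{N}}$ indeed remains in $\freeS$ and $\mathcal{J}^{\widetilde{\mathcal{N}}}$ stays a valid Bell-diagonal density operator.
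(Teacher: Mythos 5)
Your proposal is correct and follows essentially the same route as the paper: invoke Lemma~\ref{lemma1} to Bell-diagonalise the Choi-state under $U_1\otimes U_2$, use the ricochet identity $(A\otimes\mathbb{I})\ket{\Phi}=(\mathbb{I}\otimes A^{t})\ket{\Phi}$ to convert the first-leg unitary into a pre-processing by $U_1^{t}$ and the second-leg unitary into a post-processing by $U_2$, and then identify Bell-diagonal Choi-states with Weyl-covariant (Pauli-mixture) channels. The bookkeeping of the transpose you flag is handled in the paper exactly as you describe, so no gap remains.
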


\begin{proof}
Consider a qubit unital channel $\mathcal{N}$ with Choi-state 
\begin{equation}
    \choi{N} = (\mathcal{I} \otimes \mathcal{N})(\ketbra{\Phi_{00}}). 
\end{equation}
From Lemma ~\ref{lemma1}, there exists two unitaries $U_{1}$ and $U_{2}$ such that $\choi{N}$ can be diagonalised in the Bell basis,
\begin{equation}
    \choi{\Tilde{N}} = (U_{1} \otimes U_{2}) \choi{N} (U_{1} \otimes U_{2})^{\dagger}.\label{choistateunderlocalunitaries}
\end{equation}
Therefore, $\choi{\Tilde{N}}$ can equivalently be written as 
\begin{equation}
     \choi{\Tilde{N}} = \sum_{ij} \lambda_{ij} \ketbra{\Phi_{ij}}.  
\end{equation}
Choi-states that are diagonal in the Bell basis correspond to Weyl-covariant channels. This is a result of discrete Weyl operators being used to generate Bell basis states, 
\begin{equation}
    \begin{split}
        \choi{\Tilde{N}} &= \sum_{ij} \lambda_{ij} (\mathcal{I} \otimes W_{ij})(\ketbra{\Phi_{00}})(\mathcal{I} \otimes W_{ij}), \\
        &= (\mathcal{I} \otimes \sum_{ij} \lambda_{ij} W_{ij}(\cdot)W_{ij})(\ketbra{\Phi_{00}}) \label{Choistatethatdiggiveweylcovariant}
    \end{split}
\end{equation}
Hence, $\Tilde{\mathcal{N}}$ is a Weyl-covariant channel given $\sum_{ij} \lambda_{ij} = 1$ and $\lambda_{ij}$ is real and positive for all $i,j$. $\choi{\Tilde{N}}$ is then the Choi-state of a Weyl-covariant channel,
\begin{equation}
    \choi{\Tilde{N}} = (\mathcal{I} \otimes \Tilde{\mathcal{N}} ) (\ketbra{\Phi_{00}}).
\end{equation}
Returning to Eq.~\eqref{choistateunderlocalunitaries}, $\choi{\Tilde{N}}$ can be written as 
\begin{equation}
    \begin{split}
        \choi{\Tilde{N}} &= (U_{1} \otimes U_{2}) (\mathcal{I} \otimes \mathcal{N}) (\ketbra{\Phi_{00}}) (U_{1} \otimes U_{2})^{\dagger} \\    
         &= (\mathcal{I} \otimes U_{2}) (\mathcal{I} \otimes \mathcal{N}) (\mathcal{I} \otimes U_{1}^{t}) (\ketbra{\Phi_{00}})(\mathcal{I} \otimes U_{1}^{t})^{\dagger} (\mathcal{I} \otimes U_{2})^{\dagger} \\   
         &= (\mathcal{I} \otimes \mathcal{U}_{2} \circ \mathcal{N} \circ \mathcal{U}_{1})(\ketbra{\Phi_{00}}) 
    \end{split}
\end{equation}
where $\mathcal{U}_{1}$ and $\mathcal{U}_{2}$ are unitary channels that apply the unitaries $U_{1}^{t}$ and $U_{2}$ respectively. Due to the Choi-Jamiołkowski isomorphism, it must be the case that 
\begin{equation}
    \Tilde{\mathcal{N}}(\rho) = \mathcal{U}_{2} \circ \mathcal{N} \circ \mathcal{U}_{1}(\rho).
\end{equation}
Therefore, the diagonalisation of the Choi-states of qubit unital channels in the Bell basis is equivalent to qubit unital channels being equal to a Weyl-covariant channel up to a pre-processing and post-processing unitary. Formally, for all qubit unital channels $\mathcal{N}$ it is possibile to find two unitaries $U$ and $V$ such that 
\begin{equation}
    U\mathcal{N}(V \rho V^{\dagger})U^{\dagger} = p_{0} \mathbb{I}\rho \mathbb{I} + p_{1} \sigma_{x} \rho \sigma_{x} + p_{2} \sigma_{y} \rho \sigma_{y} + p_{3} \sigma_{z} \rho \sigma_{z},
\end{equation}
where $\sum_{k} p_{k} = 1$. This result was recently also found in \cite{li2023unital}. 
\end{proof}

\subsection{Proof of Statement 2 implies Statement 1 (Result~\ref{Result1})}

We now return to the proof of result 1 and show that statement 2 implies statement 1. 

\begin{proof}
Assuming statement 2 holds, we can write
\begin{equation}
    \bm{\mu} = \sum_{nm} p_{nm} \sigma^{n}_{x} \otimes \sigma_{x}^{m} \bm{\lambda}.
\end{equation}
Translating back to the Choi-states of the channels these vectors represent, we find 
\begin{equation}
    \choi{\Tilde{M}} = \sum_{nm} p_{nm} (\sigma^{n}_{x} \otimes \sigma_{x}^{m}) (\choi{\Tilde{N}}) (\sigma^{n}_{x} \otimes \sigma_{x}^{m}),
\end{equation}
where $\choi{\Tilde{M}}$ and $\choi{\Tilde{N}}$ are Choi-states diagonal in the Bell basis with eigenvalues given by the elements of $\bm{\mu}$ and $\bm{\lambda}$ respectively. This can be re-written as 
\begin{equation}
    \begin{split}
        \choi{\Tilde{M}} &= \sum_{nm} p_{nm} (\sigma^{n}_{x} \otimes \sigma_{x}^{m}) (\mathcal{I} \otimes \mathcal{\Tilde{\mathcal{N}}})(\ketbra{\Phi_{00}}) (\sigma^{n}_{x} \otimes \sigma_{x}^{m}), \\
        &= \sum_{nm} p_{nm} (\mathcal{I} \otimes \sigma_{x}^{m}) (\mathcal{I} \otimes \mathcal{\Tilde{\mathcal{N}}})(\mathcal{I} \otimes \sigma^{n}_{x})(\ketbra{\Phi_{00}})(\mathcal{I} \otimes \sigma^{n}_{x}) (\mathcal{I} \otimes \sigma_{x}^{m}), \\
        &= \sum_{nm} p_{nm} (\mathcal{I} \otimes \sigma_{x}^{m} [\cdot] \circ \mathcal{\Tilde{\mathcal{N}}} \circ \sigma^{n}_{x} [\cdot])(\ketbra{\Phi_{00}}),
    \end{split}
\end{equation}
where $\sigma_{x}^{m} [\cdot]$ and $\sigma^{n}_{x} [\cdot]$ are channels applying $\sigma_{x}^{m}$ and $\sigma_{x}^{n}$ respectively, and we have used the fact that $\sigma_{x}^{t} = \sigma_{x}$. Hence, from the Choi-Jamiołkowski isomorphism, $\Tilde{\mathcal{M}}$ is given by 
\begin{equation}
    \begin{split}
        \Tilde{\mathcal{M}}(\rho) &= \sum_{nm} p_{nm} \sigma_{x}^{m} \circ \mathcal{\Tilde{\mathcal{N}}} \circ \sigma^{n}_{x} (\rho) \\
        &= p_{00} \mathcal{\Tilde{\mathcal{N}}}(\rho) + p_{01}\sigma_{x}\mathcal{\Tilde{\mathcal{N}}}(\rho)\sigma_{x} + p_{10}\mathcal{\Tilde{\mathcal{N}}}(\sigma_{x}\rho\sigma_{x}) + p_{11}\sigma_{x}\mathcal{\Tilde{\mathcal{N}}}(\sigma_{x}\rho\sigma_{x})\sigma_{x}.
    \end{split}
\end{equation}
In the previous section it was shown that $\Tilde{\mathcal{N}}$ corresponds to a Weyl-covariant channel given it is associated to a Choi-state diagonal in the Bell basis. In addition, $\sigma_{x}$ is a discrete Weyl operator in the qubit case and hence,
\begin{equation}
   \begin{split}
        \Tilde{\mathcal{M}}(\rho) &= (p_{00} + p_{11}) \mathcal{\Tilde{\mathcal{N}}}(\rho) + (p_{01} + p_{10}) \mathcal{\Tilde{\mathcal{N}}}(\sigma_{x}\rho\sigma_{x}) \\
    &= \Tilde{\mathcal{N}} \circ \mathcal{P},
   \end{split}
\end{equation}
where $\mathcal{P}(\rho) = (p_{00} + p_{11})\mathcal{I}(\rho) +  (p_{01} + p_{10}) \sigma_{x} \rho \sigma_{x}$. Therefore, if statement 2 holds then there always exists a mixed unitary pre-processing channel achieving that mapping $\mathcal{N} \rightarrow \mathcal{M}$. Hence, statement 2 implies statement 1. This completes the proof of Result \ref{Result1}. 
\end{proof}

\subsection{Depolarising and Dephasing Example}
Within this section Result~\ref{Result1} is applied to the depolarising and dephasing channels to provide an example of the application of the dynamical resource theory of informational non-equilibrium. Firstly, the existence of an allowed operation between two depolarising channels and then two dephasing channels is assessed. Secondly, the existence of an allowed operations between a depolarising and a dephasing channel is assessed, and vice versa. 
\subsubsection{Depolarising to Depolarising}
The depolarising channel, $\depola{s}$, is given by 
\begin{equation}
    \depola{s}(\rho) = s\mathcal{I}(\rho) + (1-s) \textrm{Tr}(\rho)\frac{\mathbb{I}}{d_{S}},
\end{equation}
where $0 \leq s \leq 1$. This channel is a mixed unitary channel for all values of $s$. By applying Result~\ref{Result1} we aim to show what other qubit depolarising channels ($d_{s}=2$) can be simulated from a qubit depolarising channel with a given $s$.  
\begin{lemma}
    There exists an allowed operation $\Pi \in \freeD$ such that $\depola{s'} = \Pi(\depola{s})$ if and only if $s' \leq s$.  
\end{lemma}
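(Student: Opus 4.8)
The plan is to reduce the question to the Bell-basis eigenvalues of the two Choi states and then apply Result~\ref{Result1}. First I would compute $J_s:=(\mathcal{I}\otimes\depola{s})(\ketbra{\Phi_{00}})$. Writing $\depola{s}=s\,\mathcal{I}+(1-s)\Lambda$ with $\Lambda(\cdot)=\textrm{Tr}(\cdot)\,\mathbb{I}/2$, and using $(\mathcal{I}\otimes\Lambda)(\ketbra{\Phi_{00}})=\mathbb{I}/4=\tfrac14\sum_{ij}\ketbra{\Phi_{ij}}$, one gets
\begin{equation}
J_s=\Big(s+\tfrac{1-s}{4}\Big)\ketbra{\Phi_{00}}+\tfrac{1-s}{4}\!\!\sum_{(ij)\neq(00)}\!\!\ketbra{\Phi_{ij}},
\end{equation}
which is already diagonal in the Bell basis. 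Hence the eigenvalue vector is $\bm{\lambda}=\big(\tfrac{3s+1}{4},\tfrac{1-s}{4},\tfrac{1-s}{4},\tfrac{1-s}{4}\big)$, and the same computation gives $\bm{\mu}=\big(\tfrac{3s'+1}{4},\tfrac{1-s'}{4},\tfrac{1-s'}{4},\tfrac{1-s'}{4}\big)$ for $\depola{s'}$. By Result~\ref{Result1}, $\depola{s'}=\Pi(\depola{s})$ for some $\Pi\in\freeD$ if and only if $\bm{\mu}$ lies in the convex hull of the local permutations of $\bm{\lambda}$.

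Next I would exploit the degeneracy of $\bm{\lambda}$. The four local permutations $\sigma_x^n\otimes\sigma_x^m$ act on the eigenvalue vector by flipping one or both Bell-label bits, i.e.\ by moving the single large eigenvalue $a:=\tfrac{3s+1}{4}$ into each of the four slots while leaving $b:=\tfrac{1-s}{4}$ in the rest. Because three entries of $\bm{\lambda}$ coincide, the orbit is exactly the four vectors $\bm{v}_0,\dots,\bm{v}_3$, where $\bm{v}_k$ carries $a$ in slot $k$ and $b$ elsewhere. Any convex combination then has $j$-th component $p_j a+(1-p_j)b=p_j(a-b)+b$, so matching $\sum_k p_k\bm{v}_k=\bm{\mu}$ forces $p_j=(\mu_j-b)/(a-b)$ uniquely for each $j$ (when $s>0$, so that $a\neq b$).

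With $a-b=s$, $b'-b=\tfrac{s-s'}{4}$, this yields $p_0=\tfrac{3s'+s}{4s}$ and $p_1=p_2=p_3=\tfrac{s-s'}{4s}$, which indeed sum to one. These form a valid probability vector, hence realise an allowed operation, precisely when $s-s'\ge0$; and since the $p_j$ are forced, $s'>s$ makes $p_1=p_2=p_3<0$ with no alternative, certifying $\bm{\mu}\notin\mathrm{conv}\{\bm{v}_k\}$ and ruling out any allowed operation. The endpoint $s=0$ must be treated separately: there $\bm{\lambda}$ is uniform, its orbit is a single point, so $\bm{\mu}$ must be uniform too, forcing $s'=0=s$, again consistent with $s'\le s$.

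The forward implication also has a one-line realisation worth recording: the semigroup identity $\depola{a}\circ\depola{b}=\depola{ab}$ gives $\depola{s'}=\depola{s'/s}\circ\depola{s}$ for $0<s'\le s$, and $\depola{s'/s}$ is itself a mixed-unitary channel in $\freeS$ whenever $s'/s\le1$, so post-composition by it is manifestly an allowed operation in $\freeD$. I would nonetheless keep the eigenvalue argument as the backbone, since it delivers the converse at no extra cost. There is no deep obstacle here, this being an illustration of Result~\ref{Result1}; the only points demanding care are getting the Bell-basis diagonalisation of $J_s$ right and handling the degenerate endpoint $s=0$, where dividing by $s$ is illegitimate and the orbit collapses to a single point.
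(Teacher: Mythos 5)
Your proposal is correct and follows essentially the same route as the paper's proof: diagonalise the Choi states in the Bell basis, obtain the eigenvalue vectors $\bm{\lambda}=\big(\tfrac{3s+1}{4},\tfrac{1-s}{4},\tfrac{1-s}{4},\tfrac{1-s}{4}\big)$ and $\bm{\mu}$, and solve for the unique convex weights $p_{00}=\tfrac{3s'+s}{4s}$, $p_{01}=p_{10}=p_{11}=\tfrac{s-s'}{4s}$, whose nonnegativity is equivalent to $s'\le s$ (with $s=0$ treated separately, as the paper also notes). Your added semigroup realisation $\depola{s'}=\depola{s'/s}\circ\depola{s}$ is a pleasant explicit witness for the forward direction but does not change the character of the argument.
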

\begin{proof}
Firstly, the Choi-state of $\depola{s}$ needs to be found and diagonlised in the Bell basis. This gives
\begin{equation}
    \choi{\mathcal{D}^{\rm{pol}}} = \Big(s + \frac{1-s}{4} \Big) \ketbra{\Phi_{00}} + \Big(\frac{1-s}{s} \Big) \ketbra{\Phi_{01}} + \Big(\frac{1-s}{s} \Big) \ketbra{\Phi_{10}} +  \Big(\frac{1-s}{s} \Big) \ketbra{\Phi_{11}}.
\end{equation}
The coefficients of this decomposition are the eigenvalues of the Choi-state. These eigenvalues are then collected into a vector
\begin{equation}
    \bm{\lambda}^{\rm{pol}} = \begin{bmatrix}
        s +  (1-s)/4 \\
        (1-s)/4 \\
    (1-s)/4 \\
     (1-s)/4 \\
    \end{bmatrix}.
\end{equation}
The target channels we are considering are depolarising channels with a different value of $s$, hence 
\begin{equation}
    \bm{\mu}^{\rm{pol}} =  \begin{bmatrix}
        s' +  (1-s')/4 \\
        (1-s')/4 \\
    (1-s')/4 \\
     (1-s')/4 \\
    \end{bmatrix}.
\end{equation}
Result~\ref{Result1} states that an allowed operation $\depola{s} \rightarrow \depola{s'}$ exists if and only if there exists a matrix that is a convex combination of local permutations mapping $\bm{\lambda}^{\rm{pol}}$ to $\bm{\mu}^{\rm{pol}}$. Therefore, an allowed operation exists if there exists a set of values $\{p_{00}, p_{01}, p_{10}, p_{11} \}$, where $\sum_{ij} p_{ij} = 1$ and $p_{ij}$ is real and positive for all $(i,j)$, such that   
\begin{equation}
    \begin{bmatrix}
        s' +  (1-s')/4 \\
        (1-s')/4 \\
    (1-s')/4 \\
     (1-s')/4 \\
    \end{bmatrix} = \begin{bmatrix}
        p_{00}s +  (1-s)/4 \\
        p_{01}s + (1-s)/4 \\
    p_{10}s + (1-s)/4 \\
     p_{11}s + (1-s)/4 \\ 
    \end{bmatrix}.  \label{depolafreeop}
\end{equation}
The left hand side of Eq.~\eqref{depolafreeop} has been generated by applying the set of local qubit permutations to $\bm{\lambda}^{pol}$ and then taking the convex combination of the resultant vectors. This forms produces a set of four equations that, along with the condition $\sum_{ij} p_{ij} = 1$, can be used to find the convex combination coefficients in terms of $s$ and $s'$, giving
\begin{equation}
    \begin{split}
        p_{00} &= \frac{1}{4}\big(3\frac{s'}{s} + 1 \big), \\
        p_{01} &= p_{10} = p_{11} =  \frac{1}{4}\big(1-\frac{s'}{s} \big).
    \end{split}
\end{equation}
By enforcing the conditions that $0 \leq p_{ij} \leq 1 ~ \forall~ i,j$, it can be seen that there exists an allowed operation of $\dynamicRT$ mapping $\depola{s}$ to $\depola{s}$ if and only if $s' \leq s$. We note that the $s=0$ case is straightforward, and hence we focus on the case that $s>0$. 

\end{proof}

\subsubsection{Dephasing to Dephasing}
The dephasing channel, $\dephase{q}$ is given by 
\begin{equation}
    \dephase{q}(\rho) = q \mathcal{I}(\rho) + (1-q) \widetilde{\mathcal{D}}(\rho), 
\end{equation}
where $0 \leq q \leq 1$ and $\widetilde{\mathcal{D}}$ is the completely dephasing channel, 
\begin{equation}
    \widetilde{\mathcal{D}}(\rho) = \sum_{n} \ketbra{n} \rho \ketbra{n}. 
\end{equation}
This channel is mixed unitary for all values of $q$. We now prove the following Lemma. 
\begin{lemma}
    There exists an allowed operation $\Pi \in \freeD$ such that $\dephase{q'} = \Pi(\dephase{q})$ if and only if $q' \leq q$.   
\end{lemma}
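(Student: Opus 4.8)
The plan is to mirror the depolarising-to-depolarising argument and apply Result~\ref{Result1} directly to the Choi-state eigenvalues of $\dephase{q}$ and $\dephase{q'}$.

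First I would rewrite the qubit dephasing channel as a Pauli channel. Since the completely dephasing channel satisfies $\widetilde{\mathcal{D}}(\rho) = \frac{1}{2}(\rho + \sigma_z \rho \sigma_z)$, one has $\dephase{q}(\rho) = \frac{1+q}{2}\,\rho + \frac{1-q}{2}\,\sigma_z \rho \sigma_z$, a mixture of $\mathbb{I}$ and $\sigma_z = W_{01}$. Its Choi-state is therefore already diagonal in the Bell basis, with support only on $\ket{\Phi_{00}}$ and $\ket{\Phi_{01}}$,
\begin{equation}
\choi{\mathcal{D}^{\rm{ph}}} = \frac{1+q}{2}\ketbra{\Phi_{00}} + \frac{1-q}{2}\ketbra{\Phi_{01}},
\end{equation}
so in the ordering $[\lambda_{00},\lambda_{01},\lambda_{10},\lambda_{11}]$ the eigenvalue vector is $\bm{\lambda}^{\rm{ph}} = [\,(1+q)/2,\,(1-q)/2,\,0,\,0\,]^{t}$, and likewise $\bm{\mu}^{\rm{ph}} = [\,(1+q')/2,\,(1-q')/2,\,0,\,0\,]^{t}$ for the target.

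I would then invoke Result~\ref{Result1}: an allowed operation $\dephase{q}\to\dephase{q'}$ exists iff some convex combination $\mathbb{D}=\sum_{nm}p_{nm}\,\sigma_x^n\otimes\sigma_x^m$ satisfies $\mathbb{D}\,\bm{\lambda}^{\rm{ph}}=\bm{\mu}^{\rm{ph}}$. The four local permutations flip the two Bell indices, sending $\bm{\lambda}^{\rm{ph}}$ to the four vectors $[(1+q)/2,(1-q)/2,0,0]^{t}$, $[(1-q)/2,(1+q)/2,0,0]^{t}$, $[0,0,(1+q)/2,(1-q)/2]^{t}$ and $[0,0,(1-q)/2,(1+q)/2]^{t}$. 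Matching components against $\bm{\mu}^{\rm{ph}}$ and imposing $\sum_{nm}p_{nm}=1$ gives a small linear system to solve.

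The only point that needs care --- and the sole place I expect any obstacle --- is the support structure. Because $\bm{\mu}^{\rm{ph}}$ vanishes on its last two components while the index-flipping permutations $\sigma_x^1\otimes\sigma_x^0$ and $\sigma_x^1\otimes\sigma_x^1$ move weight precisely there, nonnegativity of the $p_{nm}$ forces $p_{10}=p_{11}=0$. The remaining equations yield $p_{00}=\frac{1}{2}(1+q'/q)$ and $p_{01}=\frac{1}{2}(1-q'/q)$, and requiring $0\le p_{00},p_{01}\le 1$ with $q',q\ge 0$ gives exactly $q'\le q$. The degenerate endpoint $q=0$ is handled separately and trivially, since there $\bm{\lambda}^{\rm{ph}}=[1/2,1/2,0,0]^{t}$ is invariant under the second-index swap and the reachable set collapses to $q'=0$, again consistent with $q'\le q$. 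Establishing both implications this way completes the proof of the Lemma.
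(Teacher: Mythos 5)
Your proposal is correct and follows essentially the same route as the paper: diagonalise the Choi-state of $\dephase{q}$ in the Bell basis, apply Result~\ref{Result1}, and solve the resulting linear system for the convex weights, finding $p_{10}=p_{11}=0$, $p_{00}=\tfrac{1}{2}(1+q'/q)$, $p_{01}=\tfrac{1}{2}(1-q'/q)$, with nonnegativity forcing $q'\le q$. Your explicit handling of the degenerate endpoint $q=0$ is a small point of extra care that the paper's proof leaves implicit.
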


\begin{proof}
The Choi-state of $\dephase{q}$ diagonlised in the Bell basis is given by 
\begin{equation}
    \choi{\mathcal{D}\rm{ph}} = \frac{1}{2}(1+q) \ketbra{\Phi_{00}} + \frac{1}{2}(1-q)\ketbra{\Phi_{01}},
\end{equation}
giving the vector of eigenvalues, $\bm{\lambda}^{\rm{ph}}$, to be a
\begin{equation}
    \bm{\lambda}^{\rm{ph}} = \frac{1}{2}\begin{bmatrix}
        (1+q) \\
        (1-q) \\
        0 \\
        0 
    \end{bmatrix}.
\end{equation}
The target channel $\bm{\mu}^{\rm{ph}}$ is then 
\begin{equation}
    \hspace{0.5cm} \bm{\mu}^{\rm{ph}} = \frac{1}{2}\begin{bmatrix}
        (1+q') \\
        (1-q') \\
        0 \\
        0 
    \end{bmatrix}.
\end{equation}
An allowed operation $\dephase{q} \rightarrow \dephase{q'}$ exists if and only if there exists a set of values $\{p_{00}, p_{01}, p_{10}, p_{11} \}$, where $\sum_{ij} p_{ij} = 1$ and $p_{ij}$ is real and positive for all $(i,j)$, such that 
\begin{equation}
    \frac{1}{2}\begin{bmatrix}
        (1+q') \\
        (1-q') \\
        0 \\
        0 
    \end{bmatrix} = \frac{1}{2}\begin{bmatrix}
        p_{00}(1+q) + p_{01}(1-q) \\
        p_{00}(1-q) + p_{01}(1+q) \\
        p_{10}(1-q) + p_{11}(1+q) \\
        p_{10}(1+q) + p_{11}(1-q)
    \end{bmatrix}.
\end{equation}
As above, this forms produces a set of four equations that, along with the condition $\sum_{ij} p_{ij} = 1$, can be used to find the convex combination coefficients in terms of $q$ and $q'$. They are found to be 
\begin{equation}
    \begin{split}
        p_{00} &= \frac{1}{2}(1+\frac{q'}{q}) \\
        p_{01} &= \frac{1}{2}(1-\frac{q'}{q}) \\
        p_{10} &= p_{11} = 0. 
    \end{split}
\end{equation}
By again enforcing the conditions that $0 \leq p_{ij} \leq 1 ~ \forall~ i,j$, it can be seen that there exists an allowed operation of $\dynamicRT$ mapping $\dephase{q}$ to $\dephase{q'}$ if and only if $q' \leq q$.  
\end{proof}

\subsubsection{Dephasing to Depolarising}
We now consider when there exists an allowed operation mapping a dephasing channel, $\dephase{q}$, to a depolarising channel, $\depola{s}$.  
\begin{lemma}
    There exists an allowed operation $\Pi \in \freeD$ such that $\depola{s} = \Pi(\dephase{q})$ if and only if $s \leq \frac{q}{2-q}$.   
\end{lemma}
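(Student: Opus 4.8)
The plan is to apply Result~\ref{Result1} directly, in exactly the same manner as the two preceding Lemmas. The only inputs required are the Choi eigenvalue vectors already recorded above: $\bm{\lambda}^{\rm{ph}} = \frac{1}{2}[\,1+q,\,1-q,\,0,\,0\,]^{t}$ for the source channel $\dephase{q}$, and $\bm{\mu}^{\rm{pol}} = [\,(1+3s)/4,\,(1-s)/4,\,(1-s)/4,\,(1-s)/4\,]^{t}$ for the target channel $\depola{s}$. By Result~\ref{Result1}, an allowed operation $\depola{s}=\Pi(\dephase{q})$ exists if and only if $\bm{\mu}^{\rm{pol}}$ lies in the convex hull of the four local-permutation images $\{\perm{0}\bm{\lambda}^{\rm{ph}},\,\perm{1}\bm{\lambda}^{\rm{ph}},\,\perm{2}\bm{\lambda}^{\rm{ph}},\,\perm{3}\bm{\lambda}^{\rm{ph}}\}$. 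The entire question thus reduces to a feasibility problem for a probability vector $(a_0,a_1,a_2,a_3)$ with $\sum_k a_k = 1$ and $\bm{\mu}^{\rm{pol}} = \sum_k a_k \perm{k}\bm{\lambda}^{\rm{ph}}$.

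First I would write out the four images explicitly. Because the support of $\bm{\lambda}^{\rm{ph}}$ sits entirely in the first block (indices $00,01$), the permutations $\perm{0}$ and $\perm{3}$ keep the weight within that block (merely swapping the two nonzero entries), whereas $\perm{1}$ and $\perm{2}$ transport it to the second block (indices $10,11$). Matching components then produces four linear equations: the two identical last entries of $\bm{\mu}^{\rm{pol}}$ are fed only by $\perm{1}$ and $\perm{2}$, while its first two entries are fed only by $\perm{0}$ and $\perm{3}$.

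Next I would solve this system. The two equal last components force $a_1=a_2$ and then pin $a_1=a_2=(1-s)/4$. Summing and differencing the first two component equations gives $a_0+a_3=(1+s)/2$ and $a_0-a_3=s/q$, which determines $a_0$ and $a_3$ uniquely; normalisation is then automatic. The final step is imposing nonnegativity. The conditions on $a_1=a_2$ and on $a_0$ hold trivially for $s,q\in[0,1]$, so the \emph{binding} constraint is $a_3 = \frac{1}{2}\big[(1+s)/2 - s/q\big] \geq 0$, which rearranges to $q(1+s)\geq 2s$, i.e.\ $s\leq q/(2-q)$, precisely the claimed bound.

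The computation is routine; the only points needing care are correctly tracking how each $\perm{k}$ permutes the eigenvalue-index labels, so that the right entries of $\bm{\lambda}^{\rm{ph}}$ are paired with the right entries of $\bm{\mu}^{\rm{pol}}$, and identifying which nonnegativity inequality is active. I would also dispatch the degenerate case $q=0$ separately: there $\dephase{0}$ is the completely dephasing channel, $\bm{\lambda}^{\rm{ph}}=\frac{1}{2}[\,1,1,0,0\,]^{t}$, and only $s=0$ is feasible, consistent with $q/(2-q)=0$. Finally, the same support picture transparently explains the asymmetry with the impossible depolarising-to-dephasing direction: every local permutation of $\bm{\lambda}^{\rm{pol}}$ has all four entries at least $(1-s)/4>0$ for $s<1$, so no convex combination can collapse onto the two-sector support of a dephasing vector.
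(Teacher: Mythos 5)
Your proposal is correct and follows essentially the same route as the paper's proof: apply Result~\ref{Result1}, write the four component equations from the local permutations of $\bm{\lambda}^{\rm{ph}}$, solve for the convex coefficients (your $a_0,a_3,a_1{=}a_2$ match the paper's $p_{00},p_{01},p_{10}{=}p_{11}$ exactly), and identify $a_3\geq 0$ (the paper's $p_{01}\geq 0$) as the binding constraint yielding $s\leq q/(2-q)$. Your explicit treatment of the degenerate $q=0$ case and the support-based remark on the asymmetry are small additions the paper omits, but they do not change the argument.
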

\begin{proof}
An allowed operation $\dephase{q} \rightarrow \depola{s}$ exists if and only if there exists a set of values $\{p_{00}, p_{01}, p_{10}, p_{11} \}$, where $\sum_{ij} p_{ij} = 1$ and $p_{ij}$ is real and positive for all $(i,j)$, such that 
\begin{equation}
    \begin{bmatrix}
        s +  (1-s)/4 \\
        (1-s)/4 \\
    (1-s)/4 \\
     (1-s)/4 \\
    \end{bmatrix} = \frac{1}{2}\begin{bmatrix}
        p_{00}(1+q) + p_{01}(1-q) \\
        p_{00}(1-q) + p_{01}(1+q) \\
        p_{10}(1-q) + p_{11}(1+q) \\
        p_{10}(1+q) + p_{11}(1-q)
    \end{bmatrix}.
\end{equation}
This produces a set of four equations that, along with the condition $\sum_{ij} p_{ij} = 1$, can be used to find the convex combination coefficients in terms of $q$ and $s$. The convex combination coefficients can be found to be 
\begin{equation}
    \begin{split}
        p_{00} &= \frac{1}{4} \big(s + 1 + \frac{2s}{q} \big) \\
        p_{01} &= \frac{1}{4} \big(s + 1 - \frac{2s}{q} \big) \\
        p_{10} &= p_{11} = \frac{1}{4}(1-p). 
    \end{split}
\end{equation}
By enforcing $0 \leq p_{01} \leq 1$ in particular, it can be seen that there exists an allowed operation of $\dynamicRT$ mapping $\dephase{q}$ to $\depola{s}$ if and only if $s \leq q/(2-q)$. It is $p_{01}$ that provides the strictest inequalities on the relationship between $s$ and $q$. 
\end{proof}

\subsubsection{Depolarising to Dephasing}
Finally, there exists an allowed operation mapping a depolarising channel, $\depola{s}$, to a dephasing channel, $\dephase{q}$, is considered.  

\begin{lemma}
    There exists an allowed operation $\Pi \in \freeD$ such that $\dephase{q} = \Pi(\depola{s})$ if and only if $s=1$.   
\end{lemma}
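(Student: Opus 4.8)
The plan is to reduce the claim, via Result~\ref{Result1}, to a purely geometric statement about eigenvalue vectors and then exploit the support structure of the two Choi-states. First I would record the two relevant vectors: diagonalising the depolarising Choi-state in the Bell basis gives $\bm{\lambda}^{\rm{pol}}$ with one entry $a=(3s+1)/4$ and three entries equal to $b=(1-s)/4$, while the dephasing target (computed exactly as in the previous example) is $\bm{\mu}^{\rm{ph}}=\tfrac12[(1+q),(1-q),0,0]^{t}$. By Result~\ref{Result1}, an allowed operation $\depola{s}\to\dephase{q}$ exists if and only if $\bm{\mu}^{\rm{ph}}$ lies in the convex hull of the local permutations $\{\perm{0},\perm{1},\perm{2},\perm{3}\}$ applied to $\bm{\lambda}^{\rm{pol}}$, i.e.\ if and only if there is a convex-combination matrix $\mathbb{D}$ with $\bm{\mu}^{\rm{ph}}=\mathbb{D}\,\bm{\lambda}^{\rm{pol}}$.

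The crux of the argument is a support (positivity) obstruction, and I expect this to be the only genuine content of the proof. Each local permutation merely rearranges the four components of $\bm{\lambda}^{\rm{pol}}$, so every vector $\perm{i}\,\bm{\lambda}^{\rm{pol}}$ has the same multiset of entries $\{a,b,b,b\}$. For $s<1$ one has $a=(3s+1)/4\ge 1/4>0$ and $b=(1-s)/4>0$, so each permuted vector is strictly positive in all four components. Any convex combination of strictly positive vectors is again strictly positive in every component, whence every candidate image $\mathbb{D}\,\bm{\lambda}^{\rm{pol}}$ has full support. Since $\bm{\mu}^{\rm{ph}}$ has two vanishing components, it can never be reached, so statement~2 of Result~\ref{Result1} fails and no allowed operation exists whenever $s<1$. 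The subtle point — that local permutation-mixing cannot manufacture the zero eigenvalues demanded by the rank-deficient dephasing Choi-state — is precisely this preservation of strict positivity.

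For the converse it suffices to treat $s=1$ separately, where $b=0$ and $\bm{\lambda}^{\rm{pol}}=[1,0,0,0]^{t}$, i.e.\ $\depola{1}=\mathcal{I}$ is the identity channel. Here the four permuted vectors $\perm{i}\,\bm{\lambda}^{\rm{pol}}$ are exactly the standard basis vectors, whose convex hull is the entire probability simplex in $\mathbb{R}^{4}$ and therefore contains $\bm{\mu}^{\rm{ph}}$; equivalently, since $\dephase{q}$ is itself a qubit mixed unitary (hence a noisy operation), the trivial allowed operation $\dephase{q}=\dephase{q}\circ\mathcal{I}$ given by post-processing alone realises the conversion. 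This establishes the ``if'' direction and completes the equivalence, with the remaining verification of explicit coefficients being routine and parallel to the three preceding examples.
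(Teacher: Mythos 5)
Your proposal is correct and follows essentially the same route as the paper: both reduce the question via Result~\ref{Result1} to whether $\bm{\mu}^{\rm{ph}}$ lies in the convex hull of local permutations of $\bm{\lambda}^{\rm{pol}}$, and both observe that for $s<1$ every component of any such convex combination is at least $(1-s)/4>0$, which is incompatible with the two vanishing entries of the dephasing vector. Your explicit verification of the $s=1$ case (where the paper simply dismisses it as the trivial identity channel) is a minor but welcome addition, not a change of method.
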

\begin{proof}
An allowed operation $\depola{s} \rightarrow \dephase{q}$ exists if and only if there exists a set of values $\{p_{00}, p_{01}, p_{10}, p_{11} \}$, where $\sum_{ij} p_{ij} = 1$ and $p_{ij}$ is real and positive for all $(i,j)$, such that 
\begin{equation}
    \frac{1}{2}\begin{bmatrix}
        (1+q) \\
        (1-q) \\
        0 \\
        0 
    \end{bmatrix} = \begin{bmatrix}
        p_{00}s +  (1-s)/4 \\
        p_{01}s + (1-s)/4 \\
    p_{10}s + (1-s)/4 \\
     p_{11}s + (1-s)/4 \\
    \end{bmatrix}.
\end{equation}
It can be seen immeditaly that the final two terms can only be zero if and only if $s=1$. Hence, there exists an allowed operation of $\dynamicRT$ mapping $\depola{s}$ to $\dephase{q}$ if and only if $s = 1$. This is the trivial case in which $\depola{s}(\rho) = \mathcal{I}(\rho)$ and is not a depolarising channel. We can therefore conclude that no allowed operation exists that can map any depolarising channel to any dephasing channel. 
\end{proof}

Physically, the above results say that if $s \leq q/(2-q)$ then a dephasing channel $\dephase{q}$ preserves purity better on \textit{all} input states than $\depola{s}$. However, for all values of $s$ and $q$, apart from the trivial case, there will be some input states for which $\dephase{q}$ will preserve purity better than $\depola{s}$.  

\section{\label{Appendix Three}Supplementary Material C: A complete Set of Monotones}
Result~\ref{result2} and Result~\ref{result4} states that if the vectors in $\Tilde{{\mathfrak{L}}}$ are linearly independent they form a simplex in $\mathbb{R}^{d_{S}^{2}-1}$ dimensional space and a set of $d_{S}^{2}$ inequalities can be found that are a complete set of monotones. If the vectors in $\Tilde{{\mathfrak{L}}}$ are not linearly independent then varying amounts of monotones are needed to make a complete set. 

As mentioned in the main text, in general these vector will be linearly independent. Any amount of noise included in the description of the channels will likely ensure that the subset of cases that are linearly dependent will be perturbed such that they are no longer linearly dependent. We give a description of both cases here for completeness.

Recall that $\mathfrak{L} = \{ \perm{i} \bm{\lambda} : \forall ~ \perm{i} \in \mathfrak{P} \}$ and that $\Tilde{\mathfrak{L}}$ are the vectors in $\mathfrak{L}$ with the final value in each vector discarded due to normalisation. In what follows, dimSpan($\cdot$) denotes the dimension of the space spanned by the set ($\cdot$). 
\begin{lemma} \label{lemma3}
\textit{If}
\begin{equation}
    \textup{dimSpan} (\mathfrak{K}) = d_{S}^{2} - 1, 
\end{equation}
\textit{where}
\begin{equation}
    \mathfrak{K} = \{ \Tilde{\bm{\lambda}}_{i} - \Tilde{\bm{\lambda}}_{0} : \Tilde{\bm{\lambda}}_{0} \in  \Tilde{\mathfrak{L}}, \forall ~ \Tilde{\bm{\lambda}}_{i} \in \Tilde{\mathfrak{L}} \}, 
\end{equation}
\textit{then $d_{S}^{2}$ inequalities are needed to create a complete set of monotones. Else if}
\begin{equation}
    \textup{dimSpan} (\mathfrak{K}) = k, k < d_{S}^{2} - 1,
\end{equation}
\textit{then $\eta$ inequalities and $\zeta$ equalities are needed to create a complete set of monotones, where $\eta$ and $\zeta$ are integers less than $d_{s}^{2}$.}
\end{lemma}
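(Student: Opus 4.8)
The plan is to recast the statement as a fact about the half-space (\(\mathcal{H}\)) description of a convex polytope and then read off the required numbers of inequalities and equalities from its dimension. By Result~\ref{Result1} (and Result~\ref{result3} in the Weyl-covariant case) an allowed operation $\bm{\lambda}\to\bm{\mu}$ exists if and only if $\Tilde{\bm{\mu}}$ lies in $P:=\mathrm{conv}(\Tilde{\mathfrak{L}})$, the convex hull of the local permutations of $\Tilde{\bm{\lambda}}$. A complete set of monotones is, by definition, a family of linear functions whose simultaneous satisfaction is equivalent to this membership; geometrically this is exactly a minimal half-space description of $P$, consisting of one inequality per facet together with whatever equalities confine a point to the affine hull of $P$. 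The free point $\Tilde{\bm{\Lambda}}$ is the centroid of the orbit $\Tilde{\mathfrak{L}}$ and therefore lies in the relative interior of $P$; it fixes the ``inside'' side of each facet inequality, so that each monotone checks whether $\Tilde{\bm{\mu}}$ sits on the same side of a facet hyperplane as $\Tilde{\bm{\Lambda}}$.

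The first step is to identify $\textup{dimSpan}(\mathfrak{K})$ with $\dim P$. The set $\mathfrak{K}$ consists of the difference vectors $\Tilde{\bm{\lambda}}_i-\Tilde{\bm{\lambda}}_0$, whose span is precisely the linear subspace parallel to the affine hull of $\Tilde{\mathfrak{L}}$; hence $\dim P=\textup{dimSpan}(\mathfrak{K})$. This reduces the lemma to counting faces of $P$ as a function of its dimension. In the non-degenerate case $\textup{dimSpan}(\mathfrak{K})=d_S^2-1$, the polytope $P$ is full dimensional in the $(d_S^2-1)$-dimensional normalisation subspace and is generated by the $|\mathfrak{P}|=d_S^2$ affinely independent points of $\Tilde{\mathfrak{L}}$. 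It is therefore a $(d_S^2-1)$-simplex, which has exactly $d_S^2$ facets and no affine-hull constraints, yielding $d_S^2$ inequalities and no equalities; this recovers Results~\ref{result2} and~\ref{result4}.

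In the degenerate case $\textup{dimSpan}(\mathfrak{K})=k<d_S^2-1$, the affine hull of $P$ is a $k$-flat inside the $(d_S^2-1)$-dimensional subspace, cut out by exactly $\zeta=(d_S^2-1)-k$ independent equalities, so that $1\le\zeta\le d_S^2-1<d_S^2$ is immediate. Relative to this flat, $P$ is a full-dimensional $k$-polytope whose facets supply the $\eta$ remaining inequalities, and the equalities together with these inequalities form the complete set.

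The main obstacle is controlling $\eta$, the facet number of the degenerate polytope. The cleanest route is to exploit that $\Tilde{\mathfrak{L}}$ is a single orbit of $\Tilde{\bm{\lambda}}$ under the abelian group of local (cyclic) permutations, so that $P$ is a highly symmetric orbit polytope. Decomposing $\bm{\lambda}$ into the irreducible components of this regular representation, one finds that $\textup{dimSpan}(\mathfrak{K})$ equals the number of non-trivial components on which $\bm{\lambda}$ has a non-zero projection, and that the drop in dimension is caused precisely by the components that vanish -- equivalently by permutations that fix $\bm{\lambda}$ and thereby merge orbit points. Enumerating which components survive fixes both the vertex set and the face lattice of $P$ and hence determines $\eta$ and $\zeta$ explicitly; carrying out this case analysis for each admissible $k$ is the substantive computation and is where care is required to confirm the stated bounds. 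I would organise it by the value of $k$, treating $k=0$ (the free point, with $d_S^2-1$ equalities and no non-trivial inequality), $k=1$ (a segment, two inequalities and $d_S^2-2$ equalities), and the higher degenerate strata in turn.
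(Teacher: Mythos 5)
Your treatment of the non-degenerate case is correct and follows essentially the same route as the paper: identify $\textup{dimSpan}(\mathfrak{K})$ with the dimension of the polytope $P=\mathrm{conv}(\Tilde{\mathfrak{L}})$, observe that $d_S^2$ affinely independent vertices make $P$ a $(d_S^2-1)$-simplex with exactly $d_S^2$ facets, and use the centroid $\Tilde{\bm{\Lambda}}$ to orient each facet inequality. You are in fact slightly cleaner than the paper here, since you correctly phrase the hypothesis in terms of affine independence of $\Tilde{\mathfrak{L}}$ (the paper speaks of linear independence of the vectors themselves, which is not quite the condition that $\mathfrak{K}$ encodes).

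The gap is in the degenerate case, and you have correctly located it but not closed it. Your count $\zeta=(d_S^2-1)-k$ for the affine-hull equalities is right and matches the paper's qubit examples ($k=2$ gives $\zeta=1$, $k=1$ gives $\zeta=2$). But the bound $\eta<d_S^2$ on the number of facet inequalities is the only nontrivial content of that half of the lemma, and it does not follow from general polytope combinatorics: by the upper bound theorem a polytope on $d_S^2$ vertices in dimension $k\geq 4$ can have more than $d_S^2$ facets, so the orbit structure under the abelian group of local (cyclic) permutations must genuinely be invoked, and you defer exactly that computation. For comparison, the paper does not prove this in general either --- it only exhibits the two degenerate qubit strata (a quadrilateral with $\eta=4$, $\zeta=1$, and a segment for which it reports $\eta=1$, $\zeta=2$) and asserts that the general count "depends on the geometry in a non-trivial way." Note also two frictions with the paper's numbers that your case analysis would need to resolve: you count two facet inequalities for a segment where the paper reports one, and the paper's own quadrilateral example has $\eta=4=d_S^2$, which already saturates rather than falls below the bound stated in the lemma. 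Your proposed representation-theoretic decomposition of the orbit of $\bm{\lambda}$ under the local permutation group is a plausible and more systematic route to settling this than the paper's case-by-case inspection, but as written it is a plan rather than a proof.
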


\begin{proof}
If $\textrm{dimSpan} (\mathfrak{K}) = d_{S}^{2} - 1$ the set of vectors $\Tilde{{\mathfrak{L}}}$ are linearly independent and form a simplex in $\mathbb{R}^{d_{S}^{2}-1}$. If $\Tilde{\bm{\mu}}$, the $\mathbb{R}^{d_{S}^{2}-1}$ representation of a channel $\bm{\mu} \in \mathbb{R}^{d_{S}^{2}}$, is within the simplex then it is in the convex hull of the vectors in $\Tilde{\mathfrak{L}}$. This is equivalent to $\bm{\mu}$ being in the convex hull of the vectors in $\mathfrak{L}$. If so, an allowed operation exists between the channels represented by $\bm{\lambda}$ and $\bm{\mu}$.

Each monotone is associated to one facet of the simplex and assess if $\Tilde{\bm{\mu}}$ is on the side of the facet that places it inside the simplex, or the side of the facet that places it outside. This can be achieved through the description of the hyper-planes that encompasses each facet. 

A hyper-plane is described by a vector $\bm{f} \in \mathbb{R}^{d_{S}^{2} - 1}$ that is perpendicular to all vectors in the hyper-plane. Any vector beginning at the origin and terminating on the hyper-plane, $\bm{a} \in \mathbb{R}^{d_{S}^{2} - 1}$, has an equal value dot product with $\bm{f}$, $\bm{f} \cdot \bm{a} = l$. Any vector that terminates before the hyper-plane, $\bm{b} \in \mathbb{R}^{d_{S}^{2} - 1}$, has a dot product with $\bm{f}$ less than $l$, $\bm{f} \cdot \bm{b} < l$. Similarly, any vector that terminates after the hyper-plane, $\bm{c} \in \mathbb{R}^{d_{S}^{2} - 1}$, has a dot product with $\bm{f}$ greater than $l$, $\bm{f} \cdot \bm{c} > l$. The value of the dot product of a vector with $\bm{f}$ therefore provides information about the geometry of the vector with respect to the plane $\bm{f}$ defines. See Fig.~~\ref{fig:2} for a graphical depiction in $\mathbb{R}^{3}$ of these vectors with respect to a plane given by $\bm{f}$.
\begin{figure}
    \centering
    \includegraphics[scale=0.4]{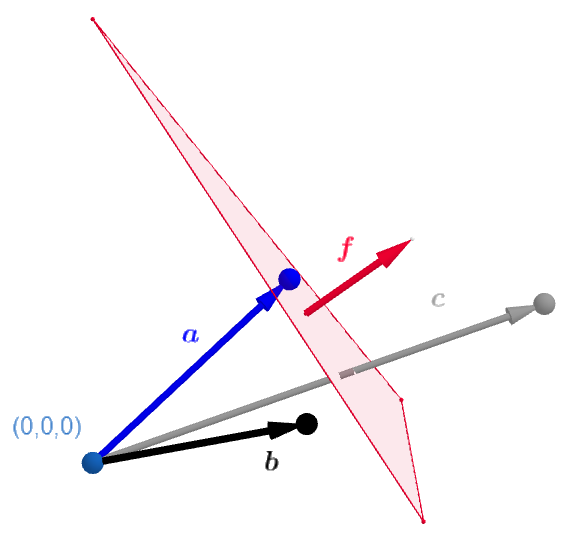}
    \caption{A section of a plane is shown in red. The plane is defined by the vector $\bm{f}$ that is perpendicular to all vector in the plane. The three vectors $\{ \bm{a}, \bm{b}, \bm{c} \}$ begin at the origin. The value of the dot product of these vectors with $\bm{f}$ is dependent on where these vectors determinate with respect to the plane described by $\bm{f}$. (Figure made with \cite{gg2}).}
    \label{fig:2}
\end{figure}

Each hyper-plane encompassing a facet of the simplex has all but one of the vectors in $\Tilde{{\mathfrak{L}}}$ terminate within it. A vector $\bm{f}$ can be found that is perpendicular to these $d_{S}^{2}-1$ vectors and $l$ can then be calculated by taking the dot product of $\bm{f}$ with one of these vectors. The value of $l$ can then be thought of as a distance like measure from the origin to one facet of the simplex. 

To decide what the `correct side' of the hyper-plane $\Tilde{\bm{\mu}}$ needs to be on is, such that it is inside the simplex, $\Tilde{\bm{\Lambda}} \cdot \bm{f}$ can be checked. Given $\Tilde{\bm{\Lambda}}$ is always in the centre of the simplex, the correct side of the plane to be on is the side $\Tilde{\bm{\Lambda}}$ is on. 

By calculating the value of $\bm{f} \cdot \Tilde{\bm{\mu}}$ and comparing it to $l$ and $\Tilde{\bm{\Lambda}} \cdot \bm{f}$, it can be decided if $\Tilde{\bm{\mu}}$ is on the same side of the plane, given by $\bm{f}$, as $\Tilde{\bm{\Lambda}}$.

A set of vectors $\mathfrak{f}$ can then be found where each element describes a hyper-plane encompassing one facet of the simplex. By repeating the above procedure for all elements in $\mathfrak{f}$, of which there are $d_{S}^{2}$, it can be decided if $\Tilde{\bm{\mu}}$ is in the convex hull of $\Tilde{\mathfrak{L}}$ using $d_{S}^{2}$ inequalities.

If $\textrm{dimSpan} (\mathfrak{K}) < d_{S}^{2} - 1$, the number of monotones needed will vary with the geometry of the problem, with both inequalities and equalities being needed. This is demonstrated through an example using qubit channels, $d_{S}=2$. 

If $\textrm{dimSpan} (\mathfrak{K}) = 2$, then the simplex formed by the vectors in $\Tilde{\mathfrak{L}}$ is a quadrilateral in $\mathbb{R}^{3}$. For an allowed operation to exists $\Tilde{\bm{\mu}}$ must sit in this quadrilateral. In this case one equality is needed to assess if $\Tilde{\bm{\mu}}$ is in the plane that contains this quadrilateral. Four inequalities are then needed to ensure $\Tilde{\bm{\mu}}$ sits on the correct side of the lines that make the boundaries of this quadrilateral. This leaves a total of four inequalities and one equality, $\eta=4$ and $\zeta=1$.

If the vectors in $\Tilde{\mathfrak{L}}$ form a line then it was found that two equalities and one inequality, $\eta=1$ and $\zeta=2$, were needed. Hence, when $\textrm{dimSpan} (\mathfrak{K}) < d_{S}^{2} - 1$ for general dimension channels the number of constraints needed will depend on the geometry created by $\Tilde{\mathfrak{L}}$ in a non-trivial way. 

\end{proof}

\subsection{Qubit Channels Example}

For clarity, an example of the formation of a complete set of monotones for $\dynamicRT$ when considering qubit channels, $d_{S}=  2$, is given. The complete set of monotones will determine if there exists an allowed operations $\Pi \in \freeD$, such that $\mathcal{M} = \Pi(\mathcal{N})$, where $\mathcal{M,N} \in \freeS$. We assume $\textrm{dimSpan} (\mathfrak{K}) = 3$ in the following.

The vector $\bm{\lambda} \in \mathbb{R}^{4}$ containts the eigenvalues of the Choi-state $\choi{N}$, $\bm{\lambda}^{t} = [\lambda_{00}, \lambda_{01}, \lambda_{10}, \lambda_{11}]$, and $\bm{\mu} \in \mathbb{R}^{4}$ is a vector of eigenvalues of the Choi-state $\choi{M}$, $ \bm{\mu}^{t} = [\mu_{00}, \mu_{01}, \mu_{10}, \mu_{11}]$. 

The set $\Tilde{\mathfrak{L}}$ is given by 
\begin{equation}
    \Tilde{\mathfrak{L}} = \Bigg{\{} \begin{bmatrix}
           \lambda_{00} \\
           \lambda_{01} \\
            \lambda_{10} 
         \end{bmatrix} , \begin{bmatrix}
           \lambda_{01} \\
           \lambda_{00} \\
            \lambda_{11} 
         \end{bmatrix} , \begin{bmatrix}
           \lambda_{11} \\
           \lambda_{10} \\
            \lambda_{01} 
         \end{bmatrix}, \begin{bmatrix}
           \lambda_{10} \\
           \lambda_{11} \\
            \lambda_{00} 
         \end{bmatrix} \Bigg{\}}, \label{setTildeL}
\end{equation}
with the simplex forming a tetrahedron when plotted in $\mathbb{R}^{3}$. A collection of planes where each one encompasses a face of the tetrahedron must be found. The first face considered is that which includes the first three vectors of $\Tilde{\mathfrak{L}}$. This is described by the following vector
\begin{equation}
    \bm{f}_{00} = \Bigg[  \begin{bmatrix}
           \lambda_{00} \\
           \lambda_{01} \\
            \lambda_{10} 
         \end{bmatrix} - \begin{bmatrix}
           \lambda_{01} \\
           \lambda_{00} \\
            \lambda_{11} 
         \end{bmatrix} \Bigg] \times \Bigg[  \begin{bmatrix}
           \lambda_{00} \\
           \lambda_{01} \\
            \lambda_{10} 
         \end{bmatrix} - \begin{bmatrix}
           \lambda_{11} \\
           \lambda_{10} \\
            \lambda_{01} 
         \end{bmatrix} \Bigg],
\end{equation}
and is shown in Fig.~\ref{fig3:operationalInterpretation}. This is the cross-product of two vectors within the plane, meaning $\bm{f}_{00}$ is a vector perpendicular to all vector in the plane. The value of $l$ can then be found by taking the dot product of $\bm{f}_{00}$ with any of the three vectors,
\begin{equation}
    \bm{f}_{00} \cdot  \begin{bmatrix}
           \lambda_{00} \\
           \lambda_{01} \\
            \lambda_{10} 
         \end{bmatrix} = \bm{f}_{00} \cdot  \begin{bmatrix}
           \lambda_{01} \\
           \lambda_{00} \\
            \lambda_{11} 
         \end{bmatrix} = \bm{f}_{00} \cdot   \begin{bmatrix}
           \lambda_{11} \\
           \lambda_{10} \\
            \lambda_{01} 
         \end{bmatrix} = l,
\end{equation}
as all vectors that terminate in the plane have the same dot product with the vector that defines it. It can be seen that there is some choice in how the value of $l$ is found, this gives some flexibility in the form of the monotone. 

\begin{figure}
    \centering
    \includegraphics[scale=0.3]{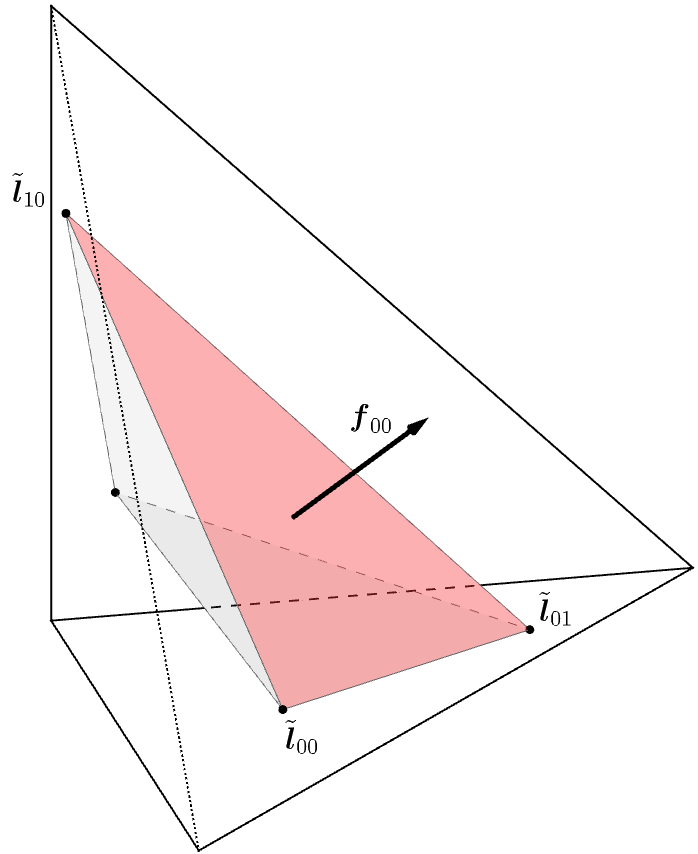}
    \caption{A single face of the simplex given by $\Tilde{\mathfrak{L}}$ is highlighted and the vector defining the plane encompassing that face, $\bm{f}_{00}$, is shown. The plane contains three of the four vectors in $\Tilde{\mathfrak{L}}$ (Figure made with \cite{gg2}).}
    \label{fig3:operationalInterpretation}
\end{figure}

By checking the dot product of $\bm{f}_{00}$ with $\Tilde{\bm{\Lambda}}$, the correct side of the face can be found. Here we assume 
\begin{equation}
    \bm{f}_{00} \cdot \Tilde{\bm{\Lambda}} = \bm{f}_{00} \cdot \begin{bmatrix}
           1/4 \\
           1/4 \\
        1/4 
         \end{bmatrix} < l.
\end{equation}
Hence, any vector $\bm{a} \in \mathbb{R}^{3}$ with $\bm{a} \cdot \bm{f}_{00} < l$ will be on the correct side of the face (The introduction of a minus sign where necessary can ensure this is always a desire order of inequaility $<$). 

The three dimensional representation of the channel $\bm{\mu}$ is given by $\Tilde{\bm{\mu}}^{t} = [\mu_{00}, \mu_{01}, \mu_{10}]$. One of the set of complete monotones is then given by 
\begin{equation}
    \bm{f}_{00} \cdot \Tilde{\bm{\mu}} \leq l.
\end{equation}
This assesses if $\Tilde{\bm{\mu}}$ is on the side of the hyperplane, given by $\bm{f}_{00}$, such that it could be inside the tetrahedron.

The above procedure is then repeated for each combination of three vectors in $\Tilde{\mathfrak{L}}$ giving $4$ inequalities characterised by the set of $4$ vectors $\mathfrak{f} = \{ \bm{f}_{00}, \bm{f}_{01}, \bm{f}_{10}, \bm{f}_{11} \}$. Using Table.~\ref{montones} for the values of the dot product, the complete set of monotones are 
\begin{equation}
    \{ \bm{f}_{00} \cdot \Tilde{\bm{\mu}} \leq l, \bm{f}_{01} \cdot \Tilde{\bm{\mu}} \leq m, \bm{f}_{10} \cdot \Tilde{\bm{\mu}} \leq o, \bm{f}_{11} \cdot \Tilde{\bm{\mu}} \leq q \}
\end{equation}
These 4 inequalities are a complete set of monotones such that if all four inequalities are satisfied if and only if there exists an allowed operation mapping $\mathcal{N} \rightarrow \mathcal{M}$. 
\def\arraystretch{2}
\begin{table}[h]
\begin{tabular}{c|c|c|c|c|}
\cline{2-5}
\multicolumn{1}{l|}{}                       & $\bm{f}_{00}$ & $\bm{f}_{01}$ & $\bm{f}_{10}$ & $\bm{f}_{11}$ \\ \hline
\multicolumn{1}{|c|}{$\Tilde{\bm{l}}_{00}$} & $l$           & $m$           & $o$           & $\delta$      \\ \hline
\multicolumn{1}{|c|}{$\Tilde{\bm{l}}_{01}$} & $l$           & $m$           & $\gamma$      & $q$           \\ \hline
\multicolumn{1}{|c|}{$\Tilde{\bm{l}}_{10}$} & $l$           & $\beta$       & $o$           & $q$           \\ \hline
\multicolumn{1}{|c|}{$\Tilde{\bm{l}}_{11}$} & $\alpha$      & $m$           & $o$           & $q$           \\ \hline
\end{tabular}
\caption{The table gives the values of the dot product of the elements of $\Tilde{\mathfrak{L}}$, given by $\Tilde{\bm{l}}_{ij}$, and the elements of $\mathfrak{f}$, given by $\bm{f}_{ij}$. The set of values $\{ \alpha, \beta, \gamma, \delta \}$ are constants unrelated to each other. \label{montones}}
\end{table}
\section{\label{sec:AppendixFour}Supplementary Material D: n Dimensional Channels}
Result~\ref{result3} characterises the transformation condition of when considering Weyl-covariant channels acting on $d_{S} = n$ dimensional system. In this case $\dynamicRT$ becomes $\dynamicRT_{n}$. The result states that an allowed operation exists between a channel $\mathcal{N}$ and a channel $\mathcal{M}$ if a vector of eigenvalues of the Choi-state of $\mathcal{M}$ is in the convex hull of local cyclic permutations of a vector of eigenvalues of the Choi-state of $\mathcal{N}$. We provide a full derivation here after an quick review of Weyl-covariant channels. For more details on Weyl-covariant channels see \cite{watrous_2018}.

A channel is Weyl-covariant if 
\begin{equation}
   W_{ab}\mathcal{N}(\rho)W_{ab}^{\dagger} = \mathcal{N}(W_{ab} \rho W_{ab}^{\dagger}) ~ \forall ~ a,b, \rho
\end{equation}
where $W_{ab}$ are the discrete Weyl operators, 
\begin{equation}
    W_{ab} = \sum_{c} \Omega^{bc} \ket{c + a ~ \textrm{mod} ~d_{S}}\bra{c}, \hspace{2mm} \Omega = e^{\frac{2 \pi i}{d_{S}}}, \hspace{2mm} a,b \in \{0, 1, ~... ~,d_{S}-1\}.
\end{equation}
These discrete Weyl operators can be used to generate the Bell basis states in $d_{S}^{2}$ dimensional space, as mentioned in Supplementary Material B. They are transformed in the following ways by transposition, the adjoint and multiplication \cite{watrous_2018},
\begin{equation}
    \begin{split}
        & (W_{\alpha, \beta}) ^ {t} = \Omega^{-\alpha \beta}W_{-\alpha, \beta}, \\
        & (W_{\alpha, \beta}) ^ {\dagger} = \Omega^{\alpha \beta}W_{-\alpha, -\beta}, \\
        & W_{\alpha, \beta}W_{\gamma, \delta} = \Omega^{\beta\gamma}W_{\alpha + \gamma, \beta + \delta} = \Omega^{\beta\gamma - \alpha\delta}W_{\gamma, \delta}W_{\alpha, \beta}. \label{weylIdentites}
    \end{split}
\end{equation}
These identities are employed in the subsequent proofs.

\subsection{Proof of Statement 1 implies statement 2 (Result~\ref{result3})}
We begin by showing that statement 1 of Result~\ref{result3} implies statement 2. We then show that statement 2 implies statement 1. 
\begin{proof}
To begin, it is noted that Weyl-covariant channels are a subset of mixed unitary channels and can always be written in the following form
\begin{equation}
    \mathcal{N}(\rho) = \sum_{ab} p_{ab} W_{ab} \rho W^{\dagger}_{ab},
\end{equation}
where $p_{ab}$ are elements of a probability vector \cite{watrous_2018}. 

When a restriction is made such that the allowed operations of the underlying static resource theory of informational non-equilibrium are Weyl-covariant channels, $\freeS = \mathfrak{U}_{\mathcal{W}}$, the allowed operations of the resource preservability theory are
\begin{equation}
    \begin{split}
        \choi{M} &= (\mathcal{P} \otimes \mathcal{E} ) \big( \choi{N} \big), \label{Choi State allowed operations Weyl}
    \end{split}
\end{equation}
where $\mathcal{P,E} \in \mathfrak{U}_{\mathcal{W}}$ and $\choi{N}, \choi{M}$ are Choi-states of Weyl-covariant channels. It can be shown that $\choi{M}$ remains the Choi-state of a Weyl-covariant channel if $\mathcal{N,P,E} \in \mathfrak{U}_{\mathcal{W}}$. (We ignore the shared randomness above and in the following proof for clarity, but the same result is reached if it is included). 

The form of the allowed operations can be simplified under the observation that the Choi-states of a Weyl-covariant channel are diagonal in the Bell basis,
\begin{equation}
    \begin{split}
        \choi{N} &= (\mathcal{I} \otimes \mathcal{N}) (\ketbra{\Phi_{00}}) \\
        &= (\mathcal{I} \otimes \sum_{ab} p_{ab} W_{ab}(\cdot)W^{\dagger}_{ab}) (\ketbra{\Phi_{00}}) \\
        &= \sum_{ab} p_{ab} (\mathcal{I} \otimes W_{ab})(\ketbra{\Phi_{00}}) (\mathcal{I} \otimes W^{\dagger}_{ab}) \\
        &= \sum_{ab} p_{ab} \ketbra{\Phi_{ab}},
    \end{split}
\end{equation}
where we have used the fact that discrete Weyl operators can be used to generate the Bell basis states, $\ket{\Phi_{ab}} = (\mathbb{I} \otimes W_{ab}) \ket{\Phi_{00}}$. This is the reverse of what was shown in Supplementary Material A, that Choi-states that are diagonal in the Bell basis correspond to Weyl-covariant channels.

As in the qubit case, the action of channels can therefore be captured through the eigenvalues of their Choi-state.   

By introduction two general Weyl-covariant channels, 
\begin{equation}
    \mathcal{P}(\rho) = \sum_{cd} p_{cd} W_{cd}\rho W_{cd}^{\dagger}, \hspace{5mm}  \mathcal{E}(\rho) = \sum_{ef} q_{ef} W_{ef}\rho W_{ef}^{\dagger},
\end{equation}
the allowed operations can be rewritten as 
\begin{equation}
    \bm{\mu} = \sum_{cd} \sum_{ef} p_{cd} q_{ef} \mathbb{B}^{(cd,ef)} \bm{\lambda},
\end{equation}
where $\bm{\lambda}, \bm{\mu} \in \mathbb{R}^{d_{S}^{2}}$ are vectors of eigenvalues of the Choi-states $\choi{N}, \choi{M}$ respectively and $\mathbb{B}^{(cd,ef)}$ is the matrix with elements $B^{cd,ef}_{nm, kl} = \abs{ \bra{\Phi_{kl}} W_{cd} \otimes W_{ef} \ket{\Phi_{nm}} }^{2}$.

Mirroring the qubit case, to complete the proof we need to show that $\mathbb{B}^{(cd,ef)}$ is a permutation from the set of local cyclic permutations.  

The matrix elements of $\mathbb{B}^{(cd,ef)}$ are given by
\begin{equation}
    \begin{split}
        B_{nm, kl}^{cd,ef} &= | \bra{\Phi_{kl}} \mathbb{I} \otimes W_{ef}(W_{cd})^{t} \ket{\Phi_{nm}} |^{2} \\
        &=  | \bra{\Phi_{kl}} \mathbb{I} \otimes \Omega^{-cd} W_{ef}W_{-c,d} \ket{\Phi_{nm}} |^{2} \\
        &= | \bra{\Phi_{kl}} \mathbb{I} \otimes \Omega^{-cd-cf} W_{e-c,d+f} \ket{\Phi_{nm}} |^{2} \\
        &= | \bra{\Phi_{kl}} \ket{\Phi_{n+e-c, m+d+f}} |^{2}, \\
        &= \delta_{k, n+e-c} ~ \delta_{l, m+d+f}. 
    \end{split}
\end{equation}
Therefore, the matrix $\mathbb{B}^{(cd,ef)}$ can be written as 
\begin{equation}
  \begin{split}
        \mathbb{B}^{(cd,ef)} &= \sum_{kl,nm} \delta_{k, n+e-c} ~\delta_{l, m+d+f} \ket{k,l}\bra{n,m} \\
        &= \sum_{kl,nm} \delta_{k, n+e-c} \ket{k}\bra{n} \otimes \delta_{l, m+d+f} \ket{l}\bra{m}, \\
        &= \sum_{n,m} \ket{n+e-c ~ \textrm{mod} ~ d_{S}} \bra{n} \otimes \ket{m+d+f ~ \textrm{mod} ~ d_{S}}\bra{m} \\
        &= \sum_{n}  \ket{n+e-c ~ \textrm{mod} ~ d_{S}} \bra{n} \otimes \sum_{m} \ket{m+d+f ~ \textrm{mod} ~ d_{S}}\bra{m} \\
        &= W_{e-c,0} \otimes W_{d+f,0} \\
        &= X_{d_{S}}^{e-c} \otimes X_{d_{S}}^{d+f}, \label{LocalPermutationsProof}
  \end{split}
\end{equation}
where $X_{d_{S}}^{\kappa}$ is a generalised Pauli $X$ in $d_{S}$ dimensions. 

To see that $X_{d_{S}}^{e-c}$ is from the set of cyclic (or even) permutations we consider it in the form
\begin{equation}
    \begin{split}
        X_{d_{S}}^{e-c} &= \sum_{n} \ket{n+e-c ~ \textrm{mod} ~ d_{S}} \bra{n},  \\ \label{alocalCyclicPermutation}
        &=  \sum_{n} \ket{i(n)} \bra{n},
    \end{split}
\end{equation}
and use the example of qutrits, $d_{S} = 3.$ The index $n$ counts up in the range $0 \leq n < d_{S} - 1$ whilst $e-c$ is a constant for a given $\mathbb{B}^{(cd,ef)}$. The function $i(n)$ for different constants $e-c$ is evaluated for different inputs $n$ in Table~\ref{Table2:cyclicPermsTable}. From here it can be seen that $X_{d_{S}}^{e-c}$ permits only the cyclic permutations. In order for a non-cyclic permutation to occur $n$ would need to be negative. It can be seen that this is never the case. The same is true for $X_{d_{S}}^{d+f}$. Hence, $\mathbb{B}^{(cd,ef)}$ is the tensor product of two cyclic permutations and therefore from the set of local cyclic permutations.

an allowed operation therefore exists between $\mathcal{N} \rightarrow \mathcal{M}$ if there exists a matrix between $\bm{\lambda}$ and $\bm{\mu}$ that is a convex combination of local cyclic permutations. This completes the proof of this direction of Result~\ref{result3}.

\end{proof}

\def\arraystretch{1.3}
\begin{table}[h]
\begin{tabular}{ccccllll}
\cline{1-4}
\multicolumn{1}{|c|}{\textbf{$n$}}     & \multicolumn{1}{c|}{\textbf{0}} & \multicolumn{1}{c|}{\textbf{1}} & \multicolumn{1}{c|}{\textbf{2}} &  &  &  &  \\ \cline{1-4}
\multicolumn{1}{|c|}{\textbf{$i=n$}}   & \multicolumn{1}{c|}{0}          & \multicolumn{1}{c|}{1}          & \multicolumn{1}{c|}{2}          &  &  &  &  \\ \cline{1-4}
\multicolumn{1}{|c|}{\textbf{$i=n+1$}} & \multicolumn{1}{c|}{1}          & \multicolumn{1}{c|}{2}          & \multicolumn{1}{c|}{0}          &  &  &  &  \\ \cline{1-4}
\multicolumn{1}{|c|}{\textbf{$i=n+2$}} & \multicolumn{1}{c|}{2}          & \multicolumn{1}{c|}{0}          & \multicolumn{1}{c|}{1}          &  &  &  &  \\ \cline{1-4}
\multicolumn{1}{|c|}{\textbf{$i=n-1$}} & \multicolumn{1}{c|}{2}          & \multicolumn{1}{c|}{0}          & \multicolumn{1}{c|}{1}          &  &  &  &  \\ \cline{1-4}
\multicolumn{1}{|c|}{\textbf{$i=n-2$}} & \multicolumn{1}{c|}{1}          & \multicolumn{1}{c|}{2}          & \multicolumn{1}{c|}{0}          &  &  &  &  \\ \cline{1-4}
\multicolumn{1}{l}{}                   & \multicolumn{1}{l}{}            & \multicolumn{1}{l}{}            & \multicolumn{1}{l}{}            &  &  &  & 
\end{tabular}
\caption{A Table showing the function $i(n)=n+\nu ~ \textrm{mod} ~ d_{in}$ for various values of the constant $\nu = e-c$ when considering qutrit channels, such that $d_{in}=3$. \label{Table2:cyclicPermsTable} }
\end{table}

\subsection{Proof of Statement 2 implies statement 1 (Result~\ref{result3})}

\begin{proof} Assuming statement 2 holds, we can write 
\begin{equation}
    \bm{\mu} = \sum_{\alpha, \beta} p_{\alpha \beta}  X_{d_{S}}^{\alpha} \otimes X_{d_{S}}^{\beta} \bm{\lambda},
\end{equation}
where we have introduced $\alpha=e-c$ and $\beta=d+f$ for clarity. Translating back to the Choi-states of the channels these vectors represent, we find 
\begin{equation}
    \choi{M} = \sum_{nm} p_{nm} (X_{d_{S}}^{\alpha} \otimes X_{d_{S}}^{\beta}) (\choi{N}) (X_{d_{S}}^{\alpha} \otimes X_{d_{S}}^{\beta}),
\end{equation}
where $\choi{M}$ and $\choi{N}$ are Choi-states of the Weyl covariant channels $\mathcal{M}$ and $\mathcal{N}$ -- and hence diagonal in the Bell basis -- with eigenvalues given by the elements of $\bm{\mu}$ and $\bm{\lambda}$ respectively. This can be re-written as 
\begin{equation}
    \begin{split}
        \choi{M} &= \sum_{\alpha \beta} p_{\alpha \beta} (X_{d_{S}}^{\alpha} \otimes X_{d_{S}}^{\beta}) (\mathcal{I} \otimes \mathcal{N})(\ketbra{\Phi_{00}}) (X_{d_{S}}^{\alpha} \otimes X_{d_{S}}^{\beta}), \\
        &= \sum_{\alpha \beta} p_{\alpha \beta} (\mathcal{I} \otimes X_{d_{S}}^{\beta}) (\mathcal{I} \otimes \mathcal{N})(\mathcal{I} \otimes (X_{d_{S}}^{\alpha})^{t})(\ketbra{\Phi_{00}})(\mathcal{I} \otimes (X_{d_{S}}^{\alpha})^{t}) (\mathcal{I} \otimes X_{d_{S}}^{\beta}) , \\
        &= \sum_{nm} p_{nm} (\mathcal{I} \otimes X_{d_{S}}^{\beta} [\cdot] \circ \mathcal{N} \circ (X_{d_{S}}^{\alpha})^{t}[\cdot])(\ketbra{\Phi_{00}}),
    \end{split}
\end{equation}
where $X_{d_{S}}^{\beta} [\cdot]$ and $(X_{d_{S}}^{\alpha})^{t}[\cdot]$ are unitary channels that apply $X_{d_{S}}^{\beta}$ and $(X_{d_{S}}^{\alpha})^{t}$ respectively. Hence, from the Choi-Jamiołkowski isomorphism,
\begin{equation}
    \begin{split}
        \mathcal{M}(\rho) &= \sum_{nm} p_{nm} X_{d_{S}}^{\beta} [\cdot] \circ \mathcal{N} \circ (X_{d_{S}}^{\alpha})^{t}[\cdot] (\rho), \\
        &= \sum_{nm} p_{nm} \mathcal{N} \circ X_{d_{S}}^{\beta} [\cdot] \circ (X_{d_{S}}^{\alpha})^{t}[\cdot] (\rho), \\
        &= \mathcal{N} \circ \mathcal{P},
    \end{split}
\end{equation}
where $\mathcal{P}(\rho) = \sum_{nm} p_{nm} X_{d_{S}}^{\beta} [\cdot] \circ (X_{d_{S}}^{\alpha})^{t}[\cdot] (\rho)$ which is a pre-processing Weyl covariant channel. In the penultimate line we have used the fact that $\mathcal{N}$ is a Weyl-covariant channel and that $X_{d_{S}}^{\beta}$ is a discrete Weyl operator for all $\beta$. Therefore, if statement 2 holds then there always exists a Weyl-covariant pre-processing channel that achieves that mapping $\mathcal{N} \rightarrow \mathcal{M}$. Hence, statement 2 implies statement 1. This completes the proof of Result \ref{result3}.
\end{proof}

\section{\label{Appendix five}Supplementary Material E: Operational Interpretation}
Result~\ref{result5} gives an operational interpretation of the pre-order established on $\freeS$ by $\dynamicRT$ and $\dynamicRT_{n}$. The existence of an allowed operation $\Pi \in \freeD$, such that $\mathcal{M}=\Pi(\mathcal{N})$, where $\mathcal{M,N} \in \freeS$ can be established by acting the channels on one half of a set of bipartite states and then comparing the expectation values of a Bell state measurement on the output states. 

We give a derivation of this result here and detail some other available strategies then the one given in the main text. 

Firstly, we recall that within the context of the existence of an allowed operations in $\dynamicRT$, the Choi-states of qubit unital channels and Weyl-covariant channels with finite dimension can always be considered in the Bell basis such that they are diagonal. Hence, the Choi-states can be mapped to a vector of eigenvalues, 
\begin{equation}
    \choi{N} = (\mathcal{I} \otimes \mathcal{N}) (\ketbra{\Phi_{00}}) \rightarrow \bm{\lambda} \in \mathbb{R}^{d_{S}^{2}}. 
\end{equation}
Due to normalisation the final dimension of the vector is redundant and can be dropped without loss. This allows $\choi{N}$ to be mapped to a vector in $\mathbb{R}^{d_{S}^{2}-1}$. The complete mapping is then 
\begin{equation}
    \choi{N}_{00} \rightarrow \bm{l_{00}} \in \mathfrak{L} \rightarrow \Tilde{\bm{l}}_{00} \in \Tilde{\mathfrak{L}},
\end{equation}
where $\mathfrak{L} / \Tilde{\mathfrak{L}}$ are as defined above (with $\bm{l}_{00}=\bm{\lambda}$) and the subscript indicates that the Bell basis state $\ket{\Phi_{00}}$ was used in the Choi-state definition. This is an alternative way to see the generation of the set of vectors $\Tilde{\mathfrak{L}}$ that define the simplex in $\mathbb{R}^{d_{S}^{2}-1}$. The other elements of the set $\Tilde{\mathfrak{L}}$ can be generated by using different Bell states in the definition of the Choi-state, and following the same mapping. For example,
\begin{equation}
    \begin{split}
        \choi{N}_{01} &= (\mathcal{I} \otimes \mathcal{N}) (\ketbra{\Phi_{01}}) \rightarrow \bm{l_{01}} \in \mathfrak{L} \rightarrow \Tilde{\bm{l}}_{01} \in \Tilde{\mathfrak{L}}, \\
        \choi{N}_{10} &= (\mathcal{I} \otimes \mathcal{N}) (\ketbra{\Phi_{10}}) \rightarrow \bm{l_{10}} \in \mathfrak{L} \rightarrow \Tilde{\bm{l}}_{10} \in \Tilde{\mathfrak{L}},    
    \end{split}
\end{equation}
and so on. Using a different Bell basis state leaves the eigenvalues invariant but permutes their position. Therefore, each element of $\Tilde{\mathfrak{L}}$ can be considered to be a mapping of the Choi-state of a channel $\mathcal{N} \in \freeS$ to a vector in $\mathbb{R}^{d_{S}^{2}-1}$ where a different Bell state has been used in the Choi-state definition. Note that previously $\Tilde{\mathfrak{L}}$ was generated from $\bm{\lambda}$ by applying the local permutations to $\bm{\lambda}$, $\mathfrak{L} = \{ \perm{i} \bm{\lambda} : \forall ~ \perm{i} \in \mathfrak{P} \}$ and then dropping the final redundant dimension.

From this, it can be observed that 
\begin{equation}
    \sum_{ij} \choi{N}_{ij} = \mathbb{I},
\end{equation}
meaning that the states $\choi{N}_{ij}$ can be considered elements of a quantum measurement, which is mathematically described by the so-called \textit{positive operator-valued measure} (POVM) \cite{nielsen_chuang_2010}. This alternative approach to generating the vertices of the simplex and the above POVM will be employed in the subsequent proofs. 

Before returning to the operational interpretation, the following Lemma is proved.
\begin{lemma} \label{lemma5.1}
Let $\mathcal{M,N}$ be two arbitrary channels (not necessarily unital).
Then the following two statements are equivalent for $\freeD$ induced by $\freeS$ if $\freeS$ is the set of unital channels:
\begin{enumerate}
\item $\mathcal{M}=\Pi(\mathcal{N})$ for some $\Pi \in \freeD$.
\item $\mathcal{M}^{\dagger}=\Theta(\mathcal{N}^{\dagger})$ for some $\Theta \in \freeD$.
\end{enumerate}
\end{lemma}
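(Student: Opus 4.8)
The plan is to use the \emph{Hilbert--Schmidt adjoint} (dual map) $\mathcal{N}\mapsto\mathcal{N}^{\dagger}$, defined for a channel with Kraus form $\mathcal{N}(\cdot)=\sum_i K_i(\cdot)K_i^{\dagger}$ by $\mathcal{N}^{\dagger}(\cdot)=\sum_i K_i^{\dagger}(\cdot)K_i$. Three elementary properties of this map do all the work: it is an involution, $(\mathcal{N}^{\dagger})^{\dagger}=\mathcal{N}$; it reverses compositions while preserving convex combinations, so that $\big(\sum_{\kappa}p_{\kappa}\,\mathcal{E}_{\kappa}\circ\mathcal{N}\circ\mathcal{P}_{\kappa}\big)^{\dagger}=\sum_{\kappa}p_{\kappa}\,\mathcal{P}_{\kappa}^{\dagger}\circ\mathcal{N}^{\dagger}\circ\mathcal{E}_{\kappa}^{\dagger}$; and---the point I would isolate as a sub-claim---it maps the free set $\freeS$ of unital channels onto itself.

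First I would establish the sub-claim. Complete positivity is manifestly preserved, since reading a Kraus representation with $K_i\to K_i^{\dagger}$ is again a legitimate Kraus decomposition. The content is that trace preservation and unitality are \emph{interchanged} by the adjoint: $\mathcal{P}$ is trace preserving, $\sum_i K_i^{\dagger}K_i=\mathbb{I}$, exactly when $\mathcal{P}^{\dagger}$ is unital, $\mathcal{P}^{\dagger}(\mathbb{I})=\sum_i K_i^{\dagger}\mathbb{I}\,K_i=\mathbb{I}$, and symmetrically $\mathcal{P}$ is unital exactly when $\mathcal{P}^{\dagger}$ is trace preserving. Hence a doubly stochastic (unital \emph{and} trace-preserving) channel has a doubly stochastic adjoint, giving $\mathcal{P}\in\freeS\Rightarrow\mathcal{P}^{\dagger}\in\freeS$.

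With this in hand the equivalence is immediate. Assuming statement~1, I write $\mathcal{M}=\Pi(\mathcal{N})=\sum_{\kappa}p_{\kappa}\,\mathcal{E}_{\kappa}\circ\mathcal{N}\circ\mathcal{P}_{\kappa}$ with $\mathcal{E}_{\kappa},\mathcal{P}_{\kappa}\in\freeS$, and take adjoints to get $\mathcal{M}^{\dagger}=\sum_{\kappa}p_{\kappa}\,\mathcal{P}_{\kappa}^{\dagger}\circ\mathcal{N}^{\dagger}\circ\mathcal{E}_{\kappa}^{\dagger}$. Defining the super-channel $\Theta(\mathcal{X}):=\sum_{\kappa}p_{\kappa}\,\mathcal{P}_{\kappa}^{\dagger}\circ\mathcal{X}\circ\mathcal{E}_{\kappa}^{\dagger}$, the sub-claim guarantees $\mathcal{P}_{\kappa}^{\dagger},\mathcal{E}_{\kappa}^{\dagger}\in\freeS$, so $\Theta$ has exactly the form of Eq.~\eqref{freeOperation} and thus $\Theta\in\freeD$, while $\mathcal{M}^{\dagger}=\Theta(\mathcal{N}^{\dagger})$ is statement~2. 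The converse needs no new work: statement~2 is statement~1 with $(\mathcal{M},\mathcal{N})$ replaced by $(\mathcal{M}^{\dagger},\mathcal{N}^{\dagger})$, so running the same argument and invoking involutivity $(\mathcal{M}^{\dagger})^{\dagger}=\mathcal{M}$ recovers statement~1.

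The only genuine content is the sub-claim that $\freeS$ is closed under the adjoint; everything else is bookkeeping with the composition-reversal identity. The subtlety I would flag is that this closure relies on \emph{both} defining properties of noisy operations, since the adjoint swaps trace preservation and unitality---the argument would break for a free set defined by only one of the two conditions. Because the reduced form of Eq.~\eqref{freeOperation} carries no memory system and $\mathcal{M},\mathcal{N}$ share a single fixed dimension, I do not expect any complication from input/output dimensions, and no obstacle beyond this closure statement is anticipated.
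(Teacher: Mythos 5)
Your proof is correct and follows essentially the same route as the paper's: take Hilbert--Schmidt adjoints of $\mathcal{M}=\sum_{\kappa}p_{\kappa}\,\mathcal{E}_{\kappa}\circ\mathcal{N}\circ\mathcal{P}_{\kappa}$, use composition reversal, and note that $\freeS$ is closed under the adjoint, with the converse following from involutivity. In fact you make explicit a step the paper only asserts (that $\mathcal{P}_{\kappa}^{\dagger},\mathcal{E}_{\kappa}^{\dagger}\in\freeS$ because the adjoint swaps trace preservation and unitality, so doubly stochastic maps go to doubly stochastic maps), which is a worthwhile addition rather than a deviation.
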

\begin{proof}
Given the set of mixed unitary channels and the set of Weyl-covariant channels are subsets of the set of unital channels, this proof holds if $\freeS = \mathcal{U}_{M}$ or $\freeS = \mathcal{U}_{\mathcal{W}}$ also. 

Suppose that $\mathcal{M}=\Pi(\mathcal{N})$ for some $\Pi \in \freeD$. This can be written as
\begin{equation}
    \mathcal{M} = \sum_{\kappa} p_{\kappa} \mathcal{E}_\kappa \circ \mathcal{N} \circ \mathcal{P}_\kappa,
\end{equation}
where $\mathcal{P}_{\kappa}, \mathcal{E}_{\kappa} \in \freeS$. The adjoint of $\mathcal{M}$ is then 
\begin{equation}
    \mathcal{M}^{\dagger} = \sum_{\kappa} p_{\kappa} \mathcal{P}_{\kappa}^{\dagger} \circ \mathcal{N}^{\dagger} \circ \mathcal{E}^{\dagger}_{\kappa},
\end{equation}
where $\mathcal{P}^{\dagger}_{\kappa}, \mathcal{E}^{\dagger}_{\kappa} \in \freeS$. Hence, if an allowed operations exists between $\mathcal{N}$ and $\mathcal{M}$ then an allowed operations exists between $\mathcal{N}^{\dagger}$ and $\mathcal{M}^{\dagger}$. Likewise, if an allowed operations exists between $\mathcal{N}^{\dagger}$ and $\mathcal{M}^{\dagger}$, then the same proof shows that an allowed operations exists between $\mathcal{N}$ and $\mathcal{M}$. 

\end{proof}

\subsection{Proof of Result 5}

\begin{proof}
Returning to the operational interpretation, we note that the set of $d_{s}^{2}$ many states $\{ \rho_{ij} \}$ arise from the geometrical interpretation of the allowed operations through vectors in $\mathbb{R}^{d_{S}^{2}}$ space. Hence, the linear independence of the vectors in $\Tilde{\mathfrak{L}}$ is once again important. We assume they are linear independent in this section ($\textrm{dimSpan} (\mathfrak{K}) = d_{S}^{2} - 1$) but a similar methodology could be applied if they were not.

The hyper-planes that encompass each facet of the simplex are defined by a set of vectors $\mathfrak{f} = \{\bm{f}_{ij}: \bm{f}_{ij} \in \mathbb{R}^{d_{S}^{2} -1} ~ \forall ~ i,j \}$, from which a complete set of monotones can be found as seen in Supplementary Material C. From $\mathfrak{f}$, a set of vectors $\mathfrak{Q} = \{ \bm{Q}_{ij} : \bm{Q}_{ij} \in \mathbb{R}^{d_{S}^{2}} ~ \forall ~ i,j \}$ can be created, the elements of which can be considered to be the diagonal entries of a bipartite state in the Bell basis.  

Each vector $\bm{Q}_{ij}$ is generated from an element of $\mathfrak{f}$ as
\begin{equation}
    \bm{Q}_{ij} = \frac{\bm{F}_{ij} + r_{ij}\bm{\Lambda}}{w_{ij}},
\end{equation}
where $ \bm{F}_{ij} \in \mathbb{R}^{d_{S}^{2}}$ is a higher dimensional embedding of $\bm{f}_{ij}$, such that a zero has been appended to the vector in the added dimension (i.e $\bm{F}^{t}_{ij} = [\bm{f}_{ij}, 0])$, $r_{ij}$ is a number such that all elements of $\bm{Q}_{ij}$ are positive, $\bm{\Lambda}$ is the vector representing the state preparation channel of the maximally mixed state and $w_{ij}$ is a normalisation. 

Each $\bm{Q}_{ij}$ can then be taken to be the diagonal elements of some state in the Bell basis, $\sigma_{ij}$. This will be a valid density operator as it will have trace one due to the normalisation and be positive due to $r_{ij}$ ensuring all entries are positive.   

Given all Choi-states can be considered to be diagonal in the Bell basis, 
\begin{equation}
    \begin{split}
        \textrm{Tr}(\choi{N}_{ij}\sigma_{ij}) &= \bm{l}_{ij} \cdot \bm{Q}_{ij}, \\
        &= \frac{\bm{l}_{ij} \cdot \bm{F}_{ij} + r_{ij} ~ \bm{l}_{ij} \cdot \bm{\Lambda}}{w_{ij}}, \\
        &= \frac{\Tilde{\bm{l}}_{ij} \cdot \bm{f}_{ij} + r_{ij}/4}{w_{ij}}. \label{operationalInterpretation}
    \end{split}
\end{equation}
In the above, the first term simplifies due to the zero in the final dimension of $\bm{F}_{ij}$ and the second due to the elements of $\bm{l}_{ij}$ summing to one. 

The value of $\Tilde{\bm{l}}_{ij} \cdot \bm{f}_{ij}$ depends on the geometry in $\mathbb{R}^{d_{S}^{2}-1}$ and can be considered as a distance like measure from the origin to one facet of the simplex, as seen in Supplementary Material C, and is therefore a constant.

Assuming $\Tilde{\bm{l}}_{ij} \cdot \bm{f}_{ij} = l$, and now replacing $\choi{N}_{ij}$ with $\choi{M}_{ij}$, 
\begin{equation}
    \begin{split}
        \textrm{Tr}(\choi{M}_{ij}\sigma_{ij}) &= \frac{\Tilde{\bm{\mu}}_{ij} \cdot \bm{f}_{ij} + r_{ij}/4}{w_{ij}},
    \end{split}
\end{equation}
where $\Tilde{\bm{\mu}}_{ij} \in \mathbb{R}^{d_{S}^{2}-1}$. If $\Tilde{\bm{\mu}}$ falls on the correct side of the hyper-plane $\bm{f}_{ij}$ encompassing the face of the simplex then $\Tilde{\bm{\mu}}_{ij} \cdot \bm{f}_{ij} < l$, and hence,
\begin{equation}
    \begin{split}
        \textrm{Tr}(\choi{M}_{ij}\sigma_{ij}) &< \frac{l + r_{ij}/4}{w_{ij}} \\
        &= \textrm{Tr}(\choi{N}_{ij}\sigma_{ij})
    \end{split}
\end{equation}
In order to evaluate if $\bm{\mu}$ is on the correct side of each face of the simplex, the above must be true for all $i,j$ (see Table.\ref{montones} for the qubit case). 

The indices in the above equations need not be the same to reach the same result. It was shown above that $\Tilde{\bm{l}}_{\alpha, \beta} \cdot \bm{f}_{ij}$ equals the same constant for multiple different values of $(\alpha, \beta)$. Hence, for these values of $(\alpha, \beta)$, $\textrm{Tr}(\choi{N}_{\alpha \beta}\sigma_{ij})$
will give the same value as in Eq.~\eqref{operationalInterpretation}. There is therefore a few ways to build this task from the geometrical interpretation, and these are made explicitly clear in the qubit example below. 

To complete the proof, $\textrm{Tr}(\choi{N}_{ij}\sigma_{ij})$ can be rewritten, but a distinction between the $d_{S}=2$ and $d_{S}=n$ case must first be made. In the qubit channel case, $d_{S}=2$, the Choi-states of mixed unitary channels can be diagonalised in the Bell basis under local unitaries. Given all unitaries are in $\freeD$, we consider only diagonal Choi-states when discussing the existence of allowed operations. However, these diagonal Choi-states really represent a Weyl-covariant channel, $\widetilde{\mathcal{N}}$, that is equivalent to a channel $\mathcal{N}$ up to a prepossessing and post processing unitary, 
\begin{equation}
    \widetilde{\mathcal{N}}(\rho) = V \mathcal{N}(U\rho U^{\dagger})V^{\dagger}. 
\end{equation}
When unpacking the Choi-state there is a need to be explicit about what channel is being referred to,
\begin{equation}
    \choi{\Tilde{N}}_{ij} = (U_{1} \otimes U_{2}) \choi{N}_{ij} (U_{1} \otimes U_{2})^{\dagger},
\end{equation}
where $\choi{\Tilde{N}}_{ij}$ is diagonal in the Bell basis and the Choi-state of a Weyl-covariant channel, and $U_{1}/U_{2}$ are unitaries. Note that $U_{1}$, $U_{2}$ are dependent on $i,j$ but this is kept implicit for simplicity

Returning to the left hand side of Eq.~\eqref{operationalInterpretation} with this explicit notation,
\begin{equation}
    \begin{split}
        \textrm{Tr}(\choi{\Tilde{N}}_{ij}\sigma_{ij}) &= \textrm{Tr} \big((U_{1} \otimes U_{2}) \choi{N}_{ij} (U_{1} \otimes U_{2})^{\dagger} \sigma_{ij} \big) \\
        &=  \textrm{Tr} \big( (U_{1} \otimes U_{2})^{\dagger} \sigma_{ij}(U_{1} \otimes U_{2}) \choi{N}_{ij} \big) \\
        &= \textrm{Tr} \big(\tau_{ij} \choi{N}_{ij} \big) \\
        &= \textrm{Tr} \big((\tau_{ij}) (\mathcal{I} \otimes \mathcal{N}) \ketbra{\Phi_{ij}} \big)  \\
        &= \bra{\Phi_{ij}} (\mathcal{I} \otimes \mathcal{N}^{\dagger})(\tau_{ij}) \ket{\Phi_{ij}},
    \end{split}
\end{equation}
where $\tau_{ij} = (U_{1} \otimes U_{2})^{\dagger} \sigma_{ij}(U_{1} \otimes U_{2})$. Therefore, there exists an allowed operation converting $\mathcal{N}$ into $\mathcal{M}$ and only if 
\begin{equation}
    \bra{\Phi_{ij}} (\mathcal{I} \otimes \mathcal{N}^{\dagger})(\tau_{ij}) \ket{\Phi_{ij}} \geq \bra{\Phi_{ij}} (\mathcal{I} \otimes \mathcal{M}^{\dagger})(\tau_{ij}) \ket{\Phi_{ij}} ~ \forall~i,j
\end{equation}
for a set $d_{s}^{2}$ many quantum states $\{ \tau_{ij} \}$, each coming from an element of $\mathfrak{f}$. This is true if and only if there also exists an allowed operation between $\mathcal{N}^{\dagger}$ and $\mathcal{M}^{\dagger}$ as shown by Lemma~\ref{lemma5.1}. By repeating the above process but starting with channels $\mathcal{N}^{\dagger}$ and $\mathcal{M}^{\dagger}$, we conclude the following desired result:
\\
\\
\textit{Let $\mathcal{N}\in\freeS$ be given. Then there exists a set of $d_S^2$ many quantum states $\{\rho_{ij}\}$ which makes the following two statement equivalent:
\begin{enumerate}
\item There exists a $\Pi \in \freeD$ such that $\mathcal{M} = \Pi(\mathcal{N})$.
\item The following $d_S^2$ many inequalities are satisfied:
\begin{equation}
    \bra{\Phi_{ij}} (\mathcal{I} \otimes \mathcal{N})(\rho_{ij}) \ket{\Phi_{ij}} \geq \bra{\Phi_{ij}} (\mathcal{I} \otimes \mathcal{M})(\rho_{ij}) \ket{\Phi_{ij}} ~ \forall~i,j,
\end{equation}
\end{enumerate}
}

For channels where $d_{S}>2$ we consider $\freeS = \mathfrak{U}_{\mathcal{W}}$. The Choi-state of channels in $\mathfrak{U}_{\mathcal{W}}$ are already diagonal in the Bell basis and hence no diagonalisation needs to be `undone'. The rest of the above logic holds giving the same result as the qubit case. This completes the proof.
\end{proof}
\subsection{Qubit Channels Example}
For clarity, an example of the formation of a set of states for the operational interpretation when considering qubit channels, $d_{S} =  2$, is included. A comparison of Bell state measurements on mutiple copies of an ensemble of bipartite states, after the application of the channels $\mathcal{N}$ and $\mathcal{M}$ on a subspace of the states, will confirm the existence of an allowed operation $\Pi \in \freeD$, such that $\mathcal{M} = \Pi(\mathcal{N})$, where $\mathcal{M,N} \in \freeS$. We assume $\textrm{dimSpan} (\mathfrak{K}) = 3$ in the following.

The states are constructed from the monotones of the channel $\mathcal{N}^{\dagger}$. The existence of an allowed operation between the channels $\mathcal{N}^{\dagger}$ and $\mathcal{M}^{\dagger}$ implies the existence of an allowed operation between $\mathcal{N}$ and $\mathcal{M}$, as seen above (Lemma~\ref{lemma5.1}). Proving the existence of an allowed operation between the adjoint channels gives an operational interpretation in terms of $\mathcal{N}$ and $\mathcal{M}$, rather then $\mathcal{N}^{\dagger}$ and $\mathcal{M}^{\dagger}$. 

In this section, the vector $\bm{\lambda} \in \mathbb{R}^{4}$ is a vector of eigenvalues of the Choi-state $\choi{N^{\dagger}}$, $\bm{\lambda}^{t} = [\lambda_{00}, \lambda_{01}, \lambda_{10}, \lambda_{11}]$, and $\bm{\mu} \in \mathbb{R}^{4}$ is a vector of eigenvalues of the Choi-state $\choi{M^{\dagger}}$, $ \bm{\mu}^{t} = [\mu_{00}, \mu_{01}, \mu_{10}, \mu_{11}]$. 

Fig.~\ref{fig3:operationalInterpretation} highlights one face of the simplex given by the set $\Tilde{\mathfrak{L}}$ (given in full in Supplementary Material C, Eq.~\eqref{setTildeL}). The plane that encompasses this face is defined by the vector 
\begin{equation}
    \bm{f}_{00} = \begin{bmatrix}
           a \\
           b \\
            c 
         \end{bmatrix} \in \mathfrak{f},
\end{equation}
where $a,b,c \in \mathbb{R}$. Recall from Supplementary Material C that 
\begin{equation}
    \bm{f}_{00} \cdot \Tilde{\bm{l}}_{00} = \bm{f}_{00} \cdot \Tilde{\bm{l}}_{01} = \bm{f}_{00} \cdot \Tilde{\bm{l}}_{10} = l, \label{distanceToSimplex}
\end{equation}
and that if a channel giving by $\Tilde{\bm{\mu}}$ is on the correct side of the plane $\bm{f}_{00}$ (such that it is inside the simplex) then $\bm{f}_{00} \cdot \Tilde{\bm{\mu}} < l.$

To create the set of states the vector $\bm{f}_{00}$ is first embedded in a higher dimensional space, 
\begin{equation}
     \bm{F}_{00} = \begin{bmatrix}
           a \\
           b \\
            c \\
            0
         \end{bmatrix}.
\end{equation}
The maximally mixed state, scaled by a number $r_{00}$, is then added to $\bm{F}_{00}$ to generate a vector with all positive elements. This state is then normalised, giving $\bm{Q}_{00}$ as 
\begin{equation}
    \bm{Q}_{00} = \frac{1}{w_{00}}\begin{bmatrix}
           a + r_{00}/4 \\
           b + r_{00}/4 \\
            c + r_{00}/4 \\
            r_{00}/4
         \end{bmatrix},
\end{equation}
where $w_{00}$ is the normalisation. 

The elements of $\bm{Q}_{00}$ are taken to be the diagonal elements of a density operator written in the Bell basis, 
\begin{equation}
    \sigma_{00} = \frac{1}{w_{00}} \begin{pmatrix}
       a + r_{00}/4 & 0 & 0 & 0 \\
        0 & b + r_{00}/4 & 0 & 0 \\
        0 & 0 & c + r_{00}/4 & 0 \\
        0 & 0 & 0 & r_{00}/4
    \end{pmatrix},
\end{equation}
where the off diagonal elements can be set to zero without loss of generality. Given $\choi{N}_{00}$ is diagonal in the Bell basis, 
\begin{equation}
    \begin{split}
        \textrm{Tr}(\choi{N}_{00}\sigma_{00}) &= \bm{l}_{00} \cdot \bm{Q}_{00}, \\
        &= \frac{[a\lambda_{00} + b\lambda_{01} + c\lambda_{10}] + r_{00}/4 ~ [\lambda_{00} + \lambda_{01} + \lambda_{10} + \lambda_{00}]}{w_{00}} \\
        &= \frac{\Tilde{\bm{l}}_{00} \cdot \bm{f}_{00} + r_{00}/4}{w_{00}} \\
        &= \frac{l + r_{00}/4}{w_{00}},
    \end{split}
\end{equation}
which is a constant that depends on the geometry of the problem. Hence, $\textrm{Tr}(\choi{N}_{00}\sigma_{00}) = p_{l}$ where $p_{l}$ is a constant that depends on $l$. Note that from Eq.~\eqref{distanceToSimplex} it can be seen that  
\begin{equation}
    \textrm{Tr}(\choi{N^{\dagger}}_{00}\sigma_{00}) = \textrm{Tr}(\choi{N^{\dagger}}_{01}\sigma_{00}) = \textrm{Tr}(\choi{N^{\dagger}}_{10}\sigma_{00}) = p_{l}. \label{eqaulExpectationValues}
\end{equation}
Moreover, it was stated above that the Choi-states produced using a different Bell basis state form the elements of a POVM. Hence, it can be seen that  
\begin{equation}
    \begin{split}
        \textrm{Tr}(\choi{N^{\dagger}}_{11}\sigma_{00}) &= 1 - \Bigg[  \textrm{Tr}(\choi{N^{\dagger}}_{00}\sigma_{00}) + \textrm{Tr}(\choi{N^{\dagger}}_{01}\sigma_{00}) + \textrm{Tr}(\choi{N^{\dagger}}_{10}\sigma_{00}) \Bigg] \\
        &= 1 - 3 p_{l}.
    \end{split}
\end{equation}
This will provide alternative strategies for the operational interpretation, detailed below. 

Now considering $\choi{M^{\dagger}}_{00}$ and $\sigma_{00}$, 
\begin{equation}
     \begin{split}
        \textrm{Tr}(\choi{M^{\dagger}}_{00}\sigma_{00}) &= \bm{\mu} \cdot \bm{Q}_{00}, \\
        &= \frac{[a\mu_{00} + b\mu_{01} + c\mu_{10}] + r_{00}/4 ~ [\mu_{00} + \mu_{01} + \mu_{10} + \mu_{00}]}{w_{00}} \\
        &= \frac{\bm{f}_{00} \cdot \Tilde{\bm{\mu}} + r_{00}/4}{w_{00}} \\
        & \leq \frac{l + r_{00}/4}{w_{00}},
    \end{split}
\end{equation}
where on the final line it is assumed that $\bm{\mu}$ is on the correct side of $\bm{f}_{00}$. If $\bm{\mu}$ is not on the correct side of $\bm{f}_{00}$ then 
\begin{equation}
     \textrm{Tr}(\choi{M^{\dagger}}_{00}\sigma_{00}) > \frac{l + r_{00}/4}{w_{00}}.
\end{equation}
For each plane in $\mathfrak{f}$ the above process can be repeated, creating a set of 4 states $\{ \sigma_{ij} \}_{i,j=0}^{1}$. Hence, we learn that there exists a free operation converting $\mathcal{N}$ into $\mathcal{M}$ (where Lemma~\ref{lemma5.1} has been used) if and only if
\begin{align}
\textrm{Tr}(\choi{M^{\dagger}}_{ij}\sigma_{ij})\le\textrm{Tr}(\choi{N^{\dagger}}_{ij}\sigma_{ij})\quad\forall\;i,j\in\{0,1\}.
\end{align}
Using the relation $\textrm{Tr}(\choi{M^{\dagger}}_{ij}\sigma_{ij})=\bra{\Phi_{ij}}(\mathcal{I}\otimes\mathcal{M})(\sigma_{ij})\ket{\Phi_{ij}}$ for every $i,j$ (and a similar relation also holds for $\mathcal{N}$), we conclude the following result: 
\\
\\
\textit{Let $\mathcal{N}\in\freeS$ be a given qubit channel. Then there exists a set of 4 quantum states $\{\rho_{ij}\}_{i,j=0}^1$ making the following two statement equivalent:
\begin{enumerate}
\item There exists a $\Pi \in \freeD$ such that $\mathcal{M} = \Pi(\mathcal{N})$.
\item The following four inequalities are satisfied:
\begin{equation}
   \begin{split}
        \bra{\Phi_{00}} (\mathcal{I} \otimes \mathcal{N}) (\rho_{00}) \ket{\Phi_{00}} \geq \bra{\Phi_{00}} (\mathcal{I} \otimes \mathcal{M}) (\rho_{00}) \ket{\Phi_{00}}, \\
         \bra{\Phi_{01}} (\mathcal{I} \otimes \mathcal{N}) (\rho_{01}) \ket{\Phi_{01}} \geq \bra{\Phi_{01}} (\mathcal{I} \otimes \mathcal{M}) (\rho_{01}) \ket{\Phi_{01}}, \\
          \bra{\Phi_{10}} (\mathcal{I} \otimes \mathcal{N}) (\rho_{10}) \ket{\Phi_{10}} \geq \bra{\Phi_{10}} (\mathcal{I} \otimes \mathcal{M}) (\rho_{10}) \ket{\Phi_{10}}, \\
           \bra{\Phi_{11}} (\mathcal{I} \otimes \mathcal{N}) (\rho_{11}) \ket{\Phi_{11}} \geq \bra{\Phi_{11}} (\mathcal{I} \otimes \mathcal{M}) (\rho_{11}) \ket{\Phi_{11}}.
   \end{split}
 \end{equation}
 \end{enumerate} }

These can be interpreted as applying the channel to one half of each state in the set and then measuring in the Bell basis. A comparison of the expectation values for the two different channels can then be made after multiple measurements to decided if an allowed operation exists. Each expectation value is an operational way of checking each face of the simplex to asses if $\Tilde{\bm{\mu}}$ sits inside or outside the simplex.  

We note that only the diagonal elements of the set of states $\{ \rho_{ij} \}^{4}$ need to be fixed. The freedom in the choice of off diagonal elements means it is always possible to make these states pure state.  

The above conditions have an operational interpretation as a state discrimination task. A referee distributes half of one of the states, $\rho_{ij}$, to Alice and the other half to Bob. Alice then sends her half of the state to Bob via the channel $\mathcal{N}$ or $\mathcal{M}$. Bob, now with the whole state, performs a measurement $\{ \ketbra{\Phi_{ij}} \}_{ij}$ -- a Bell state measurement. Bob succeeds at the game if for a state $\rho_{ij}$ he gets the POVM outcome associated to $\ket{\Phi_{ij}}$. If for a given state $\rho_{ij}$ in the set Bob succeeds with at least as high a success probability when Alice sends her half of the state with a channel $\mathcal{N}$ as when Alice send it with $\mathcal{M}$, then $\bm{\mu}$ sits on the correct side of one face of the simplex. If Bob succeeds with at least as high a success probability for all states in the set, then an allowed operation exists between $\mathcal{N}$ and $\mathcal{M}$. In other state discrimination games in the literature the average success probability is used as the figure of merit \cite{Bae_2015}. Interestingly, this game differs in that each individual probability of successful discrimination needs to be considered in order to probe the existence of an allowed operation.

From Eq.~\eqref{distanceToSimplex} it can be seen there are multiple ways of finding $l$. This leads to multiple operational methods of finding $p_{l}$, as seen in Eq.~\eqref{eqaulExpectationValues}. Therefore, there are multiple different strategies that could be employed in this operational interpretation. Table~\ref{Table3:operationalStratriges} shows the complete range of possible combinations that can be used to assess each face of the simplex. Four expectation values must be compared, each one in the set $\{ p_{l}, p_{m}, p_{o}, p_{q} \}$, with each checking one face of the simplex. Each must be compared only once. Hence, another equally valid strategy of assessing the existence of an allowed operation is 
\begin{equation}
   \begin{split}
        (p_{l} = ) \bra{\Phi_{00}} (\mathbb{I} \otimes \mathcal{N}) (\rho_{00}) \ket{\Phi_{00}} \geq \bra{\Phi_{00}} (\mathbb{I} \otimes \mathcal{M}) (\rho_{00}) \ket{\Phi_{00}}, \\
         (p_{m} = ) \bra{\Phi_{00}} (\mathbb{I} \otimes \mathcal{N}) (\rho_{01}) \ket{\Phi_{00}} \geq \bra{\Phi_{00}} (\mathbb{I} \otimes \mathcal{M}) (\rho_{01}) \ket{\Phi_{00}}, \\
           (p_{o} = ) \bra{\Phi_{00}} (\mathbb{I} \otimes \mathcal{N}) (\rho_{10}) \ket{\Phi_{00}} \geq \bra{\Phi_{00}} (\mathbb{I} \otimes \mathcal{M}) (\rho_{10}) \ket{\Phi_{00}}, \\
            (p_{q} = )\bra{\Phi_{00}} (\mathbb{I} \otimes \mathcal{N}) (\rho_{11}) \ket{\Phi_{00}} \leq \bra{\Phi_{00}} (\mathbb{I} \otimes \mathcal{M}) (\rho_{11}) \ket{\Phi_{00}}.
   \end{split}
 \end{equation}
As long as each value is checked, the exact formation of the conditions used is irrelevant. The operational interpretation of these alternative conditions is not known however. This complete the proof of Result~\ref{result5}. 

\def\arraystretch{2}
\begin{table}
\begin{tabular}{c|c|c|c|c|}
\cline{2-5}
\multicolumn{1}{l|}{}                 & $\rho_{00}$ & $\rho_{01}$ & $\rho_{10}$ & $\rho_{11}$ \\ \hline
\multicolumn{1}{|c|}{$\choi{N^{\dagger}}_{00}$} & $p_{l}$     & $p_{m}$     & $p_{0}$     & $1-3p_{q}$  \\ \hline
\multicolumn{1}{|c|}{$\choi{N^{\dagger}}_{01}$} & $p_{l}$     & $p_{m}$     & $1-3p_{0}$  & $p_{q}$     \\ \hline
\multicolumn{1}{|c|}{$\choi{N^{\dagger}}_{10}$} & $p_{l}$     & $1-3p_{m}$  & $p_{0}$     & $p_{q}$     \\ \hline
\multicolumn{1}{|c|}{$\choi{N^{\dagger}}_{11}$} & $1-3p_{l}$  & $p_{m}$     & $p_{0}$     & $p_{q}$     \\ \hline
\end{tabular}
\caption{The left most column, $X$, gives different elements of a POVM produced from the Choi-state of a channel $\mathcal{N}^{\dagger}$ using different Bell basis states in the definition. The top most row, $Y$, gives the four different states generated from the vectors that are also used to generate the four monotones. The values in the table are $\textrm{Tr}(XY)$, the expectation of value of getting an outcomes $X$ if the state is $Y$.  \label{Table3:operationalStratriges}}
\end{table}

\section{\label{Appendix six}Supplementary Material F: The Holevo Capacity is a Monotone}
The Holevo capacity is given by 
\begin{equation}
    \Holevo(\mathcal{N}) = \sup_{\{p_{i}, \rho_{i}\}} \big\{ s\big(\sum_{i} p_{i} \mathcal{N}(\rho_{i}) \big) - \sum_{i} p_{i} s\big(\mathcal{N}(\rho_{i} \big) \big\}, \label{HolevoCapacity}
\end{equation}
where $s(\cdot)$ is the von-Neumann entropy, and the supremum is taken over the ensemble of quantum states $\rho_{i}$ with probability $p_{i}$ \cite{amosov2000additivity, PhysRevA.69.022302, siudzinska_2020}. We present the following result:

\begin{lemma}
    The Holevo capacity is a monotone of both $\dynamicRT$ and $\dynamicRT_{n}$. Namely, $\Holevo(\mathcal{N})=0$ if $\mathcal{N} = \Lambda$, and $\Holevo(\mathcal{M}) \leq \Holevo(\mathcal{N})$ if there exists a $\Pi \in \freeD$ such that $\mathcal{M}=\Pi(\mathcal{N})$, where $\mathcal{M,N} \in \freeS$.
\end{lemma}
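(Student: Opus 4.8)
The plan splits into the two claims. The first, $\chi(\Lambda)=0$, is immediate: since $\Lambda(\rho)=\Upsilon_S$ for every input state, any ensemble $\{p_i,\rho_i\}$ yields $\sum_i p_i\Lambda(\rho_i)=\Upsilon_S$ and each $\Lambda(\rho_i)=\Upsilon_S$, so the two entropy terms in Eq.~\eqref{HolevoCapacity} coincide and their difference vanishes for every ensemble; hence the supremum is $0$. Because $\chi$ is a supremum of Holevo quantities it is moreover non-negative, so $\Lambda$ sits at the global minimum, consistent with its role as the unique free channel.

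For the monotonicity $\chi(\mathcal{M})\le\chi(\mathcal{N})$ I would isolate two facts about $\chi$ and then combine them. The first is that $\chi$ cannot increase under pre- or post-processing by arbitrary channels, i.e. $\chi(\mathcal{E}\circ\mathcal{N}\circ\mathcal{P})\le\chi(\mathcal{N})$. Pre-processing is a restriction of the feasible set: the ensembles $\{p_i,\rho_i\}$ entering $\chi(\mathcal{N}\circ\mathcal{P})$ produce, through $\mathcal{P}$, a subfamily of all input ensembles for $\mathcal{N}$, so the supremum can only shrink. Post-processing follows from data processing: for a fixed ensemble $\{p_i,\sigma_i\}$ the Holevo quantity equals the mutual information $I(X\!:\!B)$ of the classical--quantum state $\sum_i p_i\ketbra{i}{i}_X\otimes\sigma_i$, and applying $\mathcal{E}$ on the $B$ system cannot raise $I(X\!:\!B)$. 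Crucially, neither step uses unitality, so this holds verbatim for the maps $\mathcal{E}_\kappa,\mathcal{P}_\kappa$ of both $\freeD$ and its Weyl restriction.

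The second fact is convexity of $\chi$ in the channel, $\chi\big(\sum_\kappa p_\kappa\mathcal{N}_\kappa\big)\le\sum_\kappa p_\kappa\chi(\mathcal{N}_\kappa)$. I would fix the optimal ensemble $\{q_i,\rho_i\}$ for $\mathcal{M}=\sum_\kappa p_\kappa\mathcal{N}_\kappa$ (attained in finite dimension) and introduce the state $\omega_{XKB}=\sum_{i,\kappa}q_i p_\kappa\,\ketbra{i}{i}_X\otimes\ketbra{\kappa}{\kappa}_K\otimes\mathcal{N}_\kappa(\rho_i)_B$. Then $\chi(\mathcal{M})=I(X\!:\!B)_\omega$, whereas $I(X\!:\!B|K)_\omega=\sum_\kappa p_\kappa I(X\!:\!B)_{\omega^\kappa}\le\sum_\kappa p_\kappa\chi(\mathcal{N}_\kappa)$, since conditioning on $K=\kappa$ reproduces the ensemble $\{q_i,\mathcal{N}_\kappa(\rho_i)\}$. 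As $X$ and $K$ are uncorrelated, $I(X\!:\!K)=0$, so the chain rule $I(X\!:\!KB)=I(X\!:\!K)+I(X\!:\!B|K)=I(X\!:\!B)+I(X\!:\!K|B)$ combined with strong subadditivity $I(X\!:\!K|B)\ge0$ gives $I(X\!:\!B)\le I(X\!:\!B|K)$, which is the claim.

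Putting these together finishes the proof: writing $\mathcal{M}=\Pi(\mathcal{N})=\sum_\kappa p_\kappa\,\mathcal{E}_\kappa\circ\mathcal{N}\circ\mathcal{P}_\kappa$, convexity gives $\chi(\mathcal{M})\le\sum_\kappa p_\kappa\chi(\mathcal{E}_\kappa\circ\mathcal{N}\circ\mathcal{P}_\kappa)$, and the pre/post-processing bound sends each summand to at most $\chi(\mathcal{N})$; since $\sum_\kappa p_\kappa=1$ we obtain $\chi(\mathcal{M})\le\chi(\mathcal{N})$. The argument is identical for $\dynamicRT$ and $\dynamicRT_{n}$, as neither ingredient invokes the unital or Weyl-covariant structure. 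The main obstacle is the convexity step: it is the only place requiring the quantum mutual-information machinery (the chain rule and strong subadditivity), whereas the pre/post-processing inequality and the $\Lambda$ computation are routine.
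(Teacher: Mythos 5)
Your proposal is correct, but it takes a genuinely different route from the paper's proof in Supplementary Material F. The paper first shows $\chi$ is invariant under pre- and post-processing \emph{unitaries}, uses this to reduce to the Weyl-covariant channel each free channel is equivalent to, then invokes the identity $\chi(\mathcal{N})=\max_{\rho}D(\mathcal{N}(\rho)\,\Vert\,\Upsilon)$ valid for irreducibly covariant channels, and finally exploits the commutativity of Weyl-covariant channels (so that $\mathcal{E}_{\kappa}\circ\mathcal{N}\circ\mathcal{P}_{\kappa}=\mathcal{Q}_{\kappa}\circ\mathcal{N}$ with $\mathcal{Q}_{\kappa}$ unital) together with $\mathcal{Q}_{\kappa}(\Upsilon)=\Upsilon$ and the data-processing inequality for the relative entropy. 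You instead use only two structure-free properties of the Holevo quantity: non-increase under arbitrary pre-processing (restriction of the feasible ensemble set) and post-processing (data processing for $I(X\!:\!B)$ of the cq-state), plus convexity of $\chi$ in the channel, which you establish correctly via the flagged state $\omega_{XKB}$, the chain rule, $I(X\!:\!K)=0$, and strong subadditivity. Your argument never invokes unitality, covariance, or commutation, so it proves the stronger statement that $\chi$ is a monotone of \emph{any} resource-preservability theory whose free super-channels have the form of Eq.~\eqref{freeOperation}; it also dispenses with the reduction to Weyl-covariant channels that the paper needs to make its covariant-channel formula applicable. What the paper's route buys in exchange is the thematically relevant identity $\chi(\mathcal{N})=\max_{\rho}D(\mathcal{N}(\rho)\,\Vert\,\Upsilon)$, which ties the classical capacity directly to the distance from informational equilibrium. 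Both treatments of $\chi(\Lambda)=0$ are trivial (the paper leaves this part implicit), and your remark that the optimal ensemble is attained in finite dimension is the only compactness point that needs stating.
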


\begin{proof}
Firstly, it is noted that the Holevo capacity is invariant under pre-processing and post-processing unitary channels. Consider the channel $\Tilde{\mathcal{N}}(\rho) = V \mathcal{N}(U \rho U^{\dagger})V^{\dagger}$, where $U, V$ are unitary operators. The Holevo capacity of $\Tilde{\mathcal{N}}$ is  
\begin{equation}
    \begin{split}
        \Holevo(\Tilde{\mathcal{N}}) = \sup_{\{p_{i}, \rho_{i}\}} \big\{ s\big(\sum_{i} p_{i} V\mathcal{N}(U\rho_{i}U^{\dagger})V^{\dagger} \big) - \sum_{i} p_{i} s\big(V\mathcal{N}(U\rho_{i}U^{\dagger})V^{\dagger} \big) \big\}.
    \end{split}
\end{equation}
By defining $\eta_{i} = U\rho_{i}U^{\dagger}$ and $\rho' = \sum_{i} p_{i} \eta_{i}$ the above becomes 
\begin{equation}
    \begin{split}
        \Holevo(\Tilde{\mathcal{N}}) &= \sup_{\{p_{i}, \eta_{i}\}} \big\{ s\big(V\mathcal{N}(\rho')V^{\dagger} \big) - \sum_{i} p_{i} s\big(V\mathcal{N}(\eta_{i})V^{\dagger} \big) \big\}, \\
         &= \sup_{\{p_{i}, \eta_{i}\}} \big\{ s\big(\sum_{i} p_{i} \mathcal{N}(\eta_{i})\big) - \sum_{i} p_{i} s\big(\mathcal{N}(\eta_{i})\big) \big\}, \\
        &= \Holevo(\mathcal{N}).
    \end{split}
\end{equation}
Recall that the ability to diagonalise the Choi-states of qubit unital channels physically means that they are equivalent to a Weyl-covariant channel up to some pre-processing and post-processing unitary channels. Therefore, the above results show that the Holevo capacity of a qubit unital channel can equally be assessed through the Weyl covaraint channel that it is equivalent to - up to some pre-processing and post-processing unitary channels. Hence, only the Holevo capacity of the equivalent Weyl-covariant channels needs to be considered. Within the remainder of the proof all channels are considered to be Weyl-covariant. This ensures the Holevo capacity if a monotone of both $\dynamicRT$ and $\dynamicRT_{n}$.

For irreducibly covariant channels, of which Weyl-covariant channels are a subset, the Holevo capacity of a channel is simplified to be linearly proportional to the minimum output entropy \cite{amosov2000additivity, PhysRevA.69.022302, holevo2002remarks}. The Holevo capacity of a channel $\mathcal{N} \in \mathfrak{U}_{\mathcal{W}}$ is given by 
\begin{equation}
    \Holevo(\mathcal{N}) = \textrm{ln}(d_{S}) - \min_{\rho}s(\mathcal{N}(\rho)), 
\end{equation}
where the minimisation is over all states $\rho$. This can be rewritten as 
\begin{equation}
    \Holevo(\mathcal{N}) = \max_{\rho} D(\mathcal{N}(\rho) \vert \vert \Upsilon),
\end{equation}
where $\Upsilon$ is the maximally mixed state, and $D(\rho  \vert \vert \sigma) := \textrm{Tr}(\rho(\textrm{log}_{2}\rho - \textrm{log}_{2}\sigma))$ is the \textit{quantum relative entropy}. 

Consider now a channel $\mathcal{M}$, where there exists an allowed operations $\Pi \in \freeD$, such that $\mathcal{M} = \Pi(\mathcal{N})$, where $\mathcal{M,N} \in \freeS$. The Holevo capacity of $\mathcal{M}$ is given by 
\begin{equation}
    \begin{split}
        \Holevo(\mathcal{M}) &= \max_{\rho} D(\mathcal{M}(\rho) \vert \vert \Upsilon), \\
        &= \max_{\rho} D \Bigg( \sum_{\kappa} p_{\kappa} \mathcal{E}_{\kappa} \circ \mathcal{N} \circ \mathcal{P}_{\kappa} (\rho) \vert \vert \Upsilon \Bigg), \\
        &\leq \sum_{\kappa} p_{\kappa} \max_{\rho} D( \mathcal{E}_{\kappa} \circ \mathcal{N} \circ \mathcal{P}_{\kappa} (\rho) \vert \vert \Upsilon ).
    \end{split}
\end{equation}
Weyl-covariant channels commute with Weyl-covariant channels. Hence, a channel $\mathcal{Q}_{\kappa} = \mathcal{E}_{\kappa} \circ \mathcal{P}_{\kappa}$ can be defined such that 
\begin{equation}
    \begin{split}
        \Holevo(\mathcal{M}) &\leq \sum_{\kappa} p_{\kappa} \max_{\rho} D( \mathcal{Q}_{\kappa} \circ \mathcal{N} (\rho) \vert \vert \Upsilon ), \\
        &= \sum_{\kappa} p_{\kappa} \max_{\rho} D( \mathcal{Q}_{\kappa} \circ \mathcal{N} (\rho) \vert \vert \mathcal{Q}_{\kappa} (\Upsilon)), \\
        &\leq \sum_{\kappa} p_{\kappa} \max_{\rho} D( \mathcal{N} (\rho) \vert \vert \Upsilon) \\
        &= \max_{\rho} D( \mathcal{N} (\rho) \vert \vert \Upsilon),
    \end{split}
\end{equation}
where $\mathcal{Q}_{\kappa}(\Upsilon)=\Upsilon$ as $\mathcal{Q}_{\kappa}$ is a unital channel, and in the penultimate line the data processing inequality has been used such that
\begin{equation}
    D( \mathcal{Q}_{\kappa} \circ \mathcal{N} (\rho) \vert \vert \mathcal{Q}_{\kappa} (\Upsilon)) \leq D( \mathcal{N} (\rho) \vert \vert \Upsilon)~\forall~\kappa.
\end{equation}
Therefore, 
\begin{equation}
    \Holevo(\mathcal{M}) \leq \Holevo(\mathcal{N}),
\end{equation}
if $\mathcal{M}=\Pi(\mathcal{N})$. Hence, the Holevo capacity is a monotone of $\dynamicRT$ and $\dynamicRT_{n}$. 
\end{proof}
\section{\label{Appendix seven}Supplementary Material G: All monotones of $\staticRT$ are monotones of $\dynamicRT$.}

When considering the Choi-states of channels in $\freeS$, all monotones of $\staticRT$ are also monotones of $\dynamicRT$. We formalise this in the following Lemma. 
\begin{lemma} \label{lemma7.1}
\textit{If there exists a function $M(\cdot)$ such that}
\begin{equation}
    M(\rho) \leq M(\sigma) ~\textrm{if}~\rho=\mathcal{T}(\sigma), ~\mathcal{T}~\in~\freeS,
\end{equation}
then
\begin{equation}
    M(\choi{M}) \leq M(\choi{N})~~\textrm{if}~\mathcal{M}=\Pi(\mathcal{N}), ~\Pi~\in~\freeD, ~\mathcal{M,N}~\in~\freeS.
\end{equation}
\end{lemma}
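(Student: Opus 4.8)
The plan is to reduce the action of the super-channel $\Pi$ on the Choi-state to a \emph{single} noisy operation on the bipartite input--output system, so that the assumed monotonicity of $M$ under $\freeS$ applies directly. First I would invoke the Choi-state representation of the allowed operations, Eq.~\eqref{Choi State allowed operations} (and its explicit form derived in Supplementary Material A), to write
\[
\choi{M} = \sum_{\kappa} p_{\kappa}\,\big(\widetilde{\mathcal{P}}_{\kappa}\otimes\mathcal{E}_{\kappa}\big)\big(\choi{N}\big),
\]
where $\widetilde{\mathcal{P}}_{\kappa}$ is the transpose of the pre-processing noisy operation $\mathcal{P}_{\kappa}$ and each $\widetilde{\mathcal{P}}_{\kappa},\mathcal{E}_{\kappa}\in\freeS$ is a unital channel.

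The key step is to recognise that the whole right-hand side is the image of $\choi{N}$ under one element of $\freeS$. I would define $\Psi:=\sum_{\kappa}p_{\kappa}\,(\widetilde{\mathcal{P}}_{\kappa}\otimes\mathcal{E}_{\kappa})$ and check that $\Psi$ is a single noisy operation on the joint system: each factor $\widetilde{\mathcal{P}}_{\kappa}\otimes\mathcal{E}_{\kappa}$ is completely positive and trace preserving (a tensor product of CPTP maps) and unital (a tensor product of unital maps sends $\mathbb{I}\otimes\mathbb{I}$ to itself), and a convex mixture of unital CPTP maps is again unital and CPTP because $\Psi(\mathbb{I}\otimes\mathbb{I})=\sum_{\kappa}p_{\kappa}\,\mathbb{I}\otimes\mathbb{I}=\mathbb{I}\otimes\mathbb{I}$ while both trace preservation and complete positivity survive the mixture. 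Hence $\choi{M}=\Psi(\choi{N})$ with $\Psi\in\freeS$.

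To finish, I would treat the bipartite Choi-states as states of $\staticRT$ and apply the hypothesis on $M$ to the single noisy operation $\Psi$, giving $M(\choi{M})=M(\Psi(\choi{N}))\le M(\choi{N})$, which is exactly the claim.

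The step I expect to be the main (if modest) obstacle is precisely this collapse of the convex combination into one channel of $\freeS$: it is tempting to instead appeal to convexity of $M$, but $M$ is assumed only to be a monotone, so the argument must route through closure of $\freeS$ under both local tensor products and convex combinations of its elements. The one genuine technical input is that transposition preserves unitality, so that $\widetilde{\mathcal{P}}_{\kappa}$ is a legitimate element of $\freeS$; this is exactly the fact established in Supplementary Material A and already used in Eq.~\eqref{Choi State allowed operations}.
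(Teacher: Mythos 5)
Your proposal is correct and follows essentially the same route as the paper: both reduce the super-channel's action on the Choi-state to a single element of $\freeS$ acting on the bipartite input--output system and then invoke the assumed monotonicity of $M$. The only (cosmetic) difference is that you certify $\Psi\in\freeS$ via closure of unital CPTP maps under tensor products and convex mixtures, whereas the paper verifies the same fact by explicitly recombining the unitary dilations of $\mathcal{P}_{\kappa}$ and $\mathcal{E}_{\kappa}$ into a single noisy operation on the joint system.
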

\begin{proof}
The function $M(\cdot)$ can be defined to be a monotone of $\staticRT$ as it only decreases under allowed operations, $\freeS$. 

In the Choi-state formalism of $\dynamicRT$, allowed operations in $\freeD$ are of the form 
\begin{equation}
    \choi{M} = \sum_{\kappa} p_{\kappa} (\mathcal{P}_{\kappa} \otimes \mathcal{E}_{\kappa}) \big( \choi{N} \big),
\end{equation}
where $\choi{M}, \choi{N} \in \mathcal{H}_{a} \otimes \mathcal{H}_{b}$ and $\mathcal{P}_{\kappa}, \mathcal{E}_{\kappa} \in \freeS ~ \forall ~ \kappa$. Rewriting the channels $\mathcal{P}_{\kappa}$ and $ \mathcal{E}_{\kappa}$ in their noisy operations form gives 
\begin{equation}
    \begin{split}
        \mathcal{P}_{\kappa}(\cdot) & = \mathrm{tr}_{c}\Big(U^{\kappa}_{ac}((\cdot)_{a} \otimes 
    \frac{1}{d_{c}} \mathbb{I}_{c})(U^{\kappa}_{ac})^{\dagger}\Big), \\
        \mathcal{E}_{\kappa}(\cdot) & = \mathrm{tr}_{d}\Big(V^{\kappa}_{bd}((\cdot)_{b} \otimes 
    \frac{1}{d_{d}} \mathbb{I}_{d})(V_{bd}^{\kappa})^{\dagger} \Big),
    \end{split}
\end{equation}
where $\mathcal{H}_{c}$ and $\mathcal{H}_{d}$ are ancillary spaces in which a maximally mixed state is appended in the noisy operation. For a given $\kappa$ the operation $\mathcal{P}_\kappa \otimes \mathcal{E}_{\kappa}$ can be rewritten as 
\begin{equation}
    \begin{split}
        \mathcal{P}_{\kappa} \otimes \mathcal{E}_{\kappa} &= \mathrm{tr}_{c}\Big(U^{\kappa}_{ac}((\cdot)_{a} \otimes 
        \frac{1}{d_{c}} \mathbb{I}_{c})(U^{\kappa}_{ac})^{\dagger}\Big) \otimes  \mathrm{tr}_{d}\Big(V^{\kappa}_{bd}((\cdot)_{b} \otimes 
        \frac{1}{d_{d}} \mathbb{I}_{d})(V^{\kappa}_{bd})^{\dagger} \Big) \\
        &= \mathrm{tr}_{cd} \Big( U^{\kappa}_{ac}((\cdot)_{a} \otimes 
        \frac{1}{d_{c}} \mathbb{I}_{c})(U^{\kappa}_{ac})^{\dagger} \otimes V^{\kappa}_{bd}((\cdot)_{b} \otimes 
        \frac{1}{d_{d}} \mathbb{I}_{d})(V^{\kappa}_{bd})^{\dagger} \Big) \\
        &= \mathrm{tr}_{cd} \Big( (U^{\kappa}_{ac} \otimes V^{\kappa}_{bd}) \big( (\cdot)_{ab} \otimes \frac{1}{d_{c}d_{d}} \mathbb{I}_{cd} \big) (U^{\kappa}_{ac} \otimes V^{\kappa}_{bd})^{\dagger} \Big) \\
        &= \mathrm{tr}_{\alpha} \Big( Q^{\kappa}_{abcd} \big( (\cdot)_{ab} \otimes \frac{1}{d_{\alpha}} \mathbb{I}_{\alpha} \big) (Q^{\kappa}_{abcd})^{\dagger} \Big).
    \end{split}
\end{equation}
where $\alpha=cd$. An allowed operation from $\freeD$ in the Choi-state formalism is a noisy operation on the state $\choi{N}$. This is an allowed operations of $\staticRT$ on the biparitie space of $\mathcal{H}_{a} \otimes \mathcal{H}_{b}$. The static resource content of the Choi-state of a channel in $\freeS$ can therefore only decrease under allowed operations in $\freeD$. This leads to all monotones of $\staticRT$ also being monotones of $\dynamicRT$ when considering the Choi-state formalism. 
\end{proof}
\end{document}